%% file: main.tex
\documentclass[11pt, a4paper, logo, copyright]{gensyn} 

\input{packages}
\input{macros}

\newtheorem{theorem}{Theorem}
\newtheorem{lemma}{Lemma}
\newtheorem{proposition}{Proposition}
\newtheorem{corollary}{Corollary}
\theoremstyle{definition}
\newtheorem{definition}{Definition}
\newtheorem{assumption}{Assumption}

\usepackage{graphicx}
\usepackage{amssymb}
\usepackage{amsmath}
\usepackage{algorithm}
\usepackage{booktabs}
\usepackage{float}
\usepackage{caption}
\usepackage{subcaption}

\title{Who Aggregates Information? Screening, Rent, and the Coexistence of CLOB and AMM Prediction Markets}
\author[1,2]{Chengqi Zang}
\author[1]{Gabriel P.~Andrade}
\author[2]{Tomoyuki Nakajima}

\affil[1]{Gensyn AI}
\affil[2]{The University of Tokyo}

\begin{abstract}
Prediction-market shares differ from traditional financial products in that, with no information or outside utility, classical delta-neutral Central Limit Order Book~(CLOB) market making cannot be financed by payoff-uninformative noise flow.
Transaction-level evidence from a major prediction-market CLOB platform shows makers profiting not from spread but from carrying an under-priced side to settlement --- the empirical signature of behavioral tail demand rather than classical, randomized noise. 
We build this tail demand directly into the model and study an LMSR and a CLOB on the same event.
A pre-shock CLOB quote inside the common-signal band is picked off; competitive quotes therefore screen informed traders out of the book.
CLOB makers earn screening rent by carrying the under-priced side to resolution, while informed flow routes to the LMSR.
The venues coexist: the CLOB supplies the tail-demand rent margin that lets the LMSR recover part of its loss to informed flow, and AMM depth moves the CLOB premium with a sign set by maker-side contestability---widening it where standing quotes can be undercut, compressing it where a committed maker carries the book.
With three or more outcomes, binary-book CLOBs pin switch prices but leave implied beliefs indeterminate, whereas the LMSR prices the outcome simplex coherently and uses collateral more efficiently.

\end{abstract}

\begin{document}
\maketitle           
\section{Introduction}
\label{sec:intro}

Prediction markets aggregate dispersed information into market-implied
probabilities by rewarding correct prediction. The concept traces
to~\cite{hanson2003combinatorial,hanson2007logarithmic}, whose
logarithmic market scoring rule~(LMSR) is a cost-function automated
market maker~(CF-AMM): traders reveal beliefs through their orders, and the maker absorbs a bounded loss at settlement as the price of information
extraction.

Implementation has diverged from theory. 
Seven-day volume across prediction markets reached \$2.8 billion in late 2024, concentrated on venues that do \emph{not} use a CF-AMM: Polymarket, Kalshi, and the sports moneyline books all run on central limit order books~(CLOBs). 
The gap is structural: a CF-AMM maker cannot choose its position-level PnL, whereas a CLOB maker posts only at prices that clear a profit---which pulls real venues toward CLOBs even though the information-aggregation case for CF-AMMs is better developed.

CLOB dominance does not resolve a deeper tension. 
Classical order-book making~\cite{glosten1985bid} needs noise traders: without payoff-uninformative flow to absorb adverse selection, competitive dealers cannot quote informative spreads and stay solvent. A prediction-market share carries little or no such noise, and, unlike a stock or a token, no consumption value, cash flow, utility, or reference price---nothing anchors it before resolution. 
Three questions follow.

First, \emph{what plays the role of noise?} With nothing to anchor the price, informed traders always move first, and only then does uninformed demand arrive; what is that demand, and how does it interact with the bounded $[0,1]$ range and the $0/1$ settlement payoff? Second, \emph{what is the maker's information role?} Can a prediction-market CLOB aggregate information at all---can a committed quote vector even display a probability---and what does a venue give up by relying on one? Third, \emph{can a CF-AMM and a CLOB prediction market coexist productively?} Hybrid designs exist~\cite{othman2010automated} but have not scaled; when does coexistence dominate either pure mechanism, and how are flow, rent, and information divided across the two venues?

\paragraph{Contributions.}
We answer the three questions with one screening mechanism that carries
three implications. 
First, in place of the absent noise trader we model the empirically relevant demand as state-contingent \emph{tail demand}: informed flow moves the price to the informative posterior first, and uninformed traders then overpay or underpay with respective to the informative posterior. 
When CLOB quotes are placed before the signal, any quote inside the signal band is picked off; a viable CLOB therefore posts outside the band, screens informed flow out, and earns a screening rent by carrying the complementary claim to resolution---the tail demand that finances the maker.

Second, when CLOB and LMSR coexist, the usual venue hierarchy is inverted: informed traders route to the LMSR, while the CLOB survives on behavioral rent. 
Coexistence is not redundant. 
CLOB screening disciplines the tail-demand margin the LMSR draws on to recover its losses to informed flow, while the book's committed premium caps how far the LMSR pool's price can walk. 
Which venue keeps that margin depends on how traders route: under firm-quote priority the premium is pinned by free entry alone and is neutral to the pool depth of LMSR, while once traders compete for the pool's cheap early prices, depth strips revenue from the book and spills over to its spread, with the sign of the spillover set by maker-side contestability---contestable books widen their premium as depth grows, committed books compress it---and past a threshold the pool removes the book's clientele altogether.

Third, the separation sharpens with $n\ge3$ outcomes, where the book's problem stops being about rent and becomes about the product.
Traders on multi-outcome events transact switches, long the more preferred and short the less preferred, to express preference, and the price of such a switch trade is a difference of two quotes, so trades price only differences and never the level~(displayed belief)---which is precisely the probability the book purports to show.
That level is therefore indeterminate along a gauge~\footnote{A one-dimensional family of quote-level shifts that preserves all switch prices.}, wrong by at least the shock size wherever a normalization is imposed, and left free even with maker competition, which pins the book's premium and nothing else, and backed by no bet; since a maker can relocate it with no order executed and no payoff changed.
The LMSR instead prices the whole simplex from one potential, reprices its display with every trade it absorbs, and collateralizes the event as a whole at a factor $\Theta(n/\log n)$ saving.
Set beside the classical characterization of cost-function makers, these results pincer: up to the axioms, the cost-function AMM is not merely the better way to sell a probability over $n\ge3$ outcomes but the only committed mechanism that sells one at all.

\S\ref{sec:market_info} sets up the model, \S\ref{sec:binary_model} solves the binary coexistence equilibrium, and \S\ref{sec:3_outcome} derives the multi-outcome information and collateral implications; all proofs are in the appendix.

\subsection{Stylized Facts from a Prediction-Market CLOB}
\label{sec:stylized_facts}

Tail demand is well documented (\S\ref{sec:related}); what it does to the maker who absorbs it is not.
We use two samples.
\citet{akey2026wins} cover the whole Polymarket record---$588$ million trades and \$67 billion of volume, across every market category the platform lists---with profit and loss measured account by account.
We add transaction-level detail for one contract, Polymarket's Bitcoin five-minute up/down claim: $738{,}393$ trades across $275$ markets in about twenty-four hours, \$10.38M of notional, and $15{,}134$ accounts sorted into makers and retail by how they quote (Table~\ref{tab:app_classification}).
Fees rule out the first objection anyone would raise: takers pay $1.8\%$, makers pay nothing and get no rebate, and we measure maker profit after fees, so a maker who makes money here is not being paid to.
Two limits bound the transaction-level samples: neither contains an AMM, so neither tests the venue choice of~\S\ref{sec:binary_model}, and the maker-rent facts below are measured on binary contracts.
The multi-outcome prediction of~\S\ref{sec:3_outcome} we take up separately, by reconstructing multi-outcome books from the~\citet{akey2026wins} record (\S\ref{app:empirical_spread}).

Three facts describe one agent.

\emph{First, makers earn at resolution, not on the spread.}
In our sample $91.4\%$ of the persistent makers' profit arrives at settlement and under $9\%$ from buying and selling before it, with makers ending ahead ($+\$45.3$K) and retail behind ($-\$66.7$K).
The same split holds across all Polymarket markets, and holds by order type: accounts that make money post limit orders whose positions turn out to be on the winning side, while accounts that lose money hit those orders with market orders~\citep{akey2026wins}.
Where a maker earns also tells us where it is quoting.
A quote left sitting inside the range the news will move into gets hit and loses at settlement, so a maker that keeps earning there must be quoting outside that range.
This shows up in the order book directly: as a five-minute market nears a close that is still too close to call, Polymarket makers cut their depth sharply~\citep{dai2026settlement} to avoid a shock that can be large enough to flip the market resolution.
In our model, this CLOB maker behavior is well-explained by the no-pickoff floor of Lemma~\ref{lem:no_pickoff}.

\emph{Second, the edge is one-sided: the maker buys the under-priced side and holds it.}
Almost all the profit comes from buying (${\approx}+\$70$K) rather than selling (${\approx}+\$2.6$K), and it sits above the midpoint: $+\$52.4$K between $0.50$ and $0.80$, $+\$24.8$K between $0.80$ and $1.00$, and $-\$0.5$K from selling above $0.80$ (Table~\ref{tab:app_pnl_decomp}).
What the maker holds resolves in its favor $56.8\%$ of the time.
Could it simply be informed?
Looking at how the biggest winners trade across all Polymarket markets, \citet{akey2026wins} conclude that insider trading is unlikely to explain their returns.
An edge that is directional and repeatable but not based on information is a screening position, which is why the model gives the two sides of a quote their own tail functions $H_F$ and $H_U$ instead of one spread, and why the maker's payoff is written as holding the other side of the claim until it settles.

\emph{Third, the mispricing switches sign with the volatility regime.}
Only our sample can show this, because it needs each market's price path.
Pooled together, the mid-band gaps are small: $+0.043$ on the longshot side and $-0.037$ on the favorite side.
Split by price path they reverse: the longshot band runs $+0.225$ in trending markets and $-0.071$ in flippy ones, whose prices keep changing direction, and the favorite band $-0.205$ and $+0.066$ (Table~\ref{tab:regime_calibration}).\footnote{The two rows are near-complements, so their being equal and opposite is mostly an accounting identity: $p$ and $1-p$ cannot be mispriced the same way. What matters is the sign flip \emph{within} a row, not the symmetry across rows.}
In plainer terms, a side priced near \$0.29 wins about $7\%$ of the time when the market is trending and about $37\%$ when it is flippy.\footnote{See~\S\ref{app:empirical} for how trending and flippy markets are told apart.}
A bias that flips sign cannot be a fixed misreading of probabilities, so the model lets the tail-demand population depend on the state rather than on the claim.

So the agent is a maker that quotes both sides, earns almost everything at settlement, and holds the under-priced side that depends on the regime.
That is why the model makes profit at resolution the maker's objective~(\S\ref{sec:market_info}), screens informed traders off the book before any rent is earned, gives the quote two different tails, and lets tail demand depend on the state.
The first two properties hold across the whole platform; the third we document in one contract type---short-horizon binary markets that resolve quickly---and it is those markets the calibration speaks to.
None of it depends on the pipeline: the facts hold whether accounts are keyed on wallet address or display label, and the sign reversal survives cluster-bootstrap resampling at the market level.
\S\ref{sec:binary_model} shows why a competitive prediction-market CLOB produces this agent.

A book that prices outcomes one leg at a time cannot sell the complete set at par.
\S\ref{sec:3_outcome} makes this concrete: each leg carries its own committed markup~\eqref{eq:switch_prices_and_premium}, so assembling every outcome costs strictly more than one dollar, and the excess widens as legs are added---whereas a single cost-function potential prices the complete basket to par (Proposition~\ref{prop:self_collateral}(i)).
A third slice of the~\citet{akey2026wins} record lets us look for that excess directly.
We reconstruct each multi-outcome event from its component binary Yes/No books ($16{,}036$ validated events, outcome counts $2$--$10$), price every leg from executed trades on \emph{both} sides of each book---the last taker-buy as the ask, the last maker-buy as the bid---and measure the \emph{set premium}, the cost of buying YES on every outcome minus one dollar.
Because a price stands until the next trade replaces it, each book can be priced continuously over its whole life, and our estimates weight every price by how long it stood: the measured premium is what a buyer arriving at a random moment of the book's life would pay.
Pricing both sides is what makes the test sharp, because the premium then splits, as an accounting identity, into exactly the two objects the model separates: a book-center term $\sum_j \mathrm{mid}_j - 1$, nonzero only if displayed beliefs fail to sum to one, and a committed-markup term $\sum_j (\mathrm{ask}_j - \mathrm{mid}_j)$, the empirical counterpart of the per-leg premium of~\eqref{eq:switch_prices_and_premium}.

Both halves of the \S\ref{sec:3_outcome} prediction show up in the data, each in its own term.
The cost of completing the set rises in the outcome count---the coefficient on $\log n$ is $+0.060$ ($p{<}0.001$), estimated with one vote per event---and $96\%$ of that rise sits in the committed-markup term, not in the mid prices.
The markup accumulation behaves exactly as a per-book committed premium should.
It replicates out of sample on the NO side of the same events ($+0.048$, of which ${\sim}105\%$ is markup); roughly half of it is reproduced by placebo ``events'' assembled from $k$ \emph{unrelated} binary markets matched on calendar week, horizon, and liquidity---bundles that carry a per-leg markup but, by construction, no multi-outcome economics; and it lives in small executions: requiring every leg's most recent trade to be at least $100$ shares drives the slope to statistical zero, the footprint of thin far-from-center quotes~(the depth withdrawal of~\citealt{dai2026settlement}).
The book-center term tells the level story of~\S\ref{subsec:3o_beliefs}: averaged over each book's whole life, the centers sum to one dollar at every $n$ (slope $+0.002$, $p{=}0.52$)---no committed bet pins the level, and none is systematically taken---yet compared at the same point in a book's life, or at the same time to resolution, the sum drifts above par by $+0.02$ to $+0.08$ per $\log n$: widest at the $7$--$30$-day horizon, where carrying the correcting set trade ties up the most capital for the longest, and narrowest, though still present, inside the final day.
The lifetime average is flat precisely because small books are only fully priced near the end of their lives, when every book is tight, while large books are priced throughout; the flat average mixes ages, and holding age fixed restores the gradient.

As for the robustness of the estimation, each estimate survived counterfactual checks against the confounds this construction brings.
Price staleness is not the driver with three checks: the slope is unchanged when regressing small and large books with equal price ages; secondly, assuming that mid-price/public belief may drift and makers might reprice, we add price drift correction estimated from $8.4$M consecutive-trade pairs and it moves the slope by at most $0.006$; thirdly, deliberately reprice cleanly measured fast trading books with the more stale transactions like large books produces a slope of the \emph{wrong} sign and under a tenth the size.
Together, those three checks show that staleness is not the driver for increasing slope of larger books.
Selection, outliers, and functional form are bounded: reweighting by how often a book is measurable moves no estimate by more than $0.009$, the largest change from dropping any single event category is $0.0008$, and $\log n$ fits better than linear $n$.
Full construction, every specification, and the tables are in~\S\ref{app:empirical_spread}.

\section{Background and Related Work}
\label{sec:related}

Four strands bear on the question, and none of them places a committed book beside a cost-function maker on a claim that resolves.
From each we take one tool or one precedent, name the antecedent that comes closest to our result, and locate the departure in a single object rather than in a list of differences.

\paragraph{Cost-Function Market Makers.}
The cost-function maker is the one mechanism whose loss and whose displayed probability are both objects of design.
Three things come from its convex-analytic theory.
First, the LMSR cost function and its bounded-loss guarantee~\citep{hanson2003combinatorial,hanson2007logarithmic}, which fixes the AMM benchmark against which the segregated order book's collateral bill is measured in~\S\ref{sec:3_outcome}.
Second, the axiomatic characterization~\citep{chen2007betting,abernethy2013efficient,lambert2015axiomatic}: a maker that quotes a full probability vector, prices path-independently, and admits no arbitrage \emph{is} a convex-potential cost-function maker.
That result is load-bearing since it tells us what the admissible alternatives to an LMSR are before we compare the LMSR to one, and it supplies the pincer in~\S\ref{subsec:3o_balance}.
Third, the equivalence between a constant-function AMM on the probability simplex and a cost-function prediction market~\citep{frongillo2023axiomatic}, whose reductions carry trade rules and market properties in both directions.
The equivalence is what lets us treat ``AMM'' and ``scoring-rule market'' as one object and draw comparison like~\citep{aoyagi2025coexisting}, and its slogan---a market is good at facilitating trade exactly when it is good at revealing beliefs---is what makes our venue comparison a comparison of information mechanisms and not merely of exchange technologies.

Liquidity-sensitive and profit-charging variants~\citep{othman2013practical,othman2010automated,bhaskara2023general} make the maker's loss budget parameter(s) of the pricing rule rather than the outcome of a participation decision; we instead let the AMM's liquidity be set by free entry of providers who weigh the adverse-selection payment against the rent available on the competing order book~(\S\ref{subsec:lmsr_lp_problem}), so depth is an equilibrium object that the book can move.

A second thread within this literature asks what the resulting price means, and it is the thread~\S\ref{subsec:3o_beliefs} joins.
Whether a market price is the traders' mean belief at all is contested~\citep{manski2006interpreting,wolfers2006interpreting}; whether dynamic trade aggregates dispersed information turns on the security structure, since separable securities drive prices to the full-information value while non-separable ones need not~\citep{ostrovsky2012information}; and strategic traders facing a scoring-rule maker may withhold or bluff rather than reveal on arrival~\citep{dimitrov2008non,chen2010gaming}.
Combinatorial market makers attack the same coherence problem from the computational side, they aim to price logically related claims from one potential precisely so that separately quoted books cannot disagree~\citep{fortnow2005betting,chen2008complexity,dudik2013combinatorial}.
Our information defect is a further failure mode, and the least escapable of them: when makers price a $n(n\ge3)$-outcome market from a network of binary books based on their belief of trader's preference between any pair of outcomes, they are pricing the \emph{difference} between quote prices and thus, the displayed price~(probability) is unidentified even with truthful traders, immediate revelation, and market maker computation.

\paragraph{Automated Market Makers in DeFi.}
The sharpest available account of what a passive maker loses to informed flow is measured against a reference price we do not have.
Loss-versus-rebalancing decomposes the provider's shortfall into a term that grows with the variance of an exogenous reference price~\citep{milionis2022automated,milionis2024automated}, which is the counterpart of the adverse-selection payment that~\eqref{eq:lp_foc} prices.
Around it, one literature treats that exposure as a predictable cost of provision~\citep{cartea2023predictable} and asks what fee schedule compensates it~\citep{evans2021optimal,hasbrouck2026need}, while another compares venue quality and pricing rules~\citep{barbon2023quality,park2023conceptual,lehar2025decentralized,capponi2021adoption,malinova2024learning} and studies the provider's entry problem directly~\citep{aoyagi2020liquidity,angeris2020improved,adams2021uniswap}.
One channel transfers intact: \citet{hasbrouck2026need} show that deeper inventory lowers price impact and can therefore raise volume, and the same depth-to-volume force appears in our crowd-out result, except that it operates across venues, pulling the pool's clientele off the book rather than a single venue's volume up~(\S\ref{subsec:capacity_competition}).

Two features of the DeFi setting drive our departures, and both are absent by construction here.
First, tokens carry intrinsic utility~\citep{benedetti2023utility} and trade against a continuous reference price~\citep{milionis2022automated,milionis2024automated}, which regenerates the payoff-uninformative flow a prediction-market share sometimes cannot sustain; the fee stream that finances LVR in these models has no counterpart on a outcome share nobody holds for any reason but its resolution, so the maker/provider must be financed by something else, and identifying that something is the first question of~\S\ref{sec:binary_model}.
Second, there is no terminal settlement, so provider risk is a continuous exposure rather than the bounded, all-or-nothing risk a prediction-market share creates: a token inventory can be unwound at some price, whereas a share carried into resolution pays $0$ or $1$ whatever its holder then believes.
Both differences trace to one root---a claim with no direct reference value~(price) and no life after resolution---and it is that root, not the AMM technology, that our model changes.

\paragraph{Venue Choice.}
Two venues coexist only when each holds a clientele the other cannot profitably serve.
Classical fragmentation theory supplies both halves of that condition: \citep{pagano1989trading} shows higher liquidity kills differentiation when entry cost is homogeneous; and differentiation survives when venues react differently to the tradeoffs between price discovery and execution certainty~\citep{parlour2003liquidity,hendershott2000crossing,foucault2008competition}.

Two precedents place a maker beside a book inside a prediction market.
\citet{othman2010automated} deploys the hybrid and reports what it costs to run; \citet{heidari2018integrating} comes closest on mechanism, giving the first algorithm that combines a cost-function maker with limit orders under continuous trade through the notion of an $\epsilon$-fair trading path, and settling when such a path can terminate efficiently.
That is a computational guarantee about how orders execute, and it carries no trader types, so it does not deliver the equilibrium \emph{division} of flow between the venues, the conditions under which they coexist rather than one crowding the other out, or the behavior of a strategic maker unbound by any scoring rule---the three gaps~\S\ref{sec:binary_model} fills.

Our direct microstructure precedent is \citet{aoyagi2025coexisting}, which models endogenous venue choice between a CLOB and a CFMM of crypto asset trading in the Glosten--Milgrom--Kyle tradition~\citep{glosten1985bid,kyle1985continuous} and finds that AMM depth spills over positively onto book liquidity, because informed and noise traders react to depth at different rates.
We adopt their venue-choice setup and change three things: three periods, because claim will resolve in prediction market; a network of binary books, so an event can carry more than two outcomes compare to uniswap-style pools where only two assets can be in one pool; and tail demand traders with subjective valuations in place of noise.
The third change flips the venue's role of information aggregation, where in our model, AMM becomes the sole venue for price discovery and CLOB serves as the venue for makers to extract tail-demand.
Tail-demand also create spillover effect that can be either positive or negative depending on the maker's quoting convention, \citep{aoyagi2025coexisting} positive spillover is akin to the contestable-book case of ours.

\paragraph{Behavioral Flow.}
Tail demand is a documented population and three bodies of evidence establish it at different scales.
At the market level, the favorite--longshot bias is among the oldest documented pricing anomalies~\citep{griffith1949odds,ali1977probability,thaler1988anomalies}, where traders over or under estimation reads as probability weighting or misperception~\citep{snowberg2010explaining,ottaviani2008favorite,ottaviani2010noise,andrikogiannopoulou2021behavioral}, and surfaces in prediction markets as overpriced longshots and underpriced favorites~\citep{page2013prediction,lee2020anomalies}.
At the level of the underlying preference, the same lottery demand and extrapolative signature appears in retail equities~\citep{barberis2008stocks,kumar2009gambles,bali2011maxing,greenwood2014expectations,barberis2018extrapolation}, with the probability-weighting function in prospect theory as the common primitive~\citep{tversky1992advances,prelec1998probability}, i.e.~people attribute excessive weight to events with low probability and insufficient weight to events with high probability; this is why we treat tail demand as structural to retail-facing markets.
At transaction scale, recent work documents directional, belief-driven retail flow and a maker--taker profit asymmetry on Kalshi and Polymarket~\citep{burgi2025makers,le2026decomposing,tsang2026anatomy}, with losing traders concentrating at the tails~\citep{akey2026wins}. 
The sharpest evidence on how that asymmetry is monetized comes from \citet{bartlett2026adverse}. 
Across 41.6 million Kalshi trades, they show that makers earn roughly twice as much per contract in single-name markets~\footnote{Markets referencing a specific company or individual.} despite facing greater informed price impact there. 
Crucially, the maker's profit is not delta-neutral spread capture: makers accumulate the unpopular side and carry it to resolution, so the behavioral overbetting itself cross-subsidizes their adverse-selection losses---the empirical shadow of the rent structure we derive in~\S\ref{sec:binary_model}.

\section{Market and Information Structure}
\label{sec:market_info}

An LMSR and a CLOB operate simultaneously on the same binary event, which settles to YES or NO at $t=2$. 
Each outcome is an Arrow--Debreu claim: $X_Y$ pays $\$1$ if the event resolves YES and $\$0$ otherwise, $X_N$ the reverse, so $X_Y+X_N=1$. 
The LMSR charges a linear fee $\tau\in(0,1)$ on taker volume,
rebated to its liquidity providers; the CLOB charges no fee, but its makers incur a fixed posting cost $\kappa>0$, which can be understand as a capital lockup cost. 
LMSR prices are set mechanically by its cost function and net inventory, while CLOB makers choose their own quotes; we develop both pricing maps in~\S\ref{sec:binary_model}. 

\subsection{Timing}
The model has three periods.

\smallskip
\noindent\emph{$t=0$: liquidity provision and quoting.}
Liquidity providers commit capital and cannot revise it afterward.
Within the period, the LMSR LP moves first, funding the collateral that backs settlement and fixing the liquidity parameter $L>0$; CLOB makers then post competitive bid and ask quotes, observing $L$, and cannot cancel afterwards.

\smallskip
\noindent\emph{$t=1$: information arrives and trade occurs, in two stages.} In Stage~1, a payoff-relevant common shock is realized, the shock $\sigma_c$ is given by $\sigma_c\in\{+\sigma,-\sigma\}$, each sign equally likely; a unit mass of \emph{informed} $c$-traders observe it, trade first, and move prices to the post-shock posterior, lifting any stale quote. 
In Stage~2, \emph{tail-demand} $p$-traders arrive. 
They neither observe the shock nor know how to price it, so they take the prevailing post-shock price as their reference and trade only on a behavioral willingness-to-pay wedge that does \emph{not} move fundamentals.
Each trader routes to a single venue.

\smallskip
\noindent\emph{$t=2$: settlement.}
Each claim pays its realized $0/1$ value and maker PnL is realized. Unlike a DeFi AMM, settlement is terminal and need not coincide with the $t=1$ price: a claim trading at $\mu$ at $t=1$ still pays $0$ or $1$.

\subsection{Participants and Information}
\label{subsec:market_info}

\paragraph{Common-information traders.}
A unit mass of risk-neutral $c$-traders observes a common shock and each trades a single unit. 
Conditional on the shock, the public posterior values of the two
claims are
$$
\mu_Y(\sigma_c)=p_0+\sigma_c,\qquad \mu_N(\sigma_c)=1-\mu_Y(\sigma_c),
$$
where $p_0 = \tfrac12$ is the $t=0$ YES price assumed throughout the paper. 
Writing the favored and unfavored values as $\mu_F=\max\{\mu_Y,\mu_N\}$ and $\mu_U=1-\mu_F$, each $c$-trader buys the favored claim $F$. 

\paragraph{Tail-demand traders.}
After the common shock, exactly one continuum of $p$-traders of mass $z\in(0,1)$ is active. Conditional on the shock, the active direction is the favored direction $D=F$ with probability $\tfrac12$ and the unfavored direction $D=U$ with probability $\tfrac12$. 
The direction draw occurs after Stage~1 and before tail-demand traders choose a venue. 
Thus $F$ and $U$ index mutually exclusive continuation states condition on realization. 
An active $p$-trader $i$ draws a valuation $m_i\in[0,1]$ of the YES share from the corresponding shock-conditional tail of
Assumption~\ref{ass:tail_demand_distribution} and trades whenever $m_i$ makes the active side of the maker's quote executable. 
The rent this generates is behavioral overpayment concentrated around the post-shock posterior, not information---her only private input is $m_i$.

\paragraph{Liquidity Providers and Market Makers.}
A continuum of competitive LP/makers supply liquidity on both venues. 
At $t=0$ their beliefs are calibrated to the shock law and, via Assumption~\ref{ass:tail_demand_distribution}, the shock-conditional valuation distribution facing each committed direction, and price to the resulting settlement belief. 
Without loss of generality, a CLOB maker posts a bid-ask quote pair on claim $X_Y$ , at cost $\kappa$; an LMSR LP funds the collateral that fixes $L$.

In the following assumption, we will introduce the shock-conditional tail-demand flow distribution $m_i$.

\begin{assumption}[Tail-demand Distribution]
\label{ass:tail_demand_distribution}
The LP's and makers' common belief over $p$-trader valuations satisfies:
(i) \emph{(shock-conditional unimodality)} conditional on $\sigma_c$, valuations $m_i$ are i.i.d.\ with a continuously differentiable, log-concave density $f(\cdot\,|\,\sigma_c)$ on $[0,1]$, peaked at the realized posterior $\tfrac12+\sigma_c$ and strictly decreasing away from the peak toward both endpoints;
(ii) \emph{(mirror symmetry)} the two shock-conditional densities are reflections of each other across $\tfrac12$;
(iii) \emph{(tail ordering)} writing
$$
H_F(y)\;\equiv\;\Pr\!\big(m_i\ \ge\ \tfrac12+\sigma_c+y\ \big|\ \sigma_c\big),
\qquad
H_U(y)\;\equiv\;\Pr\!\big(m_i\ \le\ \tfrac12+\sigma_c-y\ \big|\ \sigma_c\big),
\qquad y\ge0,
$$
for the near- and far-tail masses at displacement $y$ from the peak---each shock-invariant as a function of $y$ by clause (ii), with support ceilings $\bar y_F\equiv\sup\{y:H_F(y)>0\}$ and $\bar y_U\equiv\sup\{y:H_U(y)>0\}$---there exists $\bar s>0$ such that
$$
H_U(2\sigma+s)\;\le\;H_F(s)
\qquad\text{for all } s\in[0,\bar s];
$$
(iv) \emph{(one active direction)} conditional on the common shock, exactly one direction $D\in\{F,U\}$ is active, with $D=F$ and $D=U$ each occurring with probability $\tfrac12$.
The active direction carries trader mass $z$: at markup $y$ over its realized posterior, the mass willing to buy is $zH_F(y)$ in the favored-direction state and $zH_U(y)$ in the unfavored-direction state;
and (v) \emph{(single-peaked pooled revenue)} the pooled screening revenue
$$
s\;\longmapsto\;s\,H_F(s)+(2\sigma+s)\,H_U(2\sigma+s)
$$
is strictly quasi-concave on the full premium range $[0,\bar y_F)$, where the far-tail term is read as zero whenever $2\sigma+s\ge\bar y_U$.
\end{assumption}

The near tail runs into the close outcome wall ($\bar y_F\le\tfrac12-\sigma$) while the far tail keeps its full room ($\bar y_U\le\tfrac12+\sigma$); beyond clauses (iii) and (v), no relation between the two tails' shapes is imposed.
Intuitively, (iii) says that at any committed quote in the coverage band, the extreme demand above the ask outweighs the extreme demand below the bid---the far side pays a markup inflated by the full shock $2\sigma$, and fewer traders clear it.
It is also directly estimable, being a statement about executed volumes on the two sides of a symmetric book.
Clause (iv) fixes what the tail functions count: demand for a claim at markup $y$ over that claim's realized peak has mass $zH_F(y)$ or $zH_U(y)$ according to the side, whichever the shock.
Clause (v) is a joint restriction on the two tails and the shock size that clauses (i)--(iv) do not imply: log-concavity makes each tail's revenue single-peaked on its own, but a sharply concentrated near tail over a long thin far tail can make the pooled revenue twin-peaked.
It holds for Gaussian and skew-normal tails across standard parameterizations.
We caution that neither (iii) nor (v) is a revenue ordering: near the zero premium the far side's revenue strictly dominates the near side's, with certainty---near revenue vanishes with $s$ while far revenue tends to $2\sigma H_U(2\sigma)>0$---and that floor rent is exactly what the posting cost must clear (the lower bound of Assumption~\ref{ass:low_fee}(ii)) for price competition to pin the premium (\S\ref{subsec:platform_choice_and_eq}).

The displacement mechanics of a committed quote---rent $s$ against mass $zH_F(s)$ on the favored side, rent $2\sigma+s$ against mass $zH_U(2\sigma+s)$ on the unfavored---are developed with the CLOB problem in~\S\ref{sec:clob_eq}; here we record the one distributional consequence the equilibrium selection uses.

The density peaks at the posterior---most tail-demand traders overpay only slightly, and the mass thins toward either extreme---while each \emph{revenue} peak sits strictly inside its band, where the thinning mass and the rising rent balance.
The pooled peak $s^\ddagger$ splits the screening band into a thick, low-premium segment ($s<s^\ddagger$) and a thin, high-premium segment ($s>s^\ddagger$): a contestable book is competed down into the thick segment, while a committed book prices at the peak itself, and this maker-side distinction sets the sign of the liquidity spillover, pinned down explicitly in Proposition~\ref{prop:binary_spillover}.

\section{Binary Market Model}
\label{sec:binary_model}

Two venues now compete for the same two flows, and everything downstream turns on a single sorting margin: which $p$-traders transact at the CLOB's firm price and which ride the LMSR's inventory-driven path.
Screening, routing, and free entry are determined by the primitives above, and we resolve the sorting margin under two routing conventions.
Under \emph{firm-quote priority} (Convention~A, \S\ref{subsec:platform_choice_and_eq}--\S\ref{subsec:lmsr_lp_problem}), traders who can transact at the CLOB's committed price do so, and the equilibrium is recursive: the CLOB premium is pinned by free entry alone, and the LP's depth problem separates.
Under \emph{price competition} (Convention~B, \S\ref{subsec:capacity_competition}), traders compete for the LMSR's cheap early prices, the two venues' volumes become jointly determined, and the depth-to-premium spillover of Proposition~\ref{prop:binary_spillover} emerges.
Let $\alpha_D\in[0,1]$ denote the equilibrium fraction of served direction-$D$ tail demand absorbed by the LMSR, and $\beta_F\in[0,1]$ the fraction of $c$-traders it absorbs; we derive both objectes under Convention~A.

\subsection{Central Limit Order Book}
\label{sec:clob_eq}

We write the CLOB problem on the active demand leg $D$, the claim $p$-traders over(under)value once the common information is realized.
The complementary carry claim is $C=\bar D$: by binary complementarity, a maker implements the same exposure either by selling $D$ at an ask or by buying $C$ at a bid.
Write the equivalent active-leg ask and implied carry bid as
$$
A^D=\mu_D+s_D,\qquad B^C=\mu_C-s_D,
$$
where $s_D\ge 0$ is the CLOB premium on leg $D$.

The premium cannot be arbitrary. 
Makers commit a single two-sided quote at $t=0$, before $\sigma_c$ is realized, while the unit $c$-mass observes the shock and lifts any stale side at $t=1$ (Stage~1).
The following lemma gives the rule for where rational makers quote pre shock.

\begin{lemma}[No-pickoff Screening Floor]
\label{lem:no_pickoff}
A pre-shock quote is pickoff-proof if no realization can cause the maker to gain pure loss.
A two-sided YES quote is pickoff-proof if and only if its YES ask is at least $\tfrac12+\sigma$ and its YES bid is at most $\tfrac12-\sigma$.
Equivalently, when the quote is written on the realized active leg as $A^D=\mu_D+s_D$ and $B^C=\mu_C-s_D$, its executable premium must satisfy
$$
s_D\ge \underline s_D\equiv\mu_F-\mu_D.
$$
Hence $\underline s_F=0$ in the favored-direction state and $\underline s_U=2\sigma$ in the unfavored-direction state.
The pickoff-proof premia that also serve a strictly positive mass of $p$-traders form the screening band
$$
s_D\in[\underline s_D,\bar s_D),
\qquad
\bar s_F\equiv\bar y_F,
\qquad
\bar s_U\equiv\bar y_U.
$$
If $\underline s_D\le\bar s_D$, no pickoff-proof quote can serve positive tail demand in direction $D$.
\end{lemma}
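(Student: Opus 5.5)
The plan is to prove the two-sided characterization first, by checking each quote side against both shock realizations. Then translate it into the active-leg coordinates $(A^D,B^C,s_D)$, and finally read off the band from the tail supports.

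\emph{Step 1: one quote side at a time.} Fix a pre-shock YES ask $a$ and bid $b$. In Stage~1 the unit $c$-mass knows $\mu_Y(\sigma_c)=\tfrac12+\sigma_c$ and is risk-neutral, so it lifts the ask whenever $a<\mu_Y(\sigma_c)$. The maker's settlement expectation on that fill is then $a-\mu_Y(\sigma_c)<0$, a pure loss with no offsetting rent, because the $c$-trader's fill is informed. Symmetrically, the $c$-mass hits the bid whenever $b>\mu_Y(\sigma_c)$, equivalently buying NO at $1-b<\mu_N$.

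Pickoff-proofness means no realization in $\{+\sigma,-\sigma\}$ triggers such a fill. For the ask this requires $a\ge\max_{\sigma_c}\mu_Y=\tfrac12+\sigma$, and for the bid $b\le\min_{\sigma_c}\mu_Y=\tfrac12-\sigma$.

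Conversely, if both inequalities hold, then in every realization the ask is weakly above and the bid weakly below the posterior. Any fill is then at zero or positive expected margin, so no pure loss arises. I will treat the boundary $a=\mu_Y$ as non-loss, consistent with the "$\ge$" in the statement. This gives the "if and only if".

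\emph{Step 2: active-leg translation.} By complementarity, selling YES at $a$ is buying NO at $1-a$, so a two-sided YES quote is the same object as an ask on leg $D$ and a bid on $C=\bar D$. In the realized state the binding side is the one facing the favored claim, since the favored posterior is $\mu_F=\tfrac12+\sigma$ in YES units under either sign. Step~1 then says the ask on any leg must be at least $\mu_F$, i.e. $A^D=\mu_D+s_D\ge\mu_F$, which is $s_D\ge\mu_F-\mu_D=\underline s_D$. The bid condition $B^C=\mu_C-s_D\le\mu_C-(\mu_F-\mu_D)$ is the same inequality, using $\mu_C=1-\mu_D$.

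Substituting $D=F$ gives $\underline s_F=0$. Substituting $D=U$ gives $\underline s_U=\mu_F-\mu_U=2\sigma$. I expect this step to be the main obstacle. The care needed is in showing that the mirror symmetry (Assumption~\ref{ass:tail_demand_distribution}(ii)) makes the YES-coordinate bounds $\tfrac12\pm\sigma$ map to the same $\mu_F$ regardless of the sign of $\sigma_c$, so the floor depends only on $D$.

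\emph{Step 3: the band.} By Assumption~\ref{ass:tail_demand_distribution}(iv), the served $p$-mass at premium $s_D$ is $zH_D(s_D)$. This is strictly positive iff $s_D<\bar y_D$, by the definition of the support ceiling (right-continuity and monotonicity of $H_D$). Intersecting with the floor gives $[\underline s_D,\bar s_D)$.

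For the last sentence, I will note that it should read as an emptiness condition. When $\bar s_D\le\underline s_D$ the interval is empty, so no pickoff-proof quote serves positive mass. I will prove it in that form and flag that the inequality direction in the statement must be read as $\bar s_D\le\underline s_D$.
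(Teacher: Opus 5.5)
Your proposal is correct and follows the paper's proof essentially step for step. Both arguments derive the two-sided bounds $a\ge\tfrac12+\sigma$, $b\le\tfrac12-\sigma$ from the $c$-traders' surplus under both shock signs, translate them via $A^D=1-B^C$ into $s_D\ge\mu_F-\mu_D$, and intersect with the support ceilings; the paper's proof also reads the final sentence as the emptiness condition (lower endpoint weakly above the ceiling), which matches your correction. One small refinement: in Step~2 the floor's independence from the shock sign needs only $p_0=\tfrac12$ and the symmetric shock support, since the ask on either claim must clear $\tfrac12+\sigma$, and not Assumption~\ref{ass:tail_demand_distribution}(ii); that clause is instead what makes $\bar y_F,\bar y_U$ shock-invariant in your Step~3.
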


Lemma~\ref{lem:no_pickoff} lets us carry one premium parameter for one committed book.
A two-sided quote at half-spread $\sigma+s$, $s\ge 0$, serves its favored side at displacement $s$ over the realized peak and its unfavored side at displacement $2\sigma+s$: the same absolute price, measured from two mirror-image peaks.
A $p$-trader with valuation $m_i$ on $D$ obtains $\pi_i^{CLOB}(D)=\max\{m_i-A^D,\,0\}$, so by Assumption~\ref{ass:tail_demand_distribution}(iv) the served CLOB mass is $zH_F(s)$ on the favored side and $zH_U(2\sigma+s)$ on the unfavored side, with per-unit rents $s$ and $2\sigma+s$ respectively.
Figure~\ref{fig:no_pickoff} uses ask-side quoting and positive shock as an example.

\begin{figure}[H]
    \centering
    \includegraphics[width=0.75\linewidth]{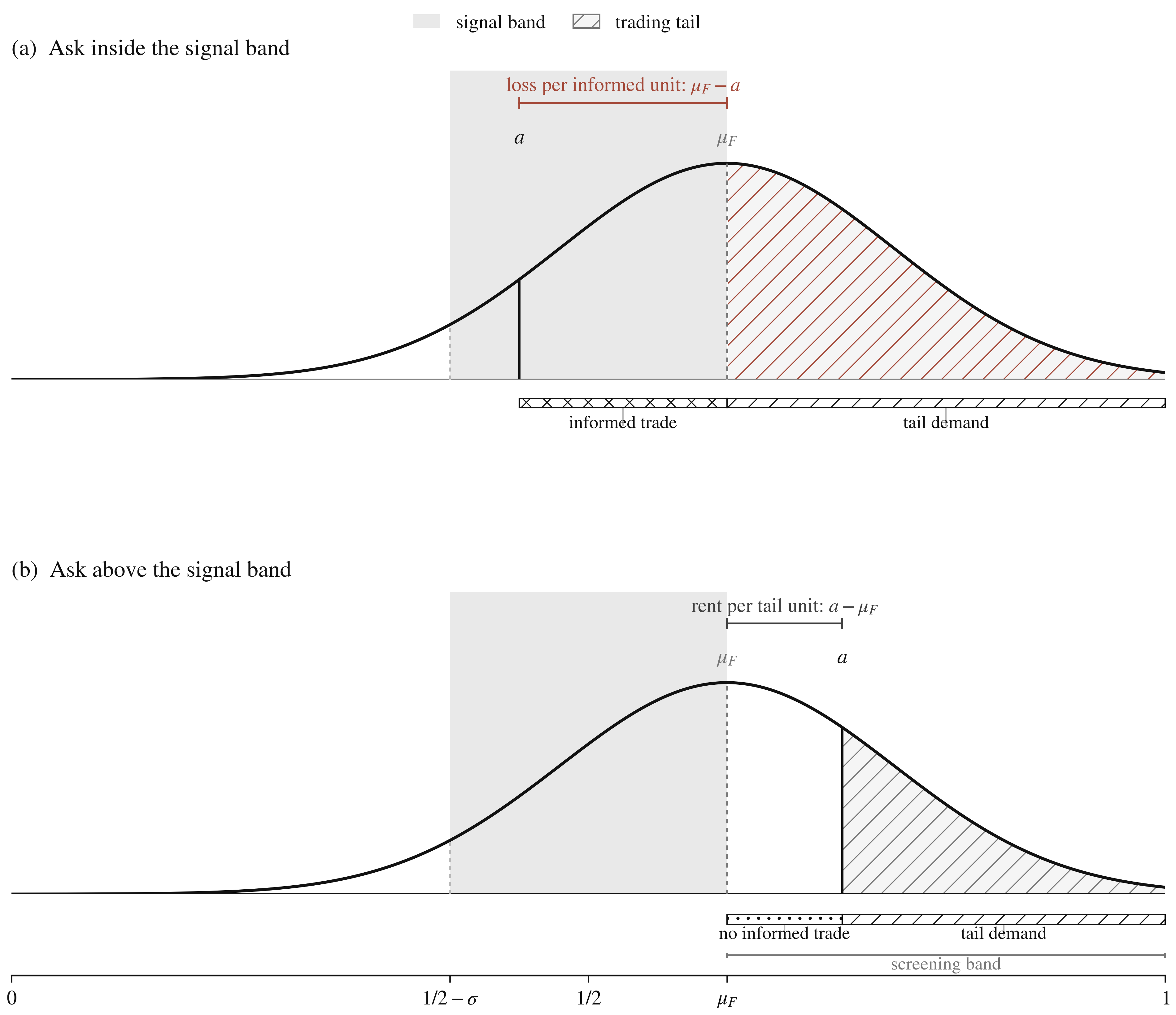}
    \caption{PnL of Quoting on Ask Leg: Inside v.s.~Outside Signal Band}
    \label{fig:no_pickoff}
\end{figure}

If the ask is lower than $\mu_F = \frac{1}{2}+\sigma$, and the active leg direction $D = F$, then after a non-negative mass of informed $c$-traders take the ask in Stage~1, all tail-demand $p$-trader will also take the CLOB ask until the quote is exhausted, the pink area under the curve are the mass of $p$-trader mass captured with the quote and this causes pure loss for the CLOB maker.
When quoting outside the signal band, informed $c$-traders are screened out of the CLOB, and the green area under the curve is the captured mass of $p$-traders which generates rent for the CLOB maker.

\subsection{LMSR}
\label{subsec:LMSR}

The LMSR is a cost-function maker for the complementary claims $Y$ and $N$, with liquidity parameter $L>0$ and net YES inventory $q$ measured from the initial YES price $p_0$.
For a claim $X\in\{Y,N\}$ let $p_X(q;p_0,L)$ be the raw marginal price and $P_X\equiv p_X/(1-\tau)$ the fee-adjusted purchase price; we write claim signs $\iota_Y=+1$, $\iota_N=-1$.
The following lemma derives the binary LMSR cost.

\begin{lemma}[Binary LMSR costs]
\label{lem:marginal_and_average_cost}
The raw marginal prices are
$$
p_Y(q;p_0,L)=\frac{p_0e^{q/L}}{p_0e^{q/L}+1-p_0},
\qquad p_N=1-p_Y,
$$
and the fee-adjusted marginal purchase price is $P_X=p_X/(1-\tau)$.
The finite-trade transaction cost $T_X(\cdot\,;q,p_0,L)$ and its fee-adjusted form $\widetilde T_X=T_X/(1-\tau)$ are derived in~\S\ref{app:proof_for_marginal_and_average_cost}.
\end{lemma}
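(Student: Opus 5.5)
The statement is a standard LMSR computation, so the plan is to derive everything from the cost function, fixing the parametrization so that the price at $q=0$ equals $p_0$. I will take the cost potential in the form
$$
C(q_Y,q_N)=L\log\!\big(p_0e^{q_Y/L}+(1-p_0)e^{q_N/L}\big),
$$
which is the usual LMSR $L\log\sum_X e^{q_X/L}$ with initial outstanding shares shifted by $L\log p_0$ and $L\log(1-p_0)$. This choice leaves the Hessian unchanged and only sets the starting point on the simplex.

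\textbf{Step 1: marginal prices.} Differentiating gives
$$
p_Y=\frac{p_0e^{q_Y/L}}{p_0e^{q_Y/L}+(1-p_0)e^{q_N/L}}.
$$
The key observation is translation invariance: since $C(q_Y+c,q_N+c)=C(q_Y,q_N)+c$, every quantity depends only on the net inventory $q=q_Y-q_N$. Dividing numerator and denominator by $e^{q_N/L}$ yields the displayed formula for $p_Y$. Because $\partial_{q_Y}C+\partial_{q_N}C=1$, we get $p_N=1-p_Y$. As a sanity check, $p_Y(0;p_0,L)=p_0$.

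\textbf{Step 2: fee-adjusted price.} The fee is charged so that the provider receives the raw cost net of the rate $\tau$. The taker therefore pays gross $p_X/(1-\tau)$ per marginal unit, which gives $P_X=p_X/(1-\tau)$.

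\textbf{Step 3: finite-trade cost.} A purchase of $\Delta$ units of claim $X$ moves net inventory from $q$ to $q+\iota_X\Delta$. Its cost is
$$
T_X(\Delta;q,p_0,L)=\int_0^{\Delta}p_X(q+\iota_X u)\,du.
$$
To evaluate it, I note that $L\frac{d}{du}\log\big(p_0e^{(q+\iota_X u)/L}+1-p_0\big)$ equals $\iota_X\,p_Y(q+\iota_X u)$.

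\emph{Case $X=Y$.} The integral evaluates to
$$
T_Y=L\log\frac{p_0e^{(q+\Delta)/L}+1-p_0}{p_0e^{q/L}+1-p_0}.
$$

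\emph{Case $X=N$.} Using $p_N=1-p_Y$,
$$
T_N=\Delta-L\log\frac{p_0e^{(q-\Delta)/L}+1-p_0}{p_0e^{q/L}+1-p_0}\cdot(-1),
$$
which, after simplifying, is the symmetric expression with the roles of $p_0$ and $1-p_0$ swapped.

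The fee-adjusted transaction cost is then $\widetilde T_X=T_X/(1-\tau)$, obtained by integrating $P_X$ term by term.

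\textbf{Main obstacle.} There is no substantive difficulty here. The only care needed is in two places: keeping sign conventions consistent for $\iota_N=-1$, and justifying that the $p_0$-weighted potential is the intended LMSR. I will handle the latter by showing it is the standard LMSR initialized at inventory $\big(L\log p_0,\,L\log(1-p_0)\big)$.
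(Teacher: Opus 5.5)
Your proposal is correct and follows essentially the paper's route: the $p_0$-weighted potential is the paper's standard LMSR with initial inventory chosen so the price starts at $p_0$, and your integrals reproduce the paper's $T_Y$ and $T_N$. The latter simplifies to $T_N=L\log\frac{p_0e^{q/L}+(1-p_0)e^{\Delta/L}}{p_0e^{q/L}+1-p_0}$; your ``roles swapped'' description of it is only accurate if you also replace $q$ by $-q$, i.e.\ measure inventory in net NO units.
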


The logistic form makes all capacities linear in $L$ once prices are measured in log-odds.
Define the logit distance from the $t=0$ price,
$$
\lambda(x)\;\equiv\;\operatorname{logit}\big((1-\tau)x\big)-\operatorname{logit}(p_0),
\qquad \operatorname{logit}(u)=\ln\tfrac{u}{1-u},
$$
so that moving the fee-adjusted price of a claim from level $a$ to level $b$ absorbs inventory $L[\lambda(b)-\lambda(a)]$, and write $\lambda_F\equiv\lambda(\mu_F)$ consistently with~\S\ref{subsec:platform_choice_and_eq}.
Two Stage-1 landmarks then follow from Lemma~\ref{lem:marginal_and_average_cost} and the fee.
First, on the side the shock favors, $c$-flow drives the fee-adjusted price exactly to the posterior $\mu_F$.
Second, on the side the shock disfavors, complementarity leaves the fee-adjusted price at $\mu_U+w_\tau$, where
$$
w_\tau\;\equiv\;\frac{\tau}{1-\tau}
$$
is the fee wedge: the taker fee is levied on purchases of either claim, so the buyer of the unfavored claim faces the posterior plus the wedge even though the raw price sits at $\mu_U$.

\subsection{Platform Choice and Equilibrium}
\label{subsec:platform_choice_and_eq}

Three conditions pin the continuation equilibrium for a fixed direction $D$: capacity for $c$-flow, market clearing for $p$-flow, and free entry for makers.
We state each with its economic content and defer the algebra to Appendix~\ref{app:proof_for_active_leg_existence}.
Throughout this subsection and the next we impose Convention~A.

\paragraph{Routing Convention A (firm-quote priority).}
Each trader transacts at the best venue available to her, with one tie-break: a $p$-trader who can execute at the CLOB's committed price does so.
Conditional on the realized active direction $D\in\{F,U\}$, traders priced out of the CLOB enter the LMSR while the corresponding LMSR path price remains below their valuation.
Within this band the pool serves traders in descending valuation order---those with the most surplus at stake win any contest for queue position---so each trader faces the marginal price left by all higher-valuation traders.
The favored and unfavored clearing paths are therefore mutually exclusive continuation states and are never realized together.
The convention reflects the immediacy value of a firm quote over a path whose terminal price depends on the queue.
Convention~B in~\S\ref{subsec:capacity_competition} removes the tie-break and lets venue compete for traders.

\paragraph{Capacity.}
\emph{C-traders} never transact on the CLOB, since any persistent quote leaves them nonpositive surplus; they enter the LMSR with mass $\beta_F$ until its fee-adjusted favored-claim price reaches their reservation value $\mu_F$,
\begin{equation}
\label{eq:c_capacity}
P_F(\iota_F\beta_F^*;p_0,L,\tau)\le \mu_F,
\qquad\text{with equality if } \beta_F^*<1 .
\end{equation}
The logistic price form in Lemma~\ref{lem:marginal_and_average_cost} makes the binding case $\beta_F^*(L)=L\lambda_F$, with $\lambda_F=\lambda(\mu_F)>0$ by the first inequality of Assumption~\ref{ass:low_fee}; we study the coexistence branch $L<\bar L\equiv 1/\lambda_F$ on which capacity binds before the unit $c$-mass is exhausted~($\beta_F^*(L)<1$).

\paragraph{Market Clearing.}
Conditional on the realized active direction, $p$-traders priced out of the CLOB fill the single LMSR path associated with that direction.
The two equations in~\eqref{eq:p_indifference} therefore characterize mutually exclusive continuation states.
Conditional on $D=F$, the terminal favored displacement is $y_F$ and satisfies $y_F\in(0,s)$.
Conditional on $D=U$, the terminal unfavored displacement is $y_U$ and satisfies
$$
y_U\in\big(w_\tau,\min\{2\sigma+s,\bar y_U\}\big).
$$
Each equation has a unique interior root because its left-hand side is continuous and strictly decreasing in the terminal displacement, while its right-hand side is continuous and strictly increasing.
\begin{equation}
\label{eq:p_indifference}
z\big[H_F(y_F)-H_F(s)\big]=L\big[\lambda(\mu_F+y_F)-\lambda_F\big],
\qquad
z\big[H_U(y_U)-H_U(2\sigma+s)\big]=L\big[\lambda(\mu_U+y_U)+\lambda_F\big],
\end{equation}
on $y_F\in(0,s)$ and $y_U\in(w_\tau,\,2\sigma+s)$ respectively.
Each equation has a unique interior root, since the left side falls continuously from a positive value to zero across its band while the right side rises from zero.\footnote{On the unfavored band the capacity term vanishes at $y=w_\tau$, since $(1-\tau)(\mu_U+w_\tau)=1-(1-\tau)\mu_F$ gives $\lambda(\mu_U+w_\tau)=-\lambda_F$ under $p_0=\tfrac12$}
The left side of each equation is the mass the priority discipline serves; the right side is the units the curve dispenses in reaching the cutoff; the terminal displacement is the unique price at which the two coincide.
Stated informally, the pool fills until the next-cheapest excluded trader can no longer afford it; the terminal price lands strictly between the posterior and the CLOB's committed level, so the LMSR neither saturates at the CLOB price nor exhausts the moderate mass.
The captured masses,
\begin{equation}
\label{eq:lmsr_mass}
M_F(L,s)\;=\;z\big[H_F(y_F)-H_F(s)\big],
\qquad
M_U(L,s)\;=\;z\big[H_U(y_U)-H_U(2\sigma+s)\big],
\end{equation}
are strictly increasing in $L$, while $y_F,y_U$ are strictly decreasing in $L$: a deeper pool ends closer to the posterior and captures more of the band at compressed prices.
The corresponding net YES inventories, one per realization of the committed direction, are
\begin{equation}
\label{eq:yes_inventory_p}
Q_Y^{\mathrm{fav}}\;=\;\iota_D\,L\,\lambda(\mu_F+y_F),
\qquad
Q_Y^{\mathrm{unf}}\;=\;\iota_D\,L\,\lambda(\mu_U+y_U),
\end{equation}
so the terminal inventory is the logit distance to the terminal price in either case.
The routing shares are the derived objects
$$
\alpha_F(L)\;\equiv\;\frac{M_F(L,s)}{zH_F(y_F)},
\qquad
\alpha_U(L)\;\equiv\;\frac{M_U(L,s)}{zH_U(y_U)},
$$
These are the LMSR's fractions of served direction-$D$ tail flow in the two mutually exclusive continuation states.
The favored share lies in $(0,1)$ and is strictly increasing in $L$.
If $2\sigma+s<\bar y_U$, the unfavored share also lies in $(0,1)$ and is strictly increasing in $L$.
If $2\sigma+s\ge\bar y_U$, then $H_U(2\sigma+s)=0$, the CLOB serves no unfavored tail demand, and $\alpha_U(L)=1$.
Traders with displacement below the applicable terminal displacement do not trade at either venue: their valuations lie below the committed quote, and below the marginal price at their point in the queue, which the priority discipline places at the terminal displacement.

\paragraph{Free Entry.}
Conditional on $D=F$, the committed book earns revenue $zsH_F(s)$, 
Conditional on $D=U$, it earns revenue $z(2\sigma+s)H_U(2\sigma+s)$, with the second term read as zero whenever $2\sigma+s\ge\bar y_U$. 
Write $R_F(s) = sH_F(s), R_U(s) = sH_U(s)$.
Since the two active-direction states occur with probability $\tfrac12$ each, zero ex-ante expected profit at per-book posting cost $\kappa$ requires
\refstepcounter{equation}\label{eq:lp_free_entry}
$$
\tfrac12 z\big[R_F(s^*)+R_U(s^*+2\sigma)\big]=\kappa.
\eqno(\theequation)
$$
The factor $\tfrac12$ is the probability of each mutually exclusive active-direction state and does not apply to the posting cost.
Realized CLOB revenue can differ across the two direction states, but risk-neutral entry depends only on the expectation above.
Under Convention~A, the CLOB clientele at its committed quote is determined by the tail distributions and does not depend on LMSR depth.
Consequently, the free-entry condition contains no $L$. the consequence is recorded in Proposition~\ref{prop:active_leg_existence}(iv) below and motivates~\S\ref{subsec:capacity_competition}.

Write $\Phi_D(s)\equiv\tfrac12 z\big[R_F(s)+R_U(s+2\sigma)\big]$ for the pooled screening revenue; it is single-peaked at some $s^\ddagger$ by Assumption~\ref{ass:tail_demand_distribution}(v).
Which point of this curve the market selects depends on how makers compete, and we bracket maker competition the same way Conventions~A and~B bracket trader routing.

\paragraph{Maker Convention I (contestable book).}
Posted quotes are firm, but the book stays open: at any time before the shock a new maker may post a better price against a standing quote.
A standing premium $s$ therefore survives only if no cut is profitable, i.e.\ only if $\Phi_D(s')\le\kappa$ for every $s'<s$, and the surviving zero-profit premium is the root of~\eqref{eq:lp_free_entry} on the \emph{increasing} branch of $\Phi_D$.
A root on the decreasing branch never survives: there a small cut \emph{raises} revenue above $\kappa$, so cutting continues.
The lower bound in Assumption~\ref{ass:low_fee}(ii) below puts the posting cost above the floor rent, so cutting is stopped by profits~\footnote{Were $\kappa\le\Phi_D(0)$, cutting would run to the floor and the premium would corner at $s^*=0$ and we prevent that with assumption.}.
We note that the floor rent $z\sigma H_U(2\sigma)$ is single-peaked in $\sigma$, so the lower bound binds hardest at intermediate shocks and relaxes for extreme ones, where the unfavored tail empties.

\paragraph{Maker Convention II (committed book).}
Entry is decided first and the posting cost $\kappa$ is sunk; quotes are then posted once, simultaneously, and never revised.
With two or more committed books, one-shot price competition drives all of them to the pickoff floor, where each earns at most $\tfrac12\Phi_D(0)<\tfrac12\kappa$ under Assumption~\ref{ass:low_fee}(ii); a second entrant therefore anticipates a loss and stays out, and the unique entry outcome is a single committed book, which prices at the revenue peak $s^\ddagger$.
Under this convention entry pins the number of books rather than their profit: the lone book keeps the rent $\Phi_D(s^\ddagger)-\kappa\ge0$.

The two conventions are the extremal intensities of maker-side price competition---most and least competitive---so results under both bracket any intermediate posting technology, exactly as Conventions~A and~B bracket routing.
Under trader Convention~A the choice is structurally immaterial: either premium ($s^{FE}$ or $s^\ddagger$) is pinned by $\Phi_D$ and $\kappa$ alone and carries no $L$.
Under trader Convention~B the two conventions separate, and the separation is what signs the spillover in Proposition~\ref{prop:binary_spillover}.

The far leg is not always reachable.
Its markup is $2\sigma+s$, so it contributes CLOB revenue only while $2\sigma+s<\bar y_U$.
Write
$$
\sigma^\dagger\equiv\tfrac12\bar y_U,
\qquad
s_x(\sigma)\equiv\bar y_U-2\sigma.
$$
The market is potentially two-sided when $\sigma<\sigma^\dagger$, because the unfavored tail is reachable at the no-pickoff floor $s=0$.
Within that region, the equilibrium CLOB actually serves unfavored tail demand only if $s^*<s_x(\sigma)$, equivalently $2\sigma+s^*<\bar y_U$.
When $2\sigma+s^*\ge\bar y_U$, the unfavored CLOB participation set is empty and the active unfavored tail clears entirely on the LMSR.
When $\sigma\ge\sigma^\dagger$, the market is structurally one-sided because no pickoff-proof premium can reach the unfavored tail.

\begin{assumption}[Viability and screening room]
\label{ass:low_fee}
The primitives $(p_0,\tau,\sigma,\kappa,z)$ and the constant $\bar s$ of Assumption~\ref{ass:tail_demand_distribution}(iii) satisfy
$$
\begin{aligned}
&\text{(i)}\ \ \tau<\frac{2\sigma}{1+2\sigma},
\qquad
\text{(ii)}\ \ \Phi_D(0)\;<\;\kappa\;<\;\sup_{s\in[0,\bar y_F)}\Phi_D(s),
\\
&\text{(iii)}\ \ s^\ddagger<\bar y_F,
\qquad
\text{(iv)}\ \ s^\ddagger\le\bar s,
\qquad
\text{(v)}\ \ w_\tau<\bar y_U ,
\end{aligned}
$$
where $\Phi_D(0)=z\sigma H_U(2\sigma)$ is the floor rent---equal to zero in the one-sided regime---and $s^\ddagger$ is the peak of $\Phi_D$ at zero depth.
\end{assumption}

Condition (i) makes the favored LMSR path viable and implies $w_\tau<2\sigma$.
Condition (ii) places the posting cost above the no-pickoff floor rent and below the peak screening revenue, which gives a unique increasing-branch free-entry root under maker Convention~I.
Condition (iii) keeps the favored-direction premium inside the favored tail's support.
Condition (iv) keeps every selected baseline premium inside the range on which the tail-ordering comparison is imposed.
Condition (v) leaves a nonempty unfavored LMSR recovery interval above the fee wedge.
No condition requires $2\sigma+s^*<\bar y_U$, because the model intentionally permits the equilibrium CLOB quote to exit the unfavored tail's support.

\begin{proposition}[Existence and structure of the Convention-A equilibrium]
\label{prop:active_leg_existence}
Suppose $0<\sigma<\tfrac12$.
Under Assumptions~\ref{ass:tail_demand_distribution} and~\ref{ass:low_fee} and maker Convention~I, every $L\in(0,\bar L)$ induces a unique state-contingent continuation policy
$$
\big(s^*,y_F(L),y_U(L),\alpha_F^*(L),\alpha_U^*(L),\beta_F^*(L)\big)
$$
that satisfies~\eqref{eq:c_capacity},~\eqref{eq:p_indifference}, and~\eqref{eq:lp_free_entry}.
The pair $(y_F,\alpha_F^*)$ is realized only when $D=F$, and the pair $(y_U,\alpha_U^*)$ is realized only when $D=U$.
The continuation policy has the following properties.

\smallskip
\noindent(i) \emph{(Unique premium.)}
The baseline premium is the unique increasing-branch free-entry root
$$
s^*=s^{FE}\in(0,s^\ddagger)\subset(0,\bar y_F).
$$

\smallskip
\noindent(ii) \emph{(Common-flow capacity and screening.)}
The capacity condition binds at
$$
\beta_F^*(L)=L\lambda_F\in(0,1),
$$
and every $c$-trader is strictly screened out of the CLOB.

\smallskip
\noindent(iii) \emph{(Favored active direction.)}
There is a unique terminal displacement $y_F(L)\in(0,s^*)$.
The favored-direction routing share satisfies
$$
\alpha_F^*(L)=1-\frac{H_F(s^*)}{H_F(y_F(L))}\in(0,1).
$$
Moreover, $y_F(L)$ is strictly decreasing in $L$, $\alpha_F^*(L)$ is strictly increasing in $L$, and $\alpha_F^*(L)=O(L)$ as $L\downarrow0$.
The CLOB retains strictly positive favored-direction mass $zH_F(s^*)$.

\smallskip
\noindent(iv) \emph{(Unfavored active direction.)}
If $2\sigma+s^*<\bar y_U$, there is a unique terminal displacement
$$
y_U(L)\in(w_\tau,2\sigma+s^*),
$$
and the routing share satisfies $\alpha_U^*(L)=1-\frac{H_U(s^* + 2\sigma)}{H_U(y_U(L))}\in(0,1)$.
On this branch, $y_U(L)$ is strictly decreasing in $L$, $\alpha_U^*(L)$ is strictly increasing in $L$, and $\alpha_U^*(L)=O(L)$ as $L\downarrow0$.
If $2\sigma+s^*\ge\bar y_U$, there is instead a unique terminal displacement
$$
y_U(L)\in(w_\tau,\bar y_U),
$$
the CLOB serves no unfavored tail demand, and $\alpha_U^*(L)=1$.
On this branch, $y_U(L)$ is strictly decreasing in $L$, $y_U(L)\uparrow\bar y_U$ as $L\downarrow0$, and the total unfavored mass served by the LMSR is $O(L)$.

\smallskip
\noindent(v) \emph{(Depth neutrality.)}
The Convention-A premium is independent of LMSR depth $\frac{ds^*}{dL}=0$.
\end{proposition}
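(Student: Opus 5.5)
The argument is recursive, following the order in which Convention~A makes the equilibrium objects depend on one another. First the premium $s^*$ is pinned from~\eqref{eq:lp_free_entry} alone. Then $\beta_F^*$ is pinned from~\eqref{eq:c_capacity}. Finally, for each continuation state, the terminal displacement is pinned from the corresponding equation in~\eqref{eq:p_indifference}. Part~(v) then comes for free.

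\emph{Step 1: the premium.} By Assumption~\ref{ass:tail_demand_distribution}(v), $\Phi_D$ is continuous and strictly quasi-concave on $[0,\bar y_F)$ with peak $s^\ddagger$. Assumption~\ref{ass:low_fee}(iii) places the peak inside the support. Assumption~\ref{ass:low_fee}(ii) gives $\Phi_D(0)<\kappa<\Phi_D(s^\ddagger)$. By the intermediate value theorem there is a root of $\Phi_D(s)=\kappa$ in $(0,s^\ddagger)$, and strict monotonicity on the increasing branch makes it unique. Under maker Convention~I, any root on the decreasing branch is eliminated by the undercutting argument already given: there exists $s'<s$ with $\Phi_D(s')>\kappa$. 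This gives (i). Because $\Phi_D$ involves only $z$, $H_F$, $H_U$, $\sigma$ and $\kappa$, we get $ds^*/dL=0$, which is (v).

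\emph{Step 2: capacity and screening.} From Lemma~\ref{lem:marginal_and_average_cost}, setting $P_F(\iota_F\beta;p_0,L)=\mu_F$ and taking logits gives $\beta=L\lambda_F$. Assumption~\ref{ass:low_fee}(i) yields $(1-\tau)\mu_F>\tfrac12$, so $\lambda_F>0$. The restriction $L<\bar L=1/\lambda_F$ then gives $\beta_F^*\in(0,1)$. For screening, I would invoke Lemma~\ref{lem:no_pickoff}. With $s^*>0$, the realized ask is $\mu_F+s^*>\mu_F$ and the bid lies strictly below $\mu_U$, so every $c$-trader gets strictly negative surplus on the CLOB.

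\emph{Step 3: terminal displacements.} Fix $s=s^*$ and consider the favored equation. Define $G_F(y)=z[H_F(y)-H_F(s^*)]-L[\lambda(\mu_F+y)-\lambda_F]$ on $[0,s^*]$.
\begin{itemize}
\item At $y=0$, $G_F(0)=z[1\cdot\text{(peak mass)}-H_F(s^*)]>0$. This is strict because $H_F$ is strictly decreasing on its support (the log-concave density is positive there) and $s^*<\bar y_F$.
\item At $y=s^*$, $G_F(s^*)=-L[\lambda(\mu_F+s^*)-\lambda_F]<0$.
\item $H_F$ is strictly decreasing and $\lambda$ is strictly increasing, so $G_F$ is strictly decreasing.
\end{itemize}
Hence there is a unique root $y_F\in(0,s^*)$.

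The unfavored case works the same way on $(w_\tau,\min\{2\sigma+s^*,\bar y_U\})$. The footnoted identity $\lambda(\mu_U+w_\tau)=-\lambda_F$ makes the capacity term vanish at the left end. Assumption~\ref{ass:low_fee}(v) makes the interval nonempty and makes $H_U(w_\tau)>H_U(2\sigma+s^*)$. I split into two cases. If $2\sigma+s^*<\bar y_U$, the right endpoint gives a negative value exactly as before. If $2\sigma+s^*\ge\bar y_U$, then $H_U(2\sigma+s^*)=0$, and at $y=\bar y_U$ the left side is zero while the right side is positive. In that case the CLOB mass is zero and $\alpha_U^*=1$.

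The routing-share formulas follow from the definitions of $\alpha_D$ and~\eqref{eq:lmsr_mass}.

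\emph{Step 4: comparative statics in $L$.} I expect this step to be the main, if mild, obstacle, because of the $O(L)$ rates and the boundary branch.
\begin{itemize}
\item Applying the implicit function theorem to $G_D(y,L)=0$ gives $\partial_y G_D<0$ and $\partial_L G_D=-[\lambda(\cdot)-\lambda(\text{left})]<0$ at an interior root, so $dy_D/dL<0$. Since $H$ is decreasing, $\alpha_D^*=1-H(\text{CLOB})/H(y_D)$ is then strictly increasing in $L$.
\item For the rates, as $L\downarrow0$ the equation forces $z[H(y_D)-H(\text{CLOB})]=L\cdot O(1)$, because $\lambda$ is bounded on the compact range. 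So the captured mass, and hence $\alpha_D^*$, is $O(L)$, with denominator $H(y_D)\to H(s^*)>0$ or $H_U(2\sigma+s^*)>0$.
\item On the branch $2\sigma+s^*\ge\bar y_U$, the same estimate gives $zH_U(y_U)=O(L)$, hence $y_U\uparrow\bar y_U$. Here I need only continuity and strict monotonicity of $H_U$ up to the ceiling.
\end{itemize}

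Uniqueness of the whole policy follows because each component is uniquely determined at its stage of the recursion.
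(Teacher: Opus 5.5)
Your proposal is correct and follows essentially the same recursive argument as the paper. Free entry pins $s^*$ on the increasing branch of an $L$-free $\Phi_D$, giving (i) and (v), and capacity gives $\beta_F^*=L\lambda_F$. The strictly decreasing clearing differences $G_F,G_U$, together with the identity $\lambda(\mu_U+w_\tau)=-\lambda_F$, give the unique terminals on each branch, and the implicit function theorem plus the bounded capacity term deliver the monotonicity and $O(L)$ claims. Two small notes: nonemptiness of the unfavored interval also needs $w_\tau<2\sigma$ from Assumption~\ref{ass:low_fee}(i), not only clause (v); and your share formula $\alpha_F^*=1-H_F(s^*)/H_F(y_F)$, read off the definitions, is the correct one (the paper's Step~3 misstates it as $z(1-H_F(s^*))$).
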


The equilibrium is recursive.
Free entry pins $s^*$, the premium and depth pin the conditional clearing displacement in the realized active-direction state, and neither clearing equation feeds back into the Convention-A free-entry condition.
Under maker Convention~II, replace $s^*=s^{FE}$ by $s^*=s^\ddagger$ throughout Proposition~\ref{prop:active_leg_existence}.
The branch classification then depends on whether $2\sigma+s^\ddagger$ is below or at or above $\bar y_U$.
Because $\Phi_D$ contains no $L$ under Convention~A, the depth-neutrality conclusion is unchanged.
The spillover announced in~\S\ref{sec:market_info} therefore requires venue competition over clientele, which is exactly what Convention~B restores in~\S\ref{subsec:capacity_competition}.
See~\S\ref{app:proof_for_active_leg_existence} for the proof.

Under maker Convention~II the premium is the peak $s^\ddagger$ in place of the root $s^{FE}$, and claims (ii)--(iv) of Proposition~\ref{prop:active_leg_existence} hold verbatim: under firm-quote priority $\Phi_D$ carries no $L$, so the committed book's peak, like the contestable book's root, is depth-neutral.
The two maker conventions separate only once traders compete for the pool as in~\S\ref{subsec:capacity_competition}.

Convention~A can be seen as the CLOB maker(s) quote without an LMSR and LMSR is only a venue for informed traders to profit from their information and make the informed shock public in Stage~1.
We see that the CLOB maker(s) do not serve the role of information aggregator at all when they have to quote from a no-information state~(when the market price is uniform).

In Figure~\ref{fig:mass_split_eq}, we illustrate the mass split of $p$-traders between LMSR and CLOB, and thus the expected CLOB inventory holding when the shock is small v.s.~large.
When the shock is small, CLOB makers two-sided quotes can both attract $p$-trader mass and LMSR serve the rest in expectation; when the shock is large, a symmetric quote on the far-side of the tail can attract no $p$-trader mass and thus only extract rent, and hold share inventory, on one side of the market outcome. 

\begin{figure}[H]
\centering
\begin{subfigure}{.5\textwidth}
  \centering
  \includegraphics[width=1\linewidth]{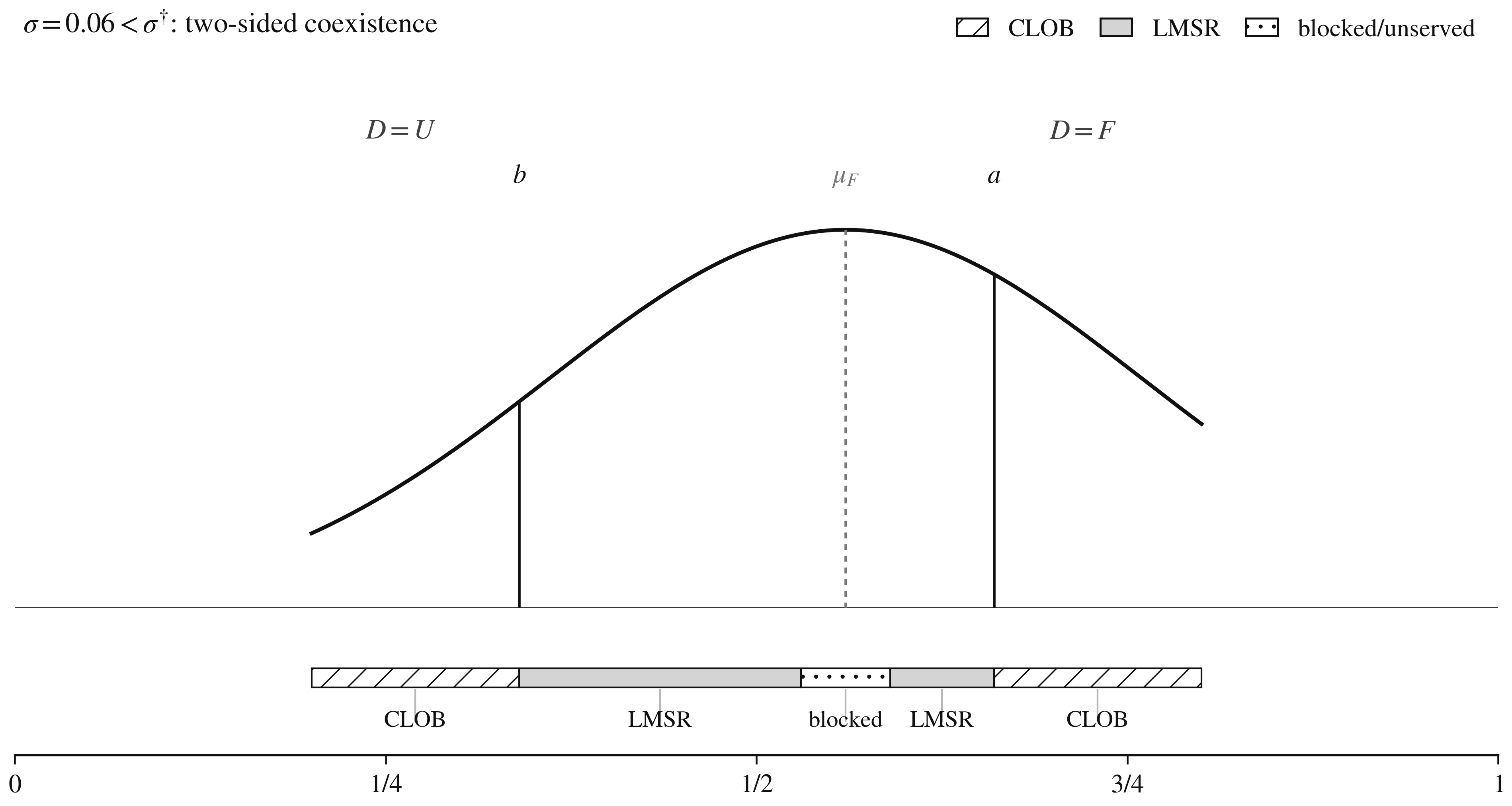}
  \caption{Equilibrium Mass Split - Small Shock}
  \label{fig:sub1}
\end{subfigure}%
\begin{subfigure}{.5\textwidth}
  \centering
  \includegraphics[width=1\linewidth]{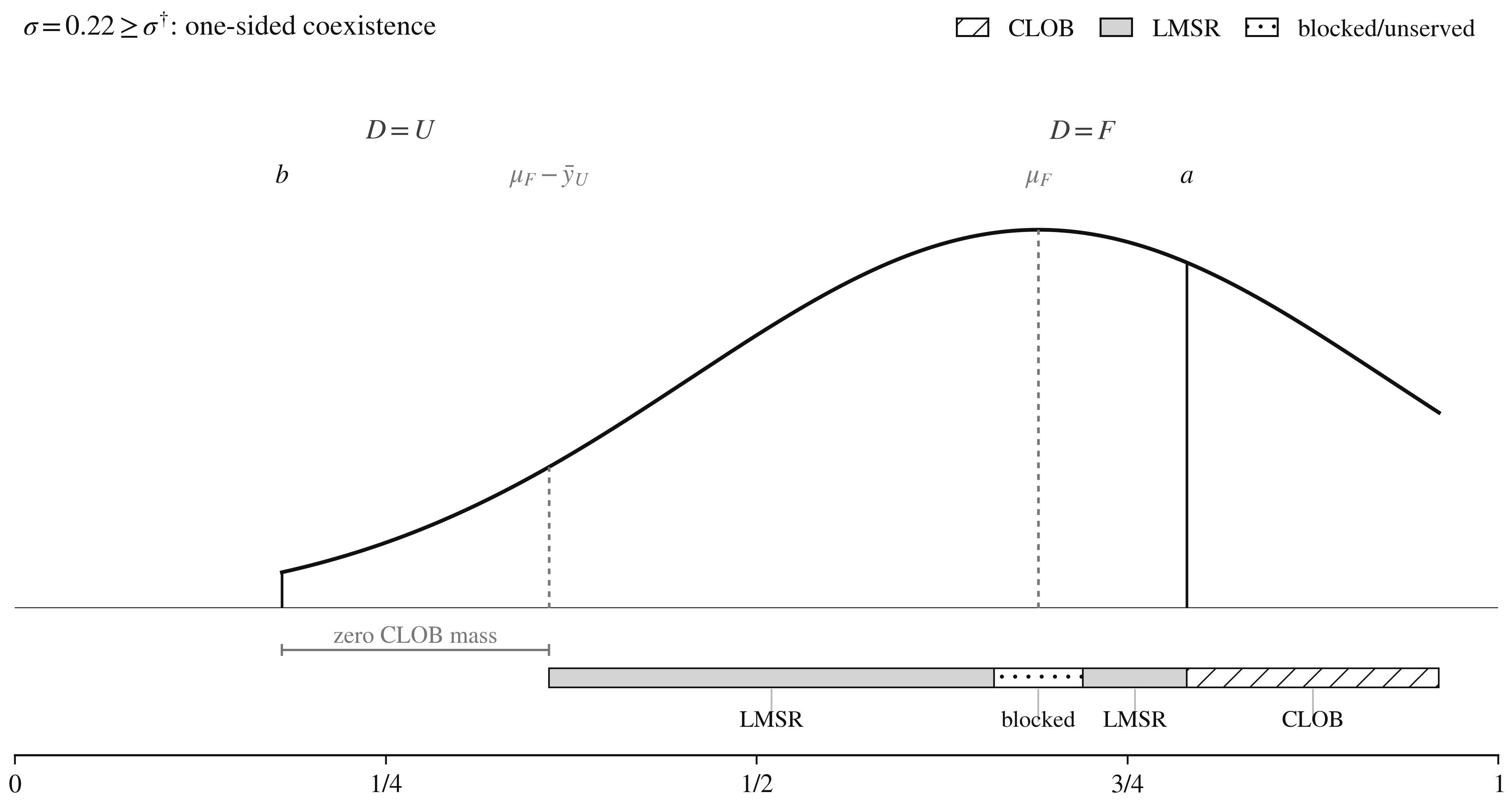}
  \caption{Equilibrium Mass Split - Large Shock}
  \label{fig:sub2}
\end{subfigure}
\caption{Equilibrium Mass Split Comparison}
\label{fig:mass_split_eq}
\end{figure}

\subsection{LP's LMSR Problem and Endogenous Interior Liquidity}
\label{subsec:lmsr_lp_problem}

Depth is costly on $c$-flow and profitable on $p$-flow, and the LP balances these two effects.
Stage-1 $c$-flow exposes the LMSR LP to adverse selection because informed traders buy at prices below the posterior they have observed.
The loss is linear in depth:
$$
\mathcal L_c(L)=\ell_c L,
\qquad
\ell_c\equiv\mu_F\lambda_F-\frac{1}{1-\tau}\ln\!\frac{1-p_0}{1-(1-\tau)\mu_F}>0.
$$
By mirror symmetry, the sign of the common shock does not affect the LP's payoff conditional on the active direction.
Conditional on either shock sign, the favored and unfavored active-direction states occur with probability $\tfrac12$ each.
The Stage-1 loss occurs in every realization, while exactly one of the two Stage-2 recovery terms is realized.
The ex-ante payoff is therefore
\refstepcounter{equation}\label{eq:lp_payoff}
$$
\Pi^{LMSR}(L)
=-\ell_c L
+\frac{L}{2}\int_0^{y_F(L)}y\,d\lambda(\mu_F+y)
+\frac{L}{2}\int_{w_\tau}^{y_U(L)}y\,d\lambda(\mu_U+y).
\eqno(\theequation)
$$
The first integral is the recovery conditional on $D=F$, and the second is the recovery conditional on $D=U$.
They are probability-weighted conditional payoffs rather than two recoveries earned in the same realization.

Because the LMSR cost depends on inventory only through $q/L$, the payoff is homogeneous of degree one in depth given the per-unit-liquidity intensities, which is the factorization $\Pi^{LMSR}(L)=L\,G\big(y_F(L),y_U(L)\big)$ used in the proof; the explicit accounting is written out in~\S\ref{App:proof_of_interior_L}.

The trade-off in $L$ is transparent in~\eqref{eq:lp_payoff}.
Marginal $c$-loss is the constant $\ell_c$.
Marginal recovery is the band integral evaluated at the current terminals, and it is eroded from two directions as depth grows: the added capacity is filled at prices closer to the posterior, and the terminals $y_F,y_U$ themselves fall, compressing every inframarginal margin.
Too little depth starves the recovery channel---as $L\downarrow0$ the terminals rise to the CLOB ceilings but the captured mass vanishes---while too much depth expands the loss-making $c$-flow, $\beta_F=L\lambda_F$, against recovery that flattens.
We impose three payoff-regularity conditions this logic needs.

\begin{assumption}[Profitable recovery and diminishing returns to depth]
\label{ass:lmsr_recovery_boundary}
On $\mathcal I=(0,\bar L)$, the following conditions hold.

\smallskip
\noindent(i) \emph{(Profitable recovery at zero.)}
$ \ell_c <\frac12\int_0^{s^*}y\,d\lambda(\mu_F+y)
+\frac12\int_{w_\tau}^{\min\{2\sigma+s^*,\bar y_U\}}y\,d\lambda(\mu_U+y).$

\smallskip
\noindent(ii) \emph{(Diminishing returns.)}
The marginal recovery
$\rho(L)\equiv\frac{d}{dL}\big[\Pi^{LMSR}(L)+\ell_c L\big]$
is strictly decreasing on $\mathcal I$ and satisfies 
$\lim_{L\uparrow\bar L}\rho(L)<\ell_c$.

\smallskip
\noindent(iii) \emph{(No profitable off-branch deviation.)}
Choices outside the capacity-binding Convention-A branch do not yield a payoff above the branch optimum.
\end{assumption}

Clause (i) evaluates the two conditional recovery bands at their effective zero-depth terminal displacements.
The unfavored endpoint is $2\sigma+s^*$ when the CLOB retains unfavored flow and $\bar y_U$ when the CLOB quote lies outside the unfavored tail's support.
Clause (ii) supplies the single-crossing property that delivers uniqueness.
Clause (iii) lets the branch solution be compared with choices outside $\mathcal I$.

\begin{proposition}[Interior optimal LMSR liquidity]
\label{prop:interior_optimal_L}
Under Assumptions~\ref{ass:tail_demand_distribution},~\ref{ass:low_fee}, and~\ref{ass:lmsr_recovery_boundary}, the problem
$$
\max_{L\in\mathcal I}\Pi^{LMSR}(L)
$$
has a unique solution $L^*\in\mathcal I$.
The solution is characterized by
\refstepcounter{equation}\label{eq:lp_foc}
$$
\ell_c
=\rho(L^*)
=\left.\frac12\frac{d}{dL}\left[
L\int_0^{y_F(L)}y\,d\lambda(\mu_F+y)
+L\int_{w_\tau}^{y_U(L)}y\,d\lambda(\mu_U+y)
\right]\right|_{L=L^*}.
\eqno(\theequation)
$$
The optimal payoff is strictly positive.
At $L^*$, $\beta_F^*(L^*)\in(0,1)$ and $\alpha_F^*(L^*)\in(0,1)$.
If $2\sigma+s^*<\bar y_U$, then $\alpha_U^*(L^*)\in(0,1)$.
If $2\sigma+s^*\ge\bar y_U$, then $\alpha_U^*(L^*)=1$ because the CLOB serves no unfavored tail demand.
In each possible active-direction state, the corresponding recovery integral in~\eqref{eq:lp_payoff} is strictly positive.
Every unfavored-direction LMSR trade earns a margin of at least $w_\tau$ over the unfavored posterior.
By Assumption~\ref{ass:lmsr_recovery_boundary}(iii), $L^*$ is also optimal relative to choices outside the capacity-binding branch.
\end{proposition}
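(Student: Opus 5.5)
The plan is to reduce the LP problem to a one-dimensional single-crossing argument on $\mathcal I=(0,\bar L)$. We use the Convention-A continuation policy of Proposition~\ref{prop:active_leg_existence}, which for each $L$ delivers unique terminals $y_F(L),y_U(L)$, a depth-neutral premium $s^*$, and $\beta_F^*(L)=L\lambda_F\in(0,1)$.

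\textbf{Step 1: payoff as a function of $L$.} Write $\Pi^{LMSR}(L)=L\,G(y_F(L),y_U(L))$ as in~\eqref{eq:lp_payoff}, where
\[
G(y_F,y_U)=-\ell_c+\tfrac12\int_0^{y_F}y\,d\lambda(\mu_F+y)+\tfrac12\int_{w_\tau}^{y_U}y\,d\lambda(\mu_U+y).
\]
By the implicit function theorem applied to~\eqref{eq:p_indifference}, $y_F(L)$ and $y_U(L)$ are $C^1$. This holds because $H_F$ and $H_U$ are $C^1$ by Assumption~\ref{ass:tail_demand_distribution}(i), $\lambda$ is smooth, and the derivative of the difference of the two sides in $y$ is strictly signed, as already used for uniqueness of the roots. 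Hence $\rho(L)$ is well defined, and $\frac{d}{dL}\Pi^{LMSR}=\rho(L)-\ell_c$, which is the identity~\eqref{eq:lp_foc} once the $\tfrac12$ factors are collected.

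\textbf{Step 2: boundary behaviour and the interior root.} As $L\downarrow0$, Proposition~\ref{prop:active_leg_existence}(iii)--(iv) gives $y_F(L)\uparrow s^*$ and $y_U(L)\uparrow\min\{2\sigma+s^*,\bar y_U\}$. By the product rule,
\[
\rho(L)=G(y_F,y_U)+\ell_c+L\,\partial_L G .
\]
The last term vanishes as $L\downarrow0$, provided $y_F'(L)$ and $y_U'(L)$ stay bounded. This boundedness follows from differentiating~\eqref{eq:p_indifference}, since the $\lambda$-slopes are bounded below on the compact bands. Therefore $\lim_{L\downarrow0}\rho(L)$ equals the right side of Assumption~\ref{ass:lmsr_recovery_boundary}(i), which strictly exceeds $\ell_c$.

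Assumption~\ref{ass:lmsr_recovery_boundary}(ii) gives $\rho$ strictly decreasing, with limit below $\ell_c$ at $\bar L$. Continuity and the intermediate value theorem then yield a unique $L^*\in\mathcal I$ with $\rho(L^*)=\ell_c$. The derivative of $\Pi^{LMSR}$ is positive before $L^*$ and negative after it, so $L^*$ is the unique maximizer on $\mathcal I$.

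\textbf{Step 3: positivity and the remaining claims.} Since $\Pi^{LMSR}(0^+)=0$ and $\Pi^{LMSR}$ is strictly increasing on $(0,L^*]$, we get $\Pi^{LMSR}(L^*)>0$. The claims on $\beta_F^*$, $\alpha_F^*$ and $\alpha_U^*$ at $L^*$ follow by evaluating Proposition~\ref{prop:active_leg_existence}(ii)--(iv) at $L=L^*\in\mathcal I$, with the branch split on whether $2\sigma+s^*<\bar y_U$.

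Both recovery integrals are strictly positive. For the favored one, the integrand $y>0$ on $(0,y_F)$ and $\lambda$ is strictly increasing. For the unfavored one, $y_U(L^*)>w_\tau$ and the integrand exceeds $w_\tau>0$. The last observation also gives the margin statement: every unfavored trade executes at displacement $y\ge w_\tau$ above $\mu_U$. Global optimality against off-branch choices is exactly Assumption~\ref{ass:lmsr_recovery_boundary}(iii).

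\textbf{Main obstacle.} I expect the main obstacle to be Step 2's limit at $L\downarrow0$: showing that $L\,\partial_L G\to0$ and that the terminals converge to the stated ceilings. This is delicate on the one-sided branch, where $y_U\uparrow\bar y_U$ and $H_U$ vanishes at the endpoint, so $y_U'(L)$ may blow up. I would first bound $L\,y_U'(L)$ using the clearing identity $zH_U(y_U)=L[\lambda(\mu_U+y_U)+\lambda_F]$. This identity gives $L=O(H_U(y_U))$, and log-concavity of the density controls $H_U/|H_U'|$ near the support edge. If that bound fails, I would instead argue directly via the integral form: $L\int y\,d\lambda$ is bounded by $L\cdot\mathrm{const}\to0$, and average recovery per unit $L$ converges monotonically to the clause-(i) quantity.
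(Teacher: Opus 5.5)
Your proposal is correct and follows the paper's proof essentially step for step. It uses the same single-crossing argument on $\rho(L)-\ell_c$ via Assumption~\ref{ass:lmsr_recovery_boundary}(i)--(ii), the same positivity argument from $\Pi^{LMSR}(0^+)=0$, and, on the one-sided branch, the same bound $|L\,y_U'(L)|\le H_U(y_U)/(-H_U'(y_U))\le\bar y_U-y_U(L)\to0$ from the clearing identity and the decreasing tail density, exactly as your ``main obstacle'' paragraph plans. One correction to Step~2: on the interior branches $L\,y_F'(L)\to0$ because $-H_F'$ stays bounded away from zero near $s^*$ (likewise $-H_U'$ near $2\sigma+s^*$), not because $\lambda'$ is bounded below, which alone only gives $|L\,y'(L)|=O(1)$.
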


Intuitively, the LP stops deepening the pool when the marginal expected recovery from the two mutually exclusive active-direction states equals the marginal adverse-selection loss on common flow.
The proposition is stated at the Convention-I premium $s^*=s^{FE}$.
Under maker Convention~II, the same result applies if Assumption~\ref{ass:lmsr_recovery_boundary} is imposed on the Convention-II payoff with $s^*=s^\ddagger$.
Clause (i) is weakly easier to satisfy under Convention~II because the effective recovery endpoints weakly expand, but clauses (ii) and (iii) must still be imposed for that convention.

See~\S\ref{App:proof_of_interior_L} for the proof.
This closes the Convention-A model: makers' premium from free entry, LMSR terminals from clearing, optimal depth from~\eqref{eq:lp_foc}, each block feeding the next and none feeding back.

\subsection{Capacity Competition, Crowd-Out, and the Liquidity Spillover}
\label{subsec:capacity_competition}

This subsection removes Convention~A's firm-quote priority and lets traders compete for the LMSR's cheaper early path units.
The allocation rule is introduced as a convention, and the resulting residual CLOB volume is then derived as a lemma.

\paragraph{Routing Convention B (price competition).}
Conditional on the realized active direction, $p$-traders now compare the marginal price at LMSR and CLOB when executing a trader.
At a path price $p<A^D$, a trader with $m_i\ge A^D$ gains $A^D-p$ from receiving the LMSR unit rather than using her CLOB outside option.
All CLOB-eligible traders therefore also trade for LMSR units available below $A^D$, and they stop when the terminal price of LMSR exceeds CLOB.
This allocation rule defines Convention~B.

\begin{lemma}[Residual CLOB volume under price competition]
\label{lem:queue_priority}
Fix a committed baseline premium $s\ge0$ and LMSR depth $L>0$.
Under Convention~B, the CLOB volumes conditional on the favored- and unfavored-direction states are
\refstepcounter{equation}\label{eq:residual_volumes}
\begin{equation}
V_F(s,L)=\Big(zH_F(s)-L\big[\lambda(\mu_F+s)-\lambda_F\big]\Big)_+,
\qquad
V_U(s,L)=\Big(zH_U(2\sigma+s)-L\big[\lambda(\mu_U+2\sigma+s)+\lambda_F\big]\Big)_+.    
\end{equation}
If $V_F(s,L)>0$, the LMSR reaches the favored CLOB quote and the CLOB serves exactly $V_F(s,L)$ traders.
If $V_F(s,L)=0$, all favored-direction traders who could use the CLOB are absorbed by the LMSR, and the terminal displacement is the unique $y\in(0,s]$ satisfying
$$
zH_F(y)=L\big[\lambda(\mu_F+y)-\lambda_F\big].
$$
If $V_U(s,L)>0$, the LMSR reaches the unfavored CLOB quote and the CLOB serves exactly $V_U(s,L)$ traders.
If $V_U(s,L)=0$, all unfavored-direction traders who could use the CLOB are absorbed by the LMSR, and the terminal displacement is the unique $y\in(w_\tau,2\sigma+s]$ satisfying
$$
zH_U(y)=L\big[\lambda(\mu_U+y)+\lambda_F\big].
$$
Whenever the corresponding CLOB-eligible mass is positive, the Convention-B LMSR routing shares are
$$
\alpha_F(s,L)=1-\frac{V_F(s,L)}{zH_F(s)},
\qquad
\alpha_U(s,L)=1-\frac{V_U(s,L)}{zH_U(2\sigma+s)}.
$$
If the corresponding CLOB-eligible mass is zero, its routing share is defined as one.
\end{lemma}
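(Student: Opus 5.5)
I will prove Lemma~\ref{lem:queue_priority} one direction at a time. Fix the realized active direction $D$ and work on the single LMSR path that Stage~2 traders walk in that state. The starting point is the Stage-1 landmarks from \S\ref{subsec:LMSR}. In state $F$ the fee-adjusted favored price starts at $\mu_F$, so the logit distance is $\lambda_F$. In state $U$ the unfavored price starts at $\mu_U+w_\tau$, so by the footnote to~\eqref{eq:p_indifference} the distance is $-\lambda_F$. Lemma~\ref{lem:marginal_and_average_cost} then says that raising the fee-adjusted price of a claim from $a$ to $b$ dispenses $L[\lambda(b)-\lambda(a)]$ units. The number of LMSR units available strictly below the CLOB ask $A^D$ is therefore
\[
K_F=L\big[\lambda(\mu_F+s)-\lambda_F\big]
\qquad\text{or}\qquad
K_U=L\big[\lambda(\mu_U+2\sigma+s)+\lambda_F\big],
\]
where the unfavored ask sits at displacement $2\sigma+s$ over $\mu_U$, as in Lemma~\ref{lem:no_pickoff}. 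Both quantities are nonnegative because $\lambda$ is strictly increasing. In the unfavored case this uses $2\sigma+s>w_\tau$, which follows from Assumption~\ref{ass:low_fee}(i).

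The main step applies Convention~B. The CLOB-eligible traders are those with $m_i\ge A^D$, a mass of $zH_F(s)$ or $zH_U(2\sigma+s)$. Each of them strictly prefers an LMSR unit priced below $A^D$, since that unit yields the extra surplus $A^D-p>0$. Ineligible traders can also buy such units, but only while the path price is below their valuation.

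\emph{Case 1: eligible mass exceeds $K_D$.} I will show the path is driven all the way to $A^D$. Suppose instead the terminal price $p_T$ stayed below $A^D$. Some eligible trader would then be unserved by the pool while facing a price below her outside option, which contradicts the stopping rule. So the pool dispenses exactly $K_D$ units, all below $A^D$.

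\emph{Case 2: eligible mass is at most $K_D$.} The pool need not reach $A^D$, and every eligible trader obtains a unit below her CLOB price, so the CLOB volume is zero. In this case the terminal displacement solves $zH_D(y)=L[\lambda(\cdot+y)-\lambda(\text{start})]$. The left side is continuous and strictly decreasing in $y$, by the density assumptions in Assumption~\ref{ass:tail_demand_distribution}(i). The right side is continuous, strictly increasing, and equal to zero at the starting displacement ($0$ or $w_\tau$). At the ask the left side is no larger than the right by the case hypothesis. The intermediate value theorem then gives a unique root in $(0,s]$ or $(w_\tau,2\sigma+s]$; for the unfavored band this relies on Assumption~\ref{ass:low_fee}(v), so that the left side is positive at $w_\tau$.

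The step I expect to be delicate is the accounting in Case~1. I must justify that all $K_D$ pool units go to eligible traders, leaving a residual of exactly $zH_D-K_D$ on the CLOB. The worry is that lower-valuation traders could take some of the units near the start of the path.

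My first attempt is to use the descending-valuation queue discipline carried over from Convention~A: whoever has the most surplus at stake wins any contest for queue position, so eligible traders, whose willingness to pay is highest, are served first and exhaust the $K_D$ units before anyone ineligible is reached.

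If that discipline is not available under Convention~B, I would fall back on a counting argument. Every unit dispensed below $A^D$ is matched to one trader, and eligible traders are never left unserved while a cheaper pool unit exists. The CLOB volume is then the eligible mass minus the pool units that went to eligible traders. I would argue that, under price competition, the pool units go to eligible traders because their bids for queue position dominate; this again yields $V_D=(zH_D-K_D)_+$.

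The routing shares then follow by definition: $\alpha_D=1-V_D/(zH_D)$, set equal to $1$ when the eligible mass is zero. This step involves no further argument.
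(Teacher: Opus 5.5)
Your proposal is correct and follows essentially the same route as the paper's proof: you count the LMSR units available below the CLOB quote with the log-odds capacity formula, let CLOB-eligible traders take those units first, and split on whether the eligible mass exceeds that capacity. In the zero-residual case you get the unique terminal root from the opposite monotonicity of the two sides, using $\lambda(\mu_U+w_\tau)=-\lambda_F$ and $H_U(w_\tau)>0$ on the unfavored band. The step you flag as delicate, eligible-first assignment of the cheap units, is not derived in the paper; it is built into Convention~B itself. Your queue-discipline justification is consistent with that convention, since an eligible trader's stake $A^D-p$ exceeds any ineligible trader's stake $m_j-p$.
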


The lemma separates the maintained allocation rule from its implication: Convention~B relaxes the tie break and let traders execute based solely on price, thus increasing depth $L$ will steal $p$-trader volume from CLOB when they find it profitable, and Lemma~\ref{lem:queue_priority} computes the volume left for the CLOB.
Conventions~A and~B are the two extremal allocations of path rents, respectively the most CLOB-favorable and the most LMSR-favorable.
The LMSR payment identity is unchanged because cost-function payments depend only on terminal inventory; Convention~B changes the terminal displacement, not the payment formula conditional on that terminal.
See~\S\ref{App:proof_for_queue_priority} for the proof.

Figure~\ref{fig:mass_split_eq_B}'s comparison illustrate how business stealing work under Convention~B compare to A.
When firm quote is priority, $p$-traders who have valuation higher than ask~(lower than bid) execute transaction on CLOB instantly, the mass is indicated by the green area of left figure; under Convention~B, the $p$-traders who originally trade at CLOB will compare the marginal price of LMSR $y_F(L)$~($y_U(L)$), which in Convention~A, satisfy $y_F \in (0,s)$~($y_U \in (0, 2\sigma+s)$), and thus will be diverted to LMSR until the marginal price meets the quote, the new purple area in the right plot is the diverted $p$-trader mass.

\begin{figure}[H]
\centering
\begin{subfigure}{.5\textwidth}
  \centering
  \includegraphics[width=0.9\linewidth]{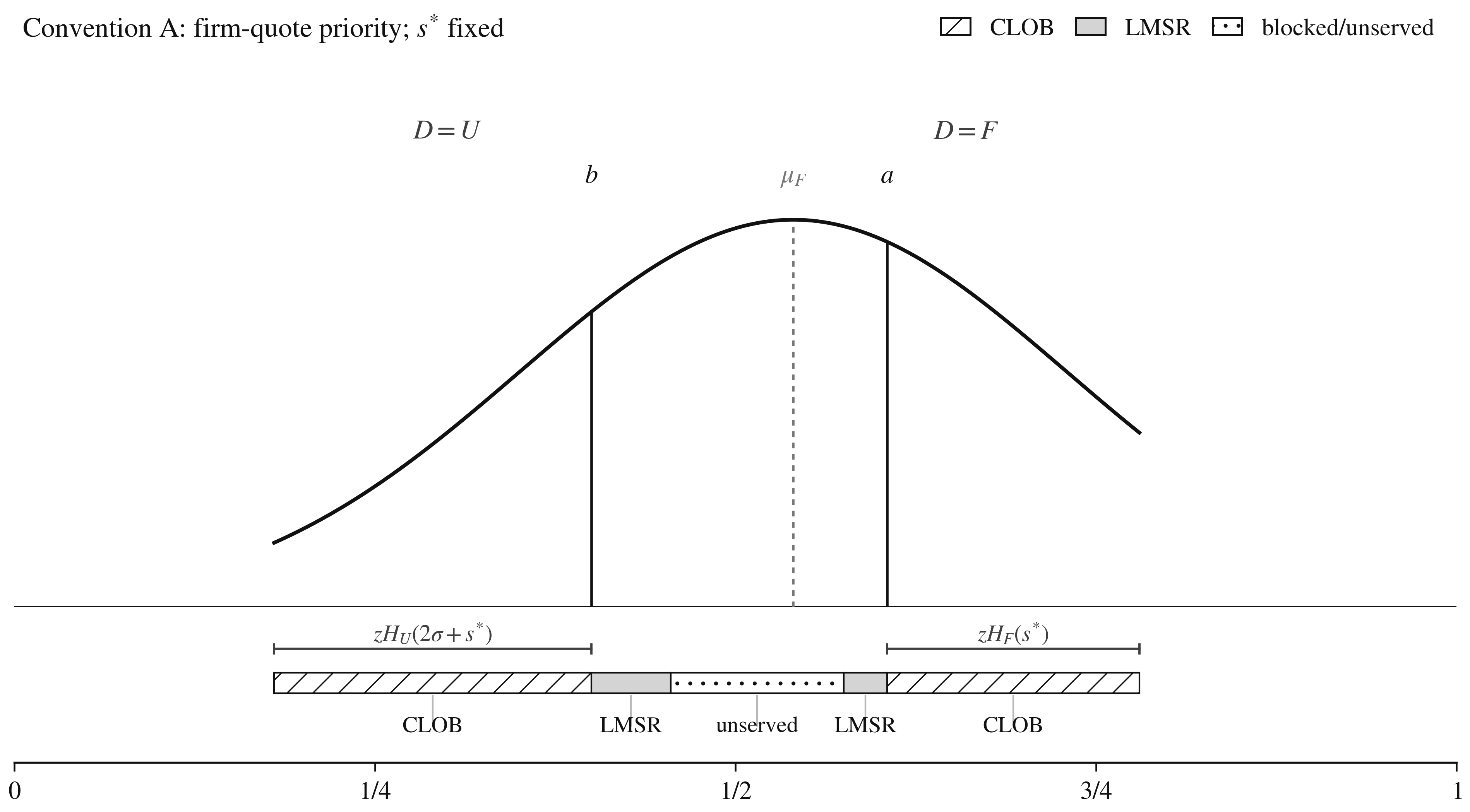}
  \caption{Equilibrium Mass Split - Convention~A}
  \label{fig:sub1}
\end{subfigure}%
\begin{subfigure}{.5\textwidth}
  \centering
  \includegraphics[width=0.9\linewidth]{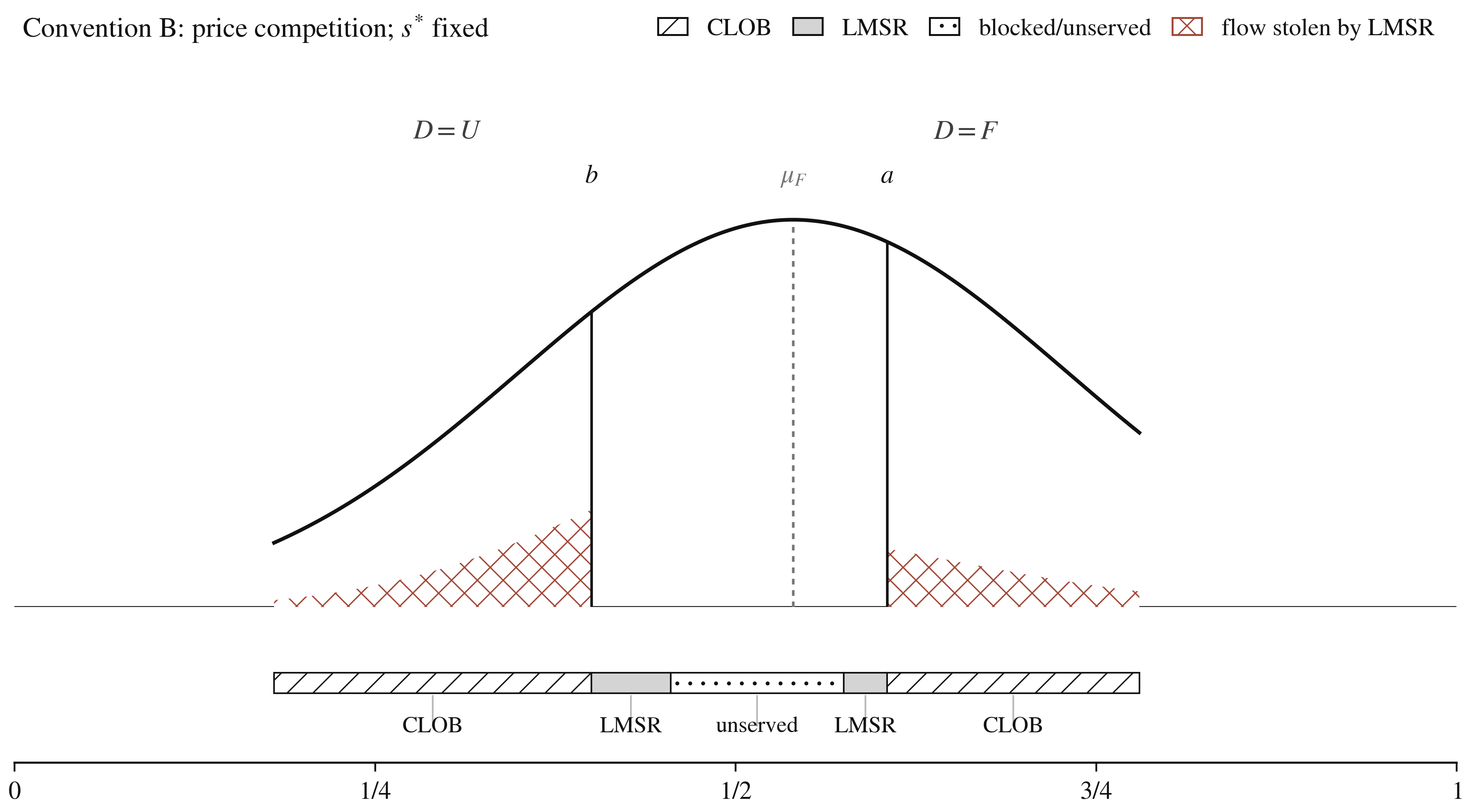}
  \caption{Equilibrium Mass Split - Convention~B}
  \label{fig:sub2}
\end{subfigure}
\caption{Convention~A v.s.~B: Business Stealing}
\label{fig:mass_split_eq_B}
\end{figure}

Under Convention~B, expected maker revenue at premium $s$ and depth $L$ is
\refstepcounter{equation}\label{eq:lp_free_entry_B}
$$
\Phi_D(s;L)\equiv\tfrac12\Big[sV_F(s,L)+(2\sigma+s)V_U(s,L)\Big],
\qquad
\Phi_D(s;0)=\Phi_D(s).
\eqno(\theequation)
$$
\paragraph{Depth-uniform Revenue Shape.}
For every $L\in[0,\bar L)$ for which $\sup_s\Phi_D(s;L)>0$, we maintain the depth-uniform extension of Assumption~\ref{ass:tail_demand_distribution}(v): each nonzero side-revenue term in $\Phi_D(\cdot;L)$ is strictly quasi-concave on its positive-revenue support, and $\Phi_D(\cdot;L)$ has a unique maximizer, is strictly increasing before that maximizer, and is strictly decreasing after it while expected revenue remains positive.
At $L=0$, this restriction reduces to Assumption~\ref{ass:tail_demand_distribution}(v) together with the side-level single-peakedness implied by log-concavity.
The depth-uniform extension is required because subtracting LMSR capacity from tail demand does not mechanically preserve the zero-depth quasi-concavity restriction.

Under maker Convention~I, the selected premium is the increasing-branch root of $\Phi_D(s;L)=\kappa$.
Under maker Convention~II, the selected premium is the unique maximizer $s^\ddagger(L)$ of $\Phi_D(\cdot;L)$.
Under either maker convention, a book operates if and only if $\sup_s\Phi_D(s;L)\ge\kappa$.

\begin{proposition}[LMSR depth and the CLOB premium]
\label{prop:binary_spillover}
Maintain Convention~B and the depth-uniform revenue-shape restriction above, and consider the capacity-binding region $L\in(0,\bar L)$ in which a book can enter.

\smallskip
\noindent(i) \emph{(Contestable book: depth widens the premium.)}
Under maker Convention~I, the selected premium $s_D^*(L)$ is the unique increasing-branch root of $\Phi_D(s_D^*(L);L)=\kappa$.
The root is continuous and strictly increasing in $L$ throughout the entry region.
On every differentiable segment on which the set of positive residuals is unchanged,
$$
\frac{ds_D^*}{dL}=-\frac{\partial\Phi_D/\partial L}{\partial\Phi_D/\partial s}>0.
$$
The routing shares $\alpha_F^*(L)$ and $\alpha_U^*(L)$ are nondecreasing in $L$ and are strictly increasing until the corresponding residual reaches zero, after which that share equals one.
If the last-entry depth $L^{det}$ in Proposition~\ref{prop:crowdout_deterrence}(ii) lies below $\bar L$, then
$$
\lim_{L\uparrow L^{det}}s_D^*(L)=s^\ddagger(L^{det}),
\qquad
\Phi_D(s^\ddagger(L^{det});L^{det})=\kappa.
$$

\smallskip
\noindent(ii) \emph{(Committed book: depth compresses the premium.)}
Under maker Convention~II, the selected premium is the unique peak $s_D^*(L)=s^\ddagger(L)$ of $\Phi_D(\cdot;L)$.
The peak is continuous and nonincreasing in $L$ throughout the entry region.
At every regular interior peak on a differentiable residual branch,
$$
\frac{ds^\ddagger}{dL}=-\frac{\partial^2\Phi_D/\partial s\,\partial L}{\partial^2\Phi_D/\partial s^2}<0.
$$
The routing shares are nondecreasing in $L$ and are strictly increasing until full crowd-out whenever the following condition holds, with the maximum taken only over directions whose CLOB-eligible mass is positive:
\refstepcounter{equation}\label{eq:alpha_monotone_condition}
$$
L\left|\frac{ds^\ddagger}{dL}\right|
\max\left\{
\frac{\lambda'(\mu_F+s^\ddagger)}{\lambda(\mu_F+s^\ddagger)-\lambda_F}-\frac{H_F'(s^\ddagger)}{H_F(s^\ddagger)},
\frac{\lambda'(\mu_F+s^\ddagger)}{\lambda(\mu_F+s^\ddagger)+\lambda_F}-\frac{H_U'(2\sigma+s^\ddagger)}{H_U(2\sigma+s^\ddagger)}
\right\}<1.
\eqno(\theequation)
$$

\smallskip
\noindent(iii) \emph{(Entry margin and common markup movement.)}
Under either maker convention, $\sup_s\Phi_D(s;L)$ is strictly decreasing in $L$ whenever it is positive.
Hence the set of depths at which a book can enter is an interval, and no maker enters above its boundary $L^{det}$ whenever that boundary lies below $\bar L$.
In both active-direction states the CLOB markups are $s_D^*(L)$ and $2\sigma+s_D^*(L)$, so they move by the same $ds_D^*/dL$ and retain the fixed wedge $2\sigma$.
\end{proposition}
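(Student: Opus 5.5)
The plan is to work directly with the residual volumes of Lemma~\ref{lem:queue_priority} and the revenue $\Phi_D(s;L)$ in~\eqref{eq:lp_free_entry_B}. First I will record the two partial derivatives that drive everything. On any segment where a residual $V_D$ is positive it is affine and strictly decreasing in $L$. Its $L$-slope is $-[\lambda(\mu_F+s)-\lambda_F]$ on the favored side and $-[\lambda(\mu_U+2\sigma+s)+\lambda_F]$ on the unfavored side. Both brackets are strictly positive for $s>0$, by the capacity landmarks of \S\ref{subsec:LMSR}. Hence $\partial\Phi_D/\partial L<0$ wherever $\Phi_D>0$. Moreover $\partial V_D/\partial L$ is strictly decreasing in $s$, because $\lambda$ is strictly increasing. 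Multiplying by the rents $s$ and $2\sigma+s$ gives $\partial^2\Phi_D/\partial s\,\partial L<0$ on every branch where the set of positive residuals is fixed. So depth both lowers revenue and tilts its slope against high premia, and I will use these two signs throughout.

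For (iii), take $L<L'$ and let $\hat s$ be the maximizer at $L'$. Then $\sup_s\Phi_D(s;L)\ge\Phi_D(\hat s;L)>\Phi_D(\hat s;L')$, by strict decrease of $V_D$ in $L$ whenever it is positive. This gives strict decrease of the supremum, and the entry set $\{L:\sup_s\Phi_D\ge\kappa\}$ is therefore an interval $(0,L^{det}]$. The common-markup claim is immediate, since both CLOB markups are written as $s$ and $2\sigma+s$ from the single committed premium.

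For (i), the depth-uniform shape restriction implies that, for each $L$ in the entry region, $\Phi_D(\cdot;L)$ is strictly increasing below its peak. The level $\kappa$ therefore has at most one crossing there. A crossing exists because $\Phi_D(0;L)\le\Phi_D(0)<\kappa$ (by Assumption~\ref{ass:low_fee}(ii) and the fact that residuals shrink with $L$), while $\sup\Phi_D\ge\kappa$. Since $\partial\Phi_D/\partial s>0$ on the increasing branch, the implicit function theorem gives continuity and the displayed formula for $ds^*_D/dL$ on segments with fixed positive residuals. The numerator is negative, so the slope is positive. Across kinks where a residual hits zero, $\Phi_D$ is still continuous, and I will argue monotonicity by a comparison: for $L<L'$, $\Phi_D(s^*(L);L')<\kappa$ forces $s^*(L')>s^*(L)$. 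For the limit at $L^{det}$, the root and the peak coalesce exactly when $\sup_s\Phi_D=\kappa$. The routing shares follow from $\alpha_D=1-V_D/(zH_D)$. To show this is increasing I need $V_D(s^*(L),L)/H(s^*(L))$ to fall. Direct $L$-depletion does this, and raising $s$ lowers the ratio too, because $H$ and the capacity gap move in opposite directions. The ratio equals one minus capacity over $H$, and capacity rises with $s$ while $H$ falls.

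For (ii), the maximizer $s^\ddagger(L)$ is unique by assumption, and it is continuous by Berge's theorem. At a regular interior peak the implicit function theorem applied to $\partial_s\Phi_D=0$ gives the displayed formula. Its sign is negative because $\partial^2\Phi_D/\partial s^2<0$ and the cross partial is negative, as shown above. Across kinks I will use a monotone-comparative-statics argument: $\Phi_D$ has strictly decreasing differences in $(s,L)$ on each branch, so the argmax is nonincreasing. For the routing shares, $\alpha_D$ now faces two opposing forces: depth raises it directly, while a falling $s$ raises $H$ and lowers the capacity gap. Differentiating $\log V_D$ and demanding a net decrease gives exactly condition~\eqref{eq:alpha_monotone_condition}, written per direction as the log-derivative of capacity minus the log-derivative of $H$, times $L|ds^\ddagger/dL|$. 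The main obstacle is this last step, together with the bookkeeping at regime switches where a residual hits zero. There I will treat each branch separately and glue the pieces using continuity of $\Phi_D$ and of its argmax.
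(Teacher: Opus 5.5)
Your plan follows the paper's proof almost step for step:
\begin{itemize}
\item the same two signs, $\partial\Phi_D/\partial L<0$ and $\partial^2\Phi_D/\partial s\,\partial L<0$, on each fixed residual branch;
\item the same comparison $\sup_s\Phi_D(s;L)\ge\Phi_D(\hat s;L)>\Phi_D(\hat s;L')$ for (iii);
\item the same inequality $\Phi_D(s_D^*(L_1);L_2)<\kappa$ for the Convention-I root, with the implicit function theorem on smooth segments;
\item logarithmic differentiation of the uncapped routing ratio for~\eqref{eq:alpha_monotone_condition}.
\end{itemize}
On the last point, the object to differentiate is $\log\alpha_D$, the log of LMSR capacity below the quote divided by $zH_D$. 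Differentiating $\log V_D$ by itself does not produce that condition.

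The one step that fails as written is the Convention-II passage across residual kinks. Strictly decreasing differences on each branch do not give a monotone argmax via Topkis. The branch regions are bounded by cutoffs that move left with $L$, so they are not sublattices of the $(s,L)$ plane. Decreasing differences also fail globally. Fix depths $L<L'$ and take $s$ between a side's cutoff at $L'$ and its cutoff at $L$. As $s$ rises across that interval, the revenue that side loses to the extra depth falls from a positive amount to zero. So the total loss from depth can be smaller at the higher premium, which is the opposite of what Topkis needs.

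Continuity of $s^\ddagger$ does not repair this by itself, because it says nothing about a peak that sits on, or rides along, a kink. The paper closes the gap by showing that each residual cutoff itself moves left with depth.

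A cleaner alternative, which also yields strict decrease, runs as follows.
\begin{itemize}
\item At a cutoff, the side's revenue term has left derivative equal to one half its rent times the strictly negative slope of the uncapped residual. Its right derivative is zero.
\item Hence every kink of $\Phi_D(\cdot;L)$ is convex, and a convex kink cannot host a maximizer.
\item So $s^\ddagger(L)$ always lies strictly inside a fixed branch. There the residuals are affine in $L$, so $\partial_s\Phi_D(s^\ddagger(L);L')=(L'-L)\,\partial^2\Phi_D/\partial s\,\partial L<0$ for nearby $L'>L$.
\item Single-peakedness at $L'$ then places $s^\ddagger(L')<s^\ddagger(L)$.
\item Continuity of $s^\ddagger$ turns this local decrease into a global one.
\end{itemize}
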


The mechanism has a common component and a maker-convention component.
Added LMSR depth steals CLOB business at every premium for which expected maker revenue is positive.
On a smooth branch with both residuals positive,
$$
\frac{\partial\Phi_D}{\partial L}
=-\tfrac12\Big[s\big(\lambda(\mu_F+s)-\lambda_F\big)+(2\sigma+s)\big(\lambda(\mu_U+2\sigma+s)+\lambda_F\big)\Big]<0,
$$
and
$$
\frac{\partial^2\Phi_D}{\partial s\,\partial L}
=-\Big[\lambda(\mu_F+s)+(\sigma+s)\lambda'(\mu_F+s)\Big]<0.
$$
If one residual is already zero, the same two signs hold after dropping that side's term.
Under Convention~I, the book moves right along the increasing branch to restore zero profit, so the premium \textit{rises}.
Under Convention~II, the book \textit{re-optimizes}, and the strictly negative cross-partial moves every regular peak left, so the premium \textit{falls} on each smooth branch.
The positive-part kinks are handled by Lemma~\ref{lem:queue_priority}: each residual cutoff itself moves left with depth.
See~\S\ref{App:proof_for_prop_binary_spillover} for the proof and the Figure~\ref{fig:spillover_effect} below for the spillover effect under two maker conventions.

\begin{figure}[H]
\centering
\begin{subfigure}{.5\textwidth}
  \centering
  \includegraphics[width=0.9\linewidth]{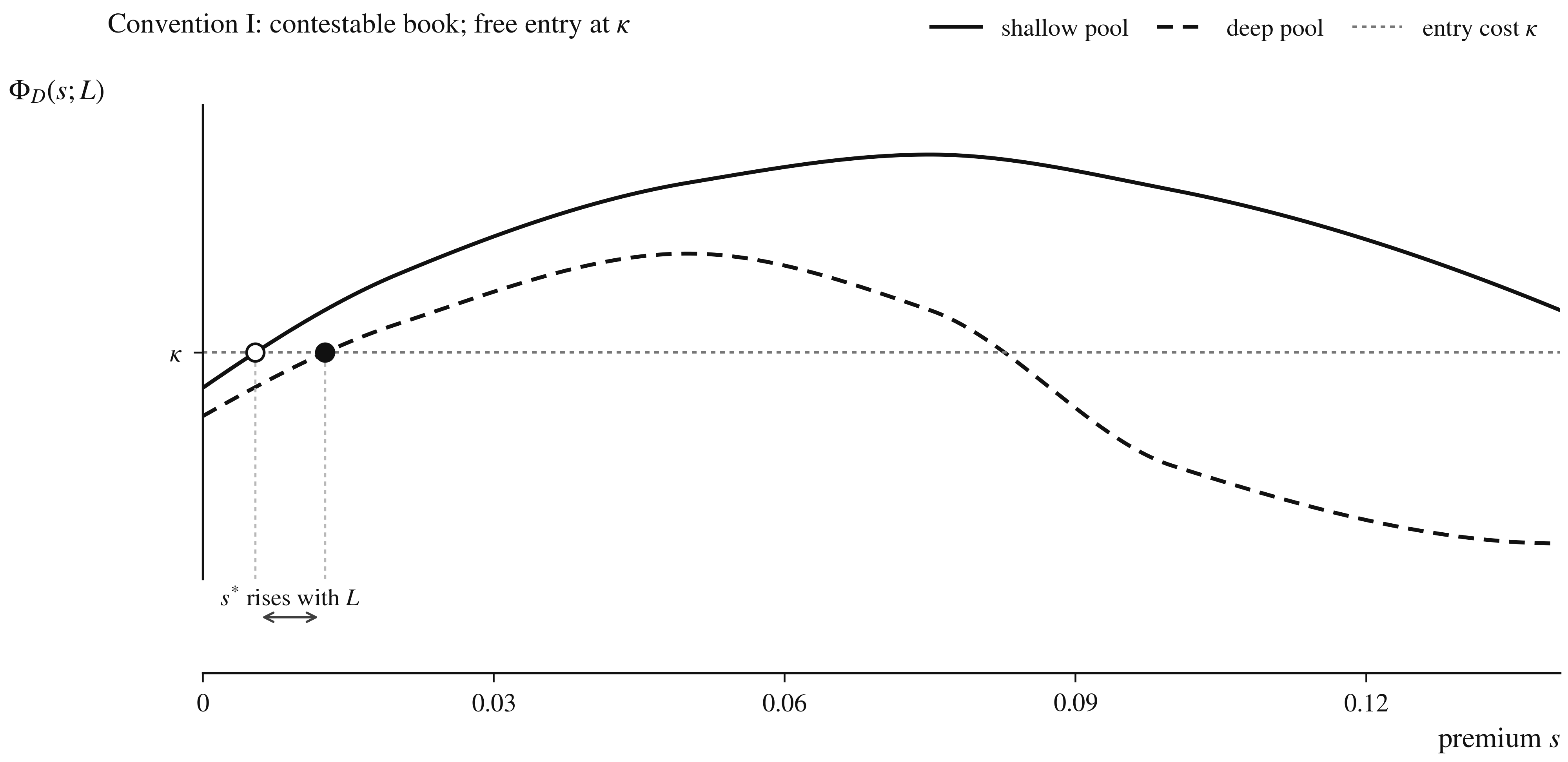}
  \caption{Positive Spillover - Maker Convention~I}
  \label{fig:sub1}
\end{subfigure}%
\begin{subfigure}{.5\textwidth}
  \centering
  \includegraphics[width=0.9\linewidth]{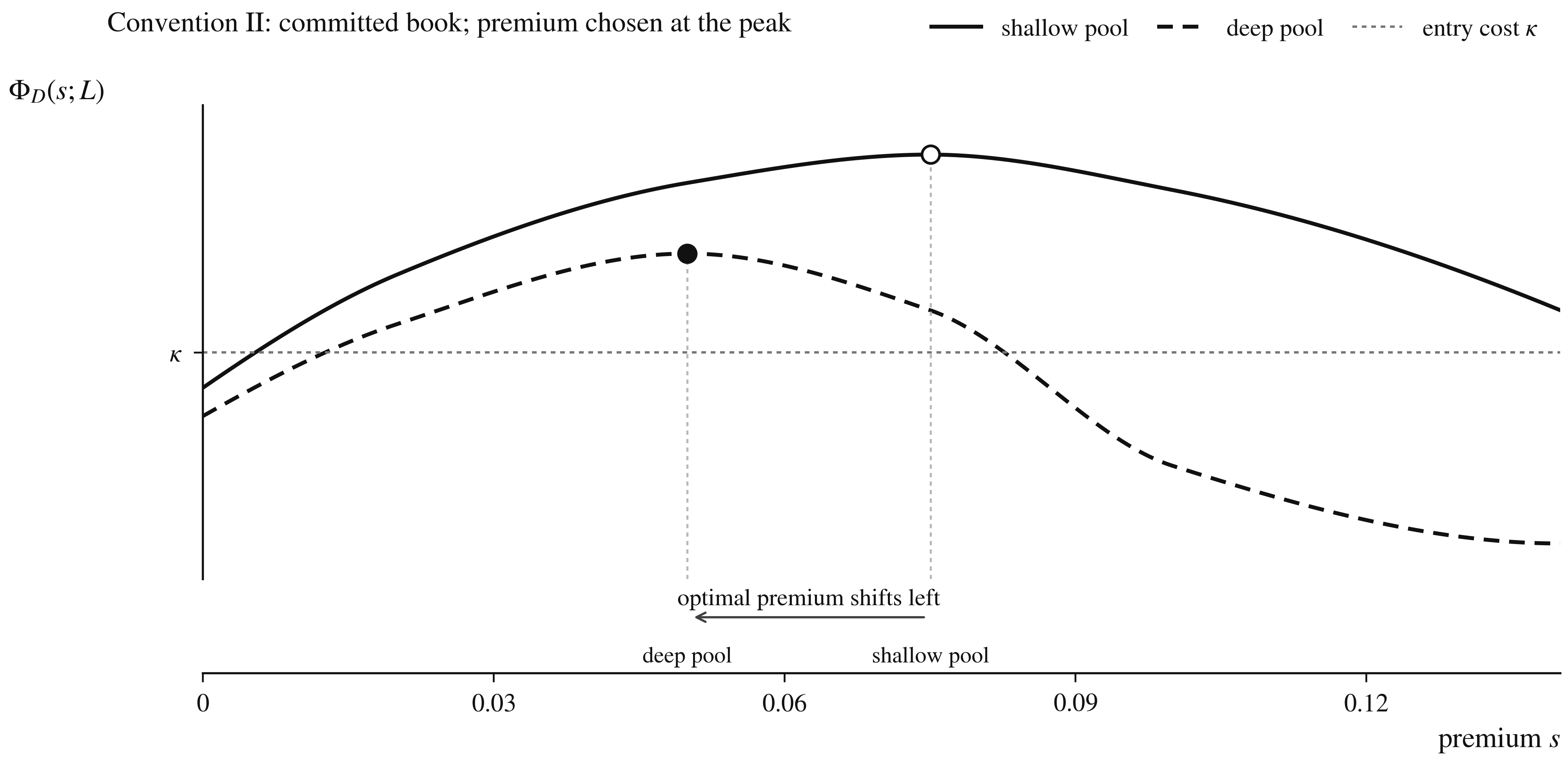}
  \caption{Negative Spillover - Maker Convention~II}
  \label{fig:sub2}
\end{subfigure}
\caption{Maker Convention~I v.s.~II: Positive v.s.~Negative Spillover}
\label{fig:spillover_effect}
\end{figure}

Depth does more than move the premium.
Proposition~\ref{prop:binary_spillover} shows that entry viability for CLOB makers shrinks monotonically with depth, while the positive parts in Lemma~\ref{lem:queue_priority} identify when each conditional CLOB residual disappears.
Proposition~\ref{prop:crowdout_deterrence} now records the side-specific crowd-out thresholds and the final entry boundary for the CLOB maker.

\begin{proposition}[Crowd-out and entry deterrence]
\label{prop:crowdout_deterrence}
Consider an equilibrium pair $(s_D^*(L),L)$ in the Convention-B entry region of Proposition~\ref{prop:binary_spillover}.

\smallskip
\noindent(i) \emph{(One-sided crowd-out.)}
At any fixed premium $s>0$, the residuals in~\eqref{eq:residual_volumes} vanish at
$$
L^{co}_U(s)=\frac{zH_U(2\sigma+s)}{\lambda(\mu_U+2\sigma+s)+\lambda_F},
\qquad
L^{co}_F(s)=\frac{zH_F(s)}{\lambda(\mu_F+s)-\lambda_F}.
$$
For every $\sigma>0$ and $s\in(0,\bar s]$ with $H_F(s)>0$, we have $L^{co}_U(s)<L^{co}_F(s)$.
At an equilibrium pair satisfying
$$
L^{co}_U(s_D^*(L))\le L<L^{co}_F(s_D^*(L)),
$$
the unfavored-direction state clears entirely on the LMSR while the favored-direction state retains positive CLOB volume.

\smallskip
\noindent(ii) \emph{(Entry deterrence.)}
Define the boundary of the entry interval identified in Proposition~\ref{prop:binary_spillover} by
$$
L^{det}\equiv\inf\left\{L\in(0,\bar L):\sup_s\Phi_D(s;L)<\kappa\right\}.
$$
If this set is nonempty, then no maker enters for $L>L^{det}$, and the LMSR clears both active-direction states at the unconstrained displacements characterized in Lemma~\ref{lem:queue_priority}.
If the unique unconstrained favored displacement at $L^{det}$ exceeds $s^\ddagger(L^{det})$, which is the last coexistence premium under either maker convention, the removal of the CLOB ceiling creates an upward jump in the LP's value at the entry boundary.
Whenever the best monopoly payoff above $L^{det}$ exceeds the best coexistence payoff below it, the Stackelberg LP optimally deters maker entry.
\end{proposition}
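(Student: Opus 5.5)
Part (i) reduces to comparing the two crowd-out thresholds read off Lemma~\ref{lem:queue_priority}; part (ii) combines the entry-interval monotonicity of Proposition~\ref{prop:binary_spillover}(iii) with a comparison of LP payoffs on either side of $L^{det}$.

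\emph{Part (i).} The thresholds come from setting each bracket in~\eqref{eq:residual_volumes} to zero. We need both logit increments positive. For the favored one, $\lambda(\mu_F+s)-\lambda_F>0$ for $s>0$ because $\lambda$ is strictly increasing. For the unfavored one, the footnote to~\eqref{eq:p_indifference} gives $\lambda(\mu_U+w_\tau)=-\lambda_F$. Assumption~\ref{ass:low_fee}(i) gives $w_\tau<2\sigma<2\sigma+s$, so $\lambda(\mu_U+2\sigma+s)+\lambda_F>0$.

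The comparison $L^{co}_U(s)<L^{co}_F(s)$ then follows from two inequalities that point the same way.
\begin{itemize}
\item \emph{Numerators.} Since $s\le\bar s$, Assumption~\ref{ass:tail_demand_distribution}(iii) gives $H_U(2\sigma+s)\le H_F(s)$.
\item \emph{Denominators.} Under $p_0=\tfrac12$ we have $\mu_U+2\sigma=\mu_F$, so the unfavored denominator is $\lambda(\mu_F+s)+\lambda_F$. This exceeds $\lambda(\mu_F+s)-\lambda_F$ by $2\lambda_F>0$.
\end{itemize}
If $H_U(2\sigma+s)>0$, a weakly smaller numerator over a strictly larger denominator gives the strict inequality. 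If $H_U(2\sigma+s)=0$, then $L^{co}_U=0<L^{co}_F$ because $H_F(s)>0$.

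The one-sided crowd-out statement then follows directly: on the stated band, $V_U=0$ and $V_F>0$, evaluated at $s=s^*_D(L)$. Lemma~\ref{lem:queue_priority} already describes the full LMSR clearing of the unfavored state.

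\emph{Part (ii).} Proposition~\ref{prop:binary_spillover}(iii) says $\sup_s\Phi_D(s;L)$ is strictly decreasing, so the entry set is an interval $(0,L^{det}]$. Hence no maker enters for $L>L^{det}$. With $V_F=V_U=0$, Lemma~\ref{lem:queue_priority}'s unconstrained clearing equations $zH_D(y)=L[\cdots]$ determine the terminals. Their unique roots follow from the monotonicity argument already used for~\eqref{eq:p_indifference}, with the CLOB ceiling removed.

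For the jump, compare the LP payoff~\eqref{eq:lp_payoff} as $L\uparrow L^{det}$ with its value at $L^{det+}$. Below the boundary, the favored recovery integral is truncated at the ceiling $s^\ddagger(L^{det})$; by Proposition~\ref{prop:binary_spillover}(i) both maker conventions converge to this premium. Above the boundary, the terminal is the unconstrained root $\hat y_F>s^\ddagger(L^{det})$. The integrand $y\,d\lambda(\mu_F+y)$ is strictly positive, so the recovery gains at least $\tfrac{L}{2}\int_{s^\ddagger}^{\hat y_F}y\,d\lambda(\mu_F+y)>0$. The same argument applied to the unfavored leg gives a weakly larger value. Meanwhile the $c$-loss $\ell_cL$ is continuous in $L$, so the payoff jumps upward at the boundary.

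The Stackelberg clause is then definitional. The LP moves first and anticipates the entry response, so it compares $\sup_{L\le L^{det}}$ of the coexistence payoff with $\sup_{L>L^{det}}$ of the monopoly payoff and picks the larger.

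\emph{Main obstacle.} The hardest step is making the jump rigorous, specifically the left limit of the LMSR terminal. Under Convention~B, when $V_F>0$ the LMSR is pinned exactly at the CLOB quote, and this is what gives the truncated integral. I would first verify that near $L^{det}$ the favored residual is still positive. If instead $V_F$ had already reached zero before $L^{det}$, there would be no jump on that leg. This is exactly why the statement conditions on $\hat y_F>s^\ddagger(L^{det})$.
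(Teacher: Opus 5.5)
Your proposal is correct and follows the paper's proof essentially step for step: the same threshold comparison (weakly smaller numerator by the tail ordering, denominator larger by $2\lambda_F$), the same monotone entry interval, and the same one-sided-limit comparison of the coexistence and monopoly payoffs, with the common-flow loss cancelling. Two small corrections: for Convention~II the left limit $s^\ddagger(L)\to s^\ddagger(L^{det})$ comes from continuity of the peak in Proposition~\ref{prop:binary_spillover}(ii), not part~(i); and if you write the coexistence terminal as $\min\{\widehat y_F(L),s_D^*(L)\}$, as the paper does, your ``main obstacle'' is resolved directly by continuity of both roots.
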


Part~(i) follows directly from Lemma~\ref{lem:queue_priority}.
The two CLOB ceilings coincide in absolute price because $\mu_U+2\sigma+s=\mu_F+s$, while their LMSR capacities differ by the common-flow offset $2\lambda_F$ and their eligible masses satisfy the tail ordering in Assumption~\ref{ass:tail_demand_distribution}(iii).
The unfavored residual therefore vanishes first at every fixed premium in the coverage range.
Economically, the unfavored LMSR path starts near its posterior while the CLOB quote sits a full $2\sigma+s$ away, so relatively shallow depth can absorb all CLOB-eligible unfavored flow.

Part~(ii) uses Proposition~\ref{prop:binary_spillover}(iii), which makes the peak entry value strictly decreasing in depth and therefore makes the maker-entry set an interval.
While the CLOB remains active, its committed quote caps the LMSR terminal displacement.
Deterring entry removes that cap and lets the LMSR clear at the unconstrained displacement from Lemma~\ref{lem:queue_priority}.
See~\S\ref{App:proof_for_crowdout_deterrence} for the proof and Figure~\ref{fig:LMSR_rev_jump} for a plot on the LMSR LP's payoff curve on the coexistence and~(possible) monopolistic equilibrium path under maker Convention~I. 

Figure~\ref{fig:LMSR_rev_jump} is one payoff surface with two branches, separated by whether maker's quotes present. 
Below $L^{\text {det }}$ a maker enters and its committed quote caps the pool's terminal displacement at $s^*$, truncating the recovery band in both active-direction states; the solid lower branch is the pool's payoff under that cap, and it rises with depth because the Convention-I premium rises with depth and widens the band the pool is permitted to fill. 
The dashed continuation is the counterfactual where the same depth would earn with no maker standing, therefore the vertical distance between the two curves measures what the CLOB ceiling costs the provider. 
At the first jump, makers shift from holding both side of the event to just one-side; then at $L^{\text {det }}$ the entry condition $\sup _s \Phi_D(s ; L)<\kappa$ binds, the book exits, the ceiling is removed, and the payoff jumps.
Above the boundary the pool clears both states at unconstrained displacements, and the branch is single-peaked for the usual reason: marginal recovery falls as added capacity fills at prices nearer the posterior, while the marginal adverse-selection loss $\ell_c$ on common flow is constant. 
The maximum sits strictly to the right of $L^{\text {det }}$ and the arrow is a discontinuous jump in chosen depth. 

None of this ranks the two venue forms.
Deterrence requires a conjunction that Proposition~\ref{prop:crowdout_deterrence} (ii) states as hypotheses rather than derives, such that the unconstrained favored displacement at $L^{\text {det }}$ must exceed the last coexistence premium, so that lifting the ceiling actually raises the terminal, and the best monopoly payoff must exceed the best coexistence payoff. 
The model hands the maker a rather thin instrument, one committed two-sided quote at a single premium with no choice to revise quotes as the state changes.
Books are prevalent in practice largely because they grant makers precisely that expressivity, and this model suppresses it in order to isolate the depth spillover - which means the absence of a book above $L^{\text {det }}$ is a statement about the model's maker, not about market makers in practice. 
What the figure establishes is that, a Stackelberg liquidity provider facing a capacity-constrained book may, on part of the primitive space, prefer to deepen past the maker entry boundary rather than share the flow. 

\begin{figure}[H]
    \centering
    \includegraphics[width=0.9\linewidth]{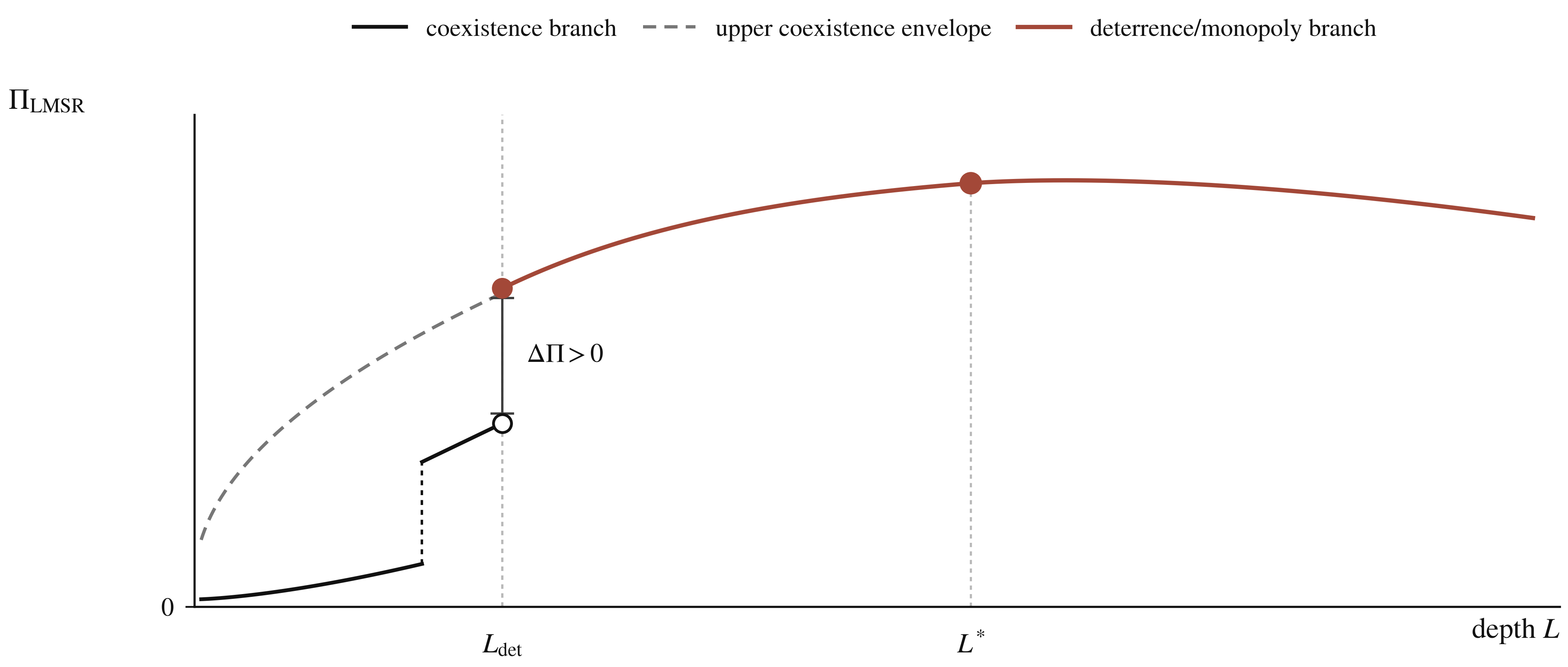}
    \caption{LMSR LP's Equilibrium Payoffs}
    \label{fig:LMSR_rev_jump}
\end{figure}

Two closing remarks scope the pair of conventions.
First, the observable content differs sharply: Convention~A predicts that when pool depth $L$ does not affect CLOB maker's tail-demand harvesting, their inventory is two-sided when volatility is low and becomes one-sided when volatility is high; while Convention~B predicts depth-driven premium movement---positive co-movement under contestable making, compression under a committed maker---together with migration of maker volume to the AMM and possible full crowd-out of CLOB makers.
Second, the clientele differs: under A the $p$-traders with valuation below quote trade the LMSR path and makers capture rents from the ones with valuation above it, while under B, CLOB becomes a outside option for $p$-traders and they only choose CLOB when the marginal price at LMSR exceeds the CLOB quote.

\section{Three-Outcome Extension}
\label{sec:3_outcome}

The binary model explained who finances a committed book; with more outcomes the question becomes what a committed book \emph{sells}, and the data have already constrained the answer.
Two measured facts from \S\ref{sec:stylized_facts}, documented in \S\ref{app:empirical_spread}, discipline this section.
First, completing an outcome set on separate binary books costs more than a dollar, the excess grows in the outcome count ($+0.060$ per unit of $\log n$, time-weighted over each book's life), and it is almost entirely ($96\%$) per-leg markup: it appears identically on the YES and NO sides of the same events, roughly half of it is reproduced by bundles of unrelated binary markets that share no event structure at all, and it vanishes at execution size.
Whatever the multi-outcome book is doing, it attaches a committed premium to each leg \emph{separately}---a premium that accumulates across the set whether or not the legs form a coherent event.
Second, the books' displayed centers sum to one dollar on average over a book's life, yet drift above par by $+0.02$ to $+0.08$ per $\log n$ whenever books are compared at the same age---most at long horizons, least, though still present, inside the final day.
So the \emph{level} of the display is not pinned by trade: it is exactly right on average, predictably wrong at any given moment, and corrected only where correcting is cheap.
This section derives both properties from the committed book's primitives---the per-leg premium from the switch quotes of \S\ref{subsec:3o_info} (equation~\eqref{eq:switch_prices_and_premium}), the unpinned level from the gauge of \S\ref{subsec:3o_beliefs}, and the cost of tethering that level from the balance sheet of \S\ref{subsec:3o_balance}---and shows that a single cost-function potential removes both at the source (Proposition~\ref{prop:self_collateral}(i)).

With three or more outcomes, binary complementarity no longer holds: in a match with outcomes ``home win,'' ``draw,'' and ``away win,'' the claim opposite to a home win is not one outcome but two, so buying NO on the home side says nothing about draw versus away.

The atomic belief object in a multi-outcome market is a pairwise preference.
To trade one, we define the switch---buy the favored claim, sell the unfavored one---and a maker who serves switches must commit a vector of primitive asks and bids across three books.
The CLOB maker commits a vector of bid-ask pairs and the LMSR still commits a potential $L$; and the gap between the two shows up as escalating costs for the book, each with an elementary proof and each vanishing at $n=2$, which is why the binary model could not see them.
\S\ref{subsec:3o_beliefs} isolates the common coordinate of the committed vector and shows that no trade prices it: the displayed level is unidentified, stale wherever a normalization is imposed, and left free even by competition, which pins the premium and nothing else.
\S\ref{subsec:3o_balance} then asks what capital stands behind that display, and finds the book short on two counts: its level can be relocated with no bet placed, and its per-book collateral exceeds the LMSR's event-level requirement by a factor $\Theta(n/\log n)$ as the outcome space grows.

We do not re-derive the coexistence equilibrium at $n\ge3$ under Convention~B:
The only equilibrium objects the section consumes are the no-pickoff floor and the committed premium $s^*$ of Convention-A makers, pinned by one free-entry equation in~\S\ref{subsec:3o_info}.
Throughout, LMSR depth is taken as strictly positive and privately optimal---the three-outcome counterpart of Assumption~\ref{ass:lmsr_recovery_boundary}, under which the LP's recovery from tail demand covers its adverse-selection cost at some interior $L$---so that the pool is present to absorb Stage-1 flow.

\subsection{State Space, Information, and Committed Switch Quotes}
\label{subsec:3o_info}

Let $X_1,X_2,X_3$ be Arrow--Debreu claims with uniform prior.
The symmetric shock favors one outcome and splits its mass evenly across the other two: the post-shock value of the favored outcome is $\mu_F^{(3)}=\tfrac13+\sigma$ and each of the two unfavored outcomes is worth $\mu_U^{(3)}=\tfrac13-\tfrac{\sigma}{2}$, with $\mu_F^{(3)}+2\mu_U^{(3)}=1$; the superscript distinguishes these from the binary values of~\S\ref{sec:market_info}.
This is the direct three-outcome version of the binary posterior move: one coordinate rises by $\sigma$, the other two fall equally, and no realization ever separates the two unfavored outcomes from each other.

\paragraph{Switches and Committed Quotes.}
With three or more outcomes, the most atomic directional view a trader can hold is ``outcome $k$ is more likely than outcome $j$.''
The trade that expresses it is a switch $k\to j$: buy $X_k$ and sell $X_j$.
If the CLOB posts asks and bids $(A_i,B_i)$ on the three books, define the ask and bid markups relative to a benchmark belief $r$ by $a_{i,r}\equiv A_i-r_i$ and $b_{i,r}\equiv r_i-B_i$.
The price of executing a switch, and its premium over the benchmark gap, are
\begin{equation}
\label{eq:switch_prices_and_premium}
S^{kj}=A_k-B_j,\qquad
s^{kj}(r)=S^{kj}-(r_k-r_j)=a_{k,r}+b_{j,r}.
\end{equation}
When the benchmark is clear, we suppress the subscript $r$.

The book cannot choose the six switch prices independently: it posts only three asks and three bids, and every switch price is one ask plus one bid.
The following lemma states exactly how much freedom this leaves.

\begin{lemma}[Separability of committed switch premia]
\label{lem:separability}
Fix a benchmark belief $r$.
A premium matrix $(s^{kj})_{k\ne j}$ is algebraically implementable by primitive quote legs if and only if there exist real markups $(a_i,b_i)_{i=1}^3$ such that $s^{kj}=a_k+b_j$ for every $k\ne j$.
For three outcomes, additive separability is equivalent to the cycle identity
$$
s^{12}+s^{23}+s^{31}
=
s^{21}+s^{32}+s^{13}.
$$
Equivalently, the executable switch prices satisfy
$$
S^{12}+S^{23}+S^{31}
=
S^{21}+S^{32}+S^{13}.
$$
Whenever these conditions hold, the primitive levels
$$
A_i=r_i+a_i,
\qquad
B_i=r_i-b_i
$$
reproduce all six switch prices algebraically.
They form a feasible committed CLOB quote vector if and only if the decomposition admits a common gauge shift $c$ satisfying $0 \le r_i-b_i+c \le r_i+a_i+c \le 1$ for every $i$.
Thus separability characterizes algebraic implementation, while primitive quote feasibility is an additional restriction.
\end{lemma}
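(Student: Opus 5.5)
The lemma is linear algebra on the map $(a,b)\in\mathbb R^6\mapsto (a_k+b_j)_{k\ne j}\in\mathbb R^6$, followed by a separate feasibility check. The plan has three steps: the definitional equivalence, the cycle-identity characterization, and the feasibility statement.

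\emph{Step 1: the definitional equivalence.} Implementability means the premium matrix is realized by some primitive quotes $(A_i,B_i)$ through~\eqref{eq:switch_prices_and_premium}. That display gives $s^{kj}(r)=a_{k,r}+b_{j,r}$ with $a_{i,r}=A_i-r_i$ and $b_{i,r}=r_i-B_i$. So any implementable matrix is additively separable with these markups.

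Conversely, given real $(a_i,b_i)$ with $s^{kj}=a_k+b_j$, set $A_i=r_i+a_i$ and $B_i=r_i-b_i$. Then $S^{kj}-(r_k-r_j)=a_k+b_j=s^{kj}$, which also proves the reproduction claim. The price form of the cycle identity follows by adding $\sum_{\text{cycle}}(r_k-r_j)=0$ to both sides, since this sum telescopes around either orientation of the 3-cycle.

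\emph{Step 2: the cycle identity.} Necessity is immediate. Each side of $s^{12}+s^{23}+s^{31}=s^{21}+s^{32}+s^{13}$ equals $\sum_i a_i+\sum_i b_i$ under separability.

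For sufficiency I would count dimensions. The kernel of the linear map consists of $(a,b)$ with $a_k=-b_j$ for all $k\ne j$. With three outcomes this forces $a_1=-b_2=a_3=-b_1=a_2$, so the kernel is the one-dimensional gauge $(a_i+c,\,b_i-c)$. The image therefore has dimension $5$. It lies inside the hyperplane cut out by the cycle identity, which also has dimension $5$, so the two coincide.

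Alternatively, I could solve explicitly with the normalization $b_1=0$. This gives $a_2=s^{21}$, $a_3=s^{31}$, $b_2=s^{12}-a_1$, $b_3=s^{13}-a_1$, $b_3=s^{23}-a_2$, and $b_2=s^{32}-a_3$. Eliminating yields $a_1=s^{13}-s^{23}+s^{21}$. The remaining equation is then consistent exactly when the cycle identity holds. Writing this out is the main bookkeeping step, and I would present the dimension count and use the explicit solve only as a check.

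\emph{Step 3: feasibility.} All separable decompositions of a given matrix differ only by the gauge $(a_i+c,\,b_i-c)$, from the kernel computation. Under the gauge, the levels shift to $A_i+c$ and $B_i+c$. A committed quote vector requires prices in $[0,1]$ with bid below ask, so feasibility is exactly the existence of a $c$ satisfying the displayed chain of inequalities.

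This condition is an intersection of intervals in $c$, and it can fail. For example, a large total spread $a_i+b_i>1$ cannot be fixed by any shift. That failure is what shows separability alone does not guarantee primitive quote feasibility.
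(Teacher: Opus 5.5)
Your proof is correct. Your necessity argument, the telescoping step for the price form, and the reproduction check match the paper's.

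The difference lies in sufficiency and in the feasibility clause. The paper proves sufficiency only by explicit construction. It normalizes $a_1=0$ and writes closed-form markups: $b_2=s^{12}$, $b_3=s^{13}$, $a_2=s^{23}-s^{13}$, $b_1=s^{21}-s^{23}+s^{13}$, $a_3=s^{31}-s^{21}+s^{23}-s^{13}$. It then verifies five of the six equations directly and obtains $a_3+b_2=s^{32}$ from the cycle identity. This is exactly your ``check,'' with $a_1=0$ in place of your $b_1=0$.

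Your primary argument is rank--nullity instead. The map $(a,b)\mapsto(a_k+b_j)_{k\neq j}$ has the one-dimensional gauge as its kernel, so its image is a $5$-dimensional subspace of the $5$-dimensional cycle hyperplane, and the two coincide. This buys two things the construction does not.
\begin{itemize}
\item It explains why a single identity suffices exactly at $n=3$. For $n\ge4$ the kernel is still one-dimensional, but the image has codimension $n^2-3n+1>1$, so more identities are needed.
\item The same kernel computation closes the feasibility clause. It shows that every quote vector implementing the given premia is a common translate of the constructed one, so searching over the gauge shift $c$ is exhaustive. The paper's proof of that clause only observes that common shifts preserve switch prices. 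It establishes exhaustiveness separately, in Lemma~\ref{lem:information_defect}.
\end{itemize}

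In return, the paper's route gives concrete formulas for an implementing decomposition, which is what one needs to actually post quotes.
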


Lemma~\ref{lem:separability} is a leg-consistency result rather than a complete quote-feasibility or no-arbitrage condition.
The cycle identity determines whether the six switch prices can arise from one collection of primitive ask and bid legs.
The additional inequalities in the lemma determine whether that algebraic representation can be placed inside the feasible CLOB quote region.
The symmetric quotes constructed below satisfy both restrictions automatically.
See~\S\ref{App:proof_for_separability} for the proof.

\paragraph{Stage-1 Landmarks.}
Stage~1 works as in the binary model and all $c$-traders buy in the favored direction until the fee-adjusted price reach public posterior; we record two facts that the rest of the section uses repeatedly.
By the no-pickoff logic of Lemma~\ref{lem:no_pickoff}, committed quotes must survive every realization of the shock, so informed $c$-traders find nothing profitable resting on the book and trade the favored claim against the LMSR, filling it until its fee-adjusted favored price reaches $\mu_F^{(3)}$.

From a uniform start the LMSR's favored price after displacement $\beta$ is $p_F(\beta)=e^{\beta/L}/(e^{\beta/L}+2)$, so the absorbed flow is $\beta^*(L)=L\lambda_F^{(3)}$ with
$$
\lambda_F^{(3)}\;=\;\ln\frac{2(1-\tau)\,\mu_F^{(3)}}{\,1-(1-\tau)\,\mu_F^{(3)}\,},
$$
strictly positive if and only if $\tau<\tfrac{3\sigma}{1+3\sigma}$.
This is the three-outcome version of~\eqref{eq:c_capacity} and of Assumption~\ref{ass:low_fee}(i).
Since the Stage-1 flow trades only the favored coordinate, so the two unfavored inventories---and hence the two unfavored raw prices---remain exactly equal in every realization.

\paragraph{Which directions attract tail demand.}
$p$-traders have no information and see only prices, so they can form views only about comparisons that prices have made visible.
After Stage~1, prices separate the favored outcome from each unfavored one by a gap of $\mu_F^{(3)}-\mu_U^{(3)}=\tfrac32\sigma$, while the two unfavored claims still trade at identical prices on the pool.
The zero tail-demand for the unfavored pair comply with the no information, no tail-demand formulation in~\S\ref{sec:market_info}. 

For the active direction after the common shock, we only assume that tail-demand exist on the direction of unfavored to favored realization.
The Stage-2 primitive is therefore: each of the six ordered directions hosts a continuum of mass $\tfrac12 z^{(3)}$ in expectation, active only in the realization whose price move points its way, so exactly two directions are live per realization, favored to unfavored into each unfavored claim.

Two clarifications keep the choice in proportion.
First, it restricts who shows up, not what is tradable: the book quotes all six directions in every realization regardless, since posting any ask and any bid prices their combination the ladder below is a fact about quotes and the primitive decides only where volume sits.
Second, the demand primitive is not what generates this section's three-outcome results.
The gauge, the belief bound, and the collateral comparison of \S\ref{subsec:3o_beliefs}--\S\ref{subsec:3o_balance} are statements about the committed quote vector, and their hypotheses contain no demand object: $z^{(3)}$ and $H^{(3)}$ enter only through the free-entry condition that prices the commitment.
What changes at $n\ge3$ is the commitment technology, and that change is in place before any trader arrives.

\paragraph{Valuations on a Live Pair.}
A switch pays $+1$ if $k$ occurs, $-1$ if $j$ occurs, and $0$ otherwise, so its fair value is the probability gap $\mu_k-\mu_j$---and the pair's total probability is not $1$ but $M\equiv\mu_F^{(3)}+\mu_U^{(3)}=\tfrac23+\tfrac{\sigma}{2}$, the same constant in every realization because the shock treats outcomes symmetrically.
We handle this in the most direct way.
The live trader takes the third outcome's probability as displayed---prices gave her no reason to hold a view on it---which fixes the pair's mass at $M$ and makes her private view a single number: a conditional probability $\pi\in[0,1]$ that the favored leg wins, given that one of the pair does.
Her valuation of the switch is then $m=M(2\pi-1)$, which ranges over $[-M,M]$ and is centered at the realized gap $2\mu_F^{(3)}-M=\tfrac32\sigma$.
We impose clauses (i)--(ii) of Assumption~\ref{ass:tail_demand_distribution} on the law of $m$---unimodal, log-concave, $C^1$ density, peaked at $\tfrac32\sigma$, the same law in every realization---and write the single tail
$$
H^{(3)}(y)\;\equiv\;\Pr\!\big(m\ \ge\ \tfrac32\sigma+y\ \big|\ \sigma_c\big),
\qquad y\ge0,
$$
with support ceiling $\bar y^{(3)}\le M-\tfrac32\sigma=\tfrac23-\sigma$.
This matches the binary ceiling $\bar y_F\le\tfrac12-\sigma$, with the pair mass $M$ in place of total mass $1$.
The map from $\pi$ to $m$ is affine, so it makes no difference whether the shape conditions are stated on $\pi$ or on $m$; we use gap units throughout because the quotes and the survival floors of the quoting problem below are all in those units.

\begin{assumption}[Three-outcome viability]
\label{ass:low_fee_3o}
The primitives satisfy
(i) $\tau<\dfrac{3\sigma}{1+3\sigma}$;~(ii) $0<\kappa<\tfrac13\,z^{(3)}\sup_{s\ge0}sH^{(3)}(s)$;
(iii) the selected root $s^*$ of~\eqref{eq:lp_free_entry_3o} satisfies $s^*<\bar y^{(3)}$ and $s^*+\tfrac32\sigma\le\tfrac23$.
\end{assumption}

Here (i) keeps the $c$-trader capacity positive, (ii) says the posting cost is low enough that entry is profitable at some premium, and (iii) keeps the quotes inside $[0,1]$ and the traded markup inside the support of demand---the mirror of Assumption~\ref{ass:low_fee}.

\paragraph{The Maker's Quoting Problem.}
Two requirements decide the maker's quotes.
First, no quote may sit inside the range of possible posteriors.
Second, she must recover the posting cost $\kappa$ from tail-demand flow, just as in the binary model.
The first requirement puts a floor under each leg; the second pins the total premium; and nothing else about the six numbers affects her profit, for the following reason.
On the equilibrium path nothing executes against a single leg: $c$-traders find no surplus at surviving quotes, and $p$-traders trade pairs.
So profit depends on the quote vector only through the switch prices.
We accordingly let a maker enter by posting the two-sided quotes of one book at cost $\kappa$; the full six-quote vector is three books and rescales the cost to $3\kappa$, changing nothing.
This accounting presumes both legs of a switch execute against quotes of the same posted collection; \S\ref{subsec:3o_beliefs} shows what routing across makers does and does not discipline.

Because the shock treats the three outcomes symmetrically and every live direction faces the same demand law at the same realized gap, competitive makers that have to quote all six directions drives all live directions to a common markup.
Write the pair's committed premium as $a+b=s+\tfrac32\sigma$, so that $s$ is the markup a live direction pays over its realized gap.
All expected traded premium should cover the posting cost of all three books, thus the free entry conditions gives:
\begin{equation}
\label{eq:lp_free_entry_3o}
\Phi^{(3)}(s)\;\equiv\;6 \; \frac{1}{3}\; \frac{z^{(3)}}{2}\,s\,H^{(3)}(s)\;= z^{(3)}\,s\,H^{(3)}(s) =\;3\kappa .
\end{equation}
In words, out of six possible trading pairs, each pair earn $\frac{z^{(3)}}{2}\,s\,H^{(3)}(s)$ expected profit and together they need to cover $3\kappa$ in total in expectation.
At $s=0$ revenue is zero, therefore the market makers always rest on choosing the smallest root $s^*$ that covers quoting cost $\kappa$. 

\paragraph{One Premium, Three Realized Markups.}
Free entry pins $a^*+b^*=s^*+\tfrac32\sigma$, and the book posts this same number on all six directions.
What differs across directions is the gap each one has in the realized state: $\tfrac32\sigma$ for favored$\to$unfavored, $0$ for the unfavored pair, $-\tfrac32\sigma$ for unfavored$\to$favored.
Subtracting, the realized markups form the ladder
$$
\underbrace{\,s^*\,}_{F\to U}\;<\;\underbrace{\,s^*+\tfrac32\sigma\,}_{U\leftrightarrow U}\;<\;\underbrace{\,s^*+3\sigma\,}_{U\to F},
$$
and only the lower rung trades for simplicity.
The $\tfrac32\sigma$ built into the committed premium plays the role the floor $2\sigma$ played in the binary unfavored markup $s^*+2\sigma$.
The difference between what the book posts and what it trades is itself observable: posted premia take three values, transactions concentrate on one.
The premium $s^*$ and its ladder are all this section takes from equilibrium; see the Figure~\ref{fig:realized_markups} for a illustration of the three realized markup.

\begin{figure}[H]
    \centering
    \includegraphics[width=0.8\linewidth]{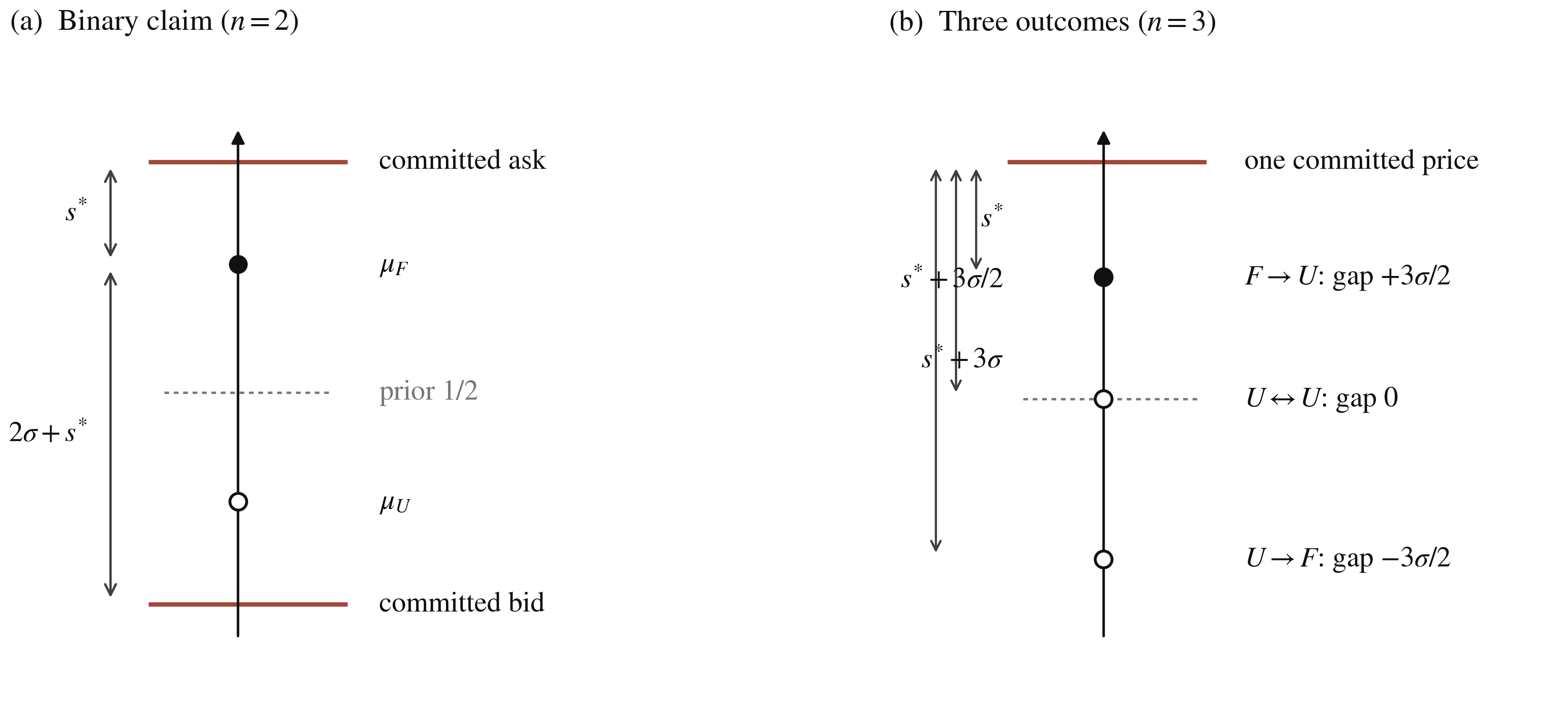}
    \caption{Realized Markup: Binary v.s.~Three-outcome}
    \label{fig:realized_markups}
\end{figure}

\subsection{Beliefs: the Unpriced Coordinate}
\label{subsec:3o_beliefs}

A prediction market's product is the probability itself, so the first question for any mechanism is whether its quotes identify one.
For the committed book, trading splits the vector in two: every transaction is a switch, a switch price is a difference of two quotes, so trades price the differences and never touch the common level---and the level is exactly where the probability lives.
This subsection follows that unpriced coordinate through three results: it is unidentified (Lemma~\ref{lem:information_defect}), stale wherever identified (Corollary~\ref{cor:belief_error}), and---once makers may compete---pinned by entry in its premium but never in its level (Theorem~\ref{thm:no_quoting_equilibrium}).
Whether anything backs that level is a balance-sheet question, and \S\ref{subsec:3o_balance} takes it up.

\begin{lemma}[Leg gauge and the indeterminacy of CLOB-implied beliefs]
\label{lem:information_defect}
Fix a benchmark belief $r$ and suppress the subscript.
Two committed quote vectors produce the same six executable switch prices---hence the same premia against every benchmark---if and only if they differ by a uniform translation of all quotes, $A_i\mapsto A_i+c$, $B_i\mapsto B_i+c$; in markups,
$$
a_i\mapsto a_i+c,\qquad b_i\mapsto b_i-c,
$$
for any $c$ such that every shifted quote remains feasible.
The switch-price matrix therefore identifies the committed quotes only up to this one-parameter family.
\end{lemma}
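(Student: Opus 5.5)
We argue directly from the switch-price definition $S^{kj}=A_k-B_j$ in~\eqref{eq:switch_prices_and_premium}, treating the statement as a kernel computation for the linear map from quote vectors $(A_1,A_2,A_3,B_1,B_2,B_3)\in\mathbb R^6$ to the six switch prices $(S^{kj})_{k\ne j}$. The \emph{if} direction is immediate. Under $A_i\mapsto A_i+c$ and $B_i\mapsto B_i+c$, each $S^{kj}$ changes by $c-c=0$. Because $s^{kj}(r)=S^{kj}-(r_k-r_j)$ and $r$ is held fixed, every premium against every benchmark is also unchanged. Since $a_i=A_i-r_i$ and $b_i=r_i-B_i$, the same shift reads $a_i\mapsto a_i+c$, $b_i\mapsto b_i-c$ in markups. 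Feasibility of the shifted vector is not implied, which is why the statement restricts $c$ to shifts that keep every quote in the feasible region. This matches the common gauge shift already appearing in Lemma~\ref{lem:separability}.

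For the \emph{only if} direction we take two quote vectors with identical switch prices and set $\Delta A_i=A_i'-A_i$ and $\Delta B_i=B_i'-B_i$. Equality of switch prices gives $\Delta A_k=\Delta B_j$ for every ordered pair $k\ne j$. The key step is to chain these equalities using $n=3$, which guarantees that every index has at least two distinct partners. From $\Delta A_1=\Delta B_2=\Delta A_3=\Delta B_1=\Delta A_2=\Delta B_3$ we obtain one common value $c$. Concretely, the pairs $(1,2)$, $(3,2)$, $(3,1)$, $(2,1)$ and $(2,3)$ link all six increments. Hence the kernel of the linear map is exactly the span of $(1,\dots,1)$, and the preimage of a given switch-price matrix is a one-parameter affine family intersected with the feasible box. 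That family is the claimed identification up to $c$. We will also note, as a sanity check, that the rank of the map is $5$. This rank is consistent with Lemma~\ref{lem:separability}, where six switch prices satisfy one cycle identity.

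The only genuinely delicate point is the chaining. Its connectivity must be checked explicitly, viewing it as a connected bipartite graph on asks and bids whose edges are the pairs $k\ne j$. For $n=3$ this graph is $K_{3,3}$ minus a perfect matching, which is a $6$-cycle and therefore connected. For $n=2$ the same construction gives two disjoint edges and leaves two free parameters. That is the reason the gauge is one-dimensional only for $n\ge3$ and why the binary model did not exhibit it. We will also remark that "same premia against every benchmark" is equivalent to "same switch prices", because $r$ enters additively and is common to both vectors. This completes the argument without any demand or equilibrium inputs.
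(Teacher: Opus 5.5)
Your proof is correct and takes the same approach as the paper. Both compute the kernel of the quote-to-switch-price map. Your explicit chain $\Delta A_1=\Delta B_2=\Delta A_3=\Delta B_1=\Delta A_2=\Delta B_3$ plays the role of the paper's ``pick a destination distinct from two given origins'' step; the paper runs that step on markups, $\delta a_k+\delta b_j=0$, rather than on quote levels.

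One small caution on your closing aside: for $n=2$ the switch-only kernel is two-dimensional, i.e.\ \emph{larger}, so that count does not explain why the binary model shows no gauge. The real reason is that binary trades execute single legs by complementarity, so each quote level is priced directly.
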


The lemma says that when makers quote based on their belief of the atomic preference of traders for $n\ge3$ outcomes, then the displayed belief is indeterminate.
The lemma applies whenever $n\ge3$, because no individual claim is the complement of every other claim.
Under the free-entry quotes of \S\ref{subsec:3o_info}, the translations respecting the survival floors and the quote bounds form a nondegenerate interval.
Its endpoints are determined jointly by the ask floor, the bid floor, and the constraints $0\le B_i\le A_i\le1$.
See~\S\ref{App:proof_for_information_defect} for the proof.

The unpriced quote coordinate creates a second information problem that, any probability vector extracted before the shock must miss at least one realized posterior by the full shock size, we show it using the corollary below.

\begin{corollary}[Deterministic switch distortion and shock-sized belief error]
\label{cor:belief_error}
Consider the symmetric committed free-entry quotes of \S\ref{subsec:3o_info}.
In any realization, let $k$ denote the favored outcome and let $i$ and $j$ denote the two unfavored outcomes.

\smallskip
\noindent (i) \emph{(Deterministic unfavored-pair distortion.)}
The two unfavored outcomes have identical posteriors,
$$
\mu_i(\sigma_c) = \mu_j(\sigma_c) = \mu_U^{(3)},
$$
and hence $\mu_i(\sigma_c)-\mu_j(\sigma_c)=0$.
Nevertheless, their executable switch prices satisfy $S^{ij}=S^{ji}=a^*+b^*=s^*+\frac32\sigma$.
Consequently,
$$
S^{ij}
-
\big(
\mu_i(\sigma_c)-\mu_j(\sigma_c)
\big)
=
S^{ji}
-
\big(
\mu_j(\sigma_c)-\mu_i(\sigma_c)
\big)
=
s^*+\frac32\sigma
$$
in every realization.

\smallskip
\noindent (ii) \emph{(Coordinate-level belief error.)}
Let $r=(r_1,r_2,r_3)$ be any probability vector extracted from the committed quotes by a $t=0$-measurable normalization rule.
Then
$$
\Pr\!\left(
\max_{\ell}
\big|
r_\ell-\mu_\ell(\sigma_c)
\big|
\ge
\sigma
\right)
\ge
\frac13.
$$
Thus any fixed normalization of the unidentified quote level incurs a coordinate error of at least $\sigma$ in at least one of the three equally likely shock realizations.

\smallskip
\noindent (iii) \emph{(LMSR comparison.)}
After Stage-1 $c$-flow, the two unfavored LMSR inventories remain equal, so their raw marginal prices satisfy
$$
p_i\big(\beta^*(L)\big)
-
p_j\big(\beta^*(L)\big)
=
0
=
\mu_i(\sigma_c)-\mu_j(\sigma_c).
$$
The favored raw marginal price and fee-adjusted purchase price satisfy
$$
p_k\big(\beta^*(L)\big)= (1-\tau)\mu_F^{(3)} \qquad
\frac{p_k\big(\beta^*(L)\big)}{1-\tau} =
\mu_F^{(3)}.
$$
Hence the LMSR responds to the realized state and preserves the zero gap between the two unfavored outcomes, whereas the committed CLOB charges a switch premium of $s^*+\tfrac32\sigma$ between them.
\end{corollary}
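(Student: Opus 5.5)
The proof treats the three parts separately. Each rests on the symmetric free-entry construction of \S\ref{subsec:3o_info} and on the Stage-1 landmarks, plus elementary counting over the three equally likely shock realizations.

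For (i), the posterior equality $\mu_i=\mu_j=\mu_U^{(3)}$ follows directly from the symmetric shock specification: the favored coordinate rises by $\sigma$ and the other two fall by $\tfrac{\sigma}{2}$ each. The switch prices come from~\eqref{eq:switch_prices_and_premium} with the symmetric quotes, so $S^{ij}=A_i-B_j=(r_i+a^*)-(r_j-b^*)$. Under the symmetric construction the committed quote vector is invariant under permutations of the outcomes. It therefore has equal centers $r_i=r_j$, and by Lemma~\ref{lem:information_defect} any gauge shift moves $A_i$ and $B_j$ together, so the center difference cancels. Hence $S^{ij}=a^*+b^*$. Free entry then gives $a^*+b^*=s^*+\tfrac32\sigma$, and the same holds for $S^{ji}$ by swapping labels. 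Subtracting the zero posterior gap yields the displayed identity, valid in every realization because the quotes are $t=0$-committed and never depend on $\sigma_c$.

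For (ii), I would argue by pigeonhole over realizations. Any $t=0$-measurable rule yields a single fixed vector $r$ on the simplex, the same across the three realizations $k\in\{1,2,3\}$. In realization $k$ we have $\mu_k=\tfrac13+\sigma$. Since $\sum_\ell r_\ell=1$, some coordinate satisfies $r_\ell\le\tfrac13$. In the realization where that $\ell$ is favored, $|r_\ell-\mu_\ell|\ge\tfrac13+\sigma-\tfrac13=\sigma$. That realization has probability $\tfrac13$, which gives the bound. If the rule only needs to output a nonnegative vector rather than a probability vector, I would fall back on the normalization it imposes; the statement says ``probability vector,'' so the simplex constraint is available.

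For (iii), I use the LMSR price formula from the uniform start, $p_F(\beta)=e^{\beta/L}/(e^{\beta/L}+2)$. Stage-1 flow buys only the favored claim, so the two unfavored inventories stay at zero and are equal, and the softmax gives equal raw prices. Substituting $\beta^*=L\lambda_F^{(3)}$ gives $e^{\beta^*/L}=2(1-\tau)\mu_F^{(3)}/(1-(1-\tau)\mu_F^{(3)})$. Plugging this in yields $p_k=(1-\tau)\mu_F^{(3)}$, and dividing by $1-\tau$ gives the fee-adjusted price. Assumption~\ref{ass:low_fee_3o}(i) ensures $\beta^*>0$, so this is the binding capacity case.

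The main obstacle is in (i): justifying $r_i=r_j$, that is, that the symmetric quotes place equal levels on the two unfavored books. I would take this from the construction, since every leg carries the same $(a^*,b^*)$ around a symmetric benchmark and is therefore permutation-invariant.
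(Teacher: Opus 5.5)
Your proposal is correct and follows the paper's proof essentially step for step. In (i), common markups on every book give $S^{ij}=S^{ji}=a^*+b^*=s^*+\tfrac32\sigma$ against a zero posterior gap. In (ii), choosing $\ell^*\in\arg\min_\ell r_\ell$ gives $r_{\ell^*}\le\tfrac13$, and the realization in which $\ell^*$ is favored has probability $\tfrac13$. In (iii), Stage-1 flow touches only the favored inventory, and the stopping condition $p_k/(1-\tau)=\mu_F^{(3)}$ gives the favored price; your explicit substitution of $\beta^*=L\lambda_F^{(3)}$ is a valid check of that condition.

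The obstacle you flag in (i) is not really one. The symmetric quotes are identical across books, $A_i=\tfrac13+\sigma+m_a$ and $B_i=\tfrac13-\tfrac\sigma2-m_b$, so $A_i-B_j$ is the same for every ordered pair regardless of the benchmark. You can therefore drop the permutation-invariance and gauge remarks.
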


The two conclusions concern different dimensions of the committed book.
Part~(i) is an equilibrium statement about executable switch prices: because every outcome must be protected against both becoming favored and becoming unfavored, the book charges $s^*+\tfrac32\sigma$ to switch between the two outcomes whose realized probability gap is zero.
Part~(ii) concerns the unidentified common level: once a normalization converts the quote vector into a probability vector, that fixed normalization must miss some favored realization by at least $\sigma$.
Thus the committed CLOB is stale both in its executable pairwise prices and in any normalized claim-level belief extracted from its unpriced coordinate.

In the no-regret reading of cost-function market makers \citep{chen2010new}, the LMSR performs a state-contingent log-loss update against order flow while the committed book uses a predictor chosen before the shock.
Corollary~\ref{cor:belief_error} makes the resulting worst-realization error exact.
See~\S\ref{App:proof_for_belief_error} for the proof.

\paragraph{Contestable Entry Stability.}
Any number of risk-neutral makers may post two-sided outcome books at cost $\kappa$ per book, and one-sided posting is not allowed.
Each switch leg routes to the best posted quote, and ties split evenly.
Call a complete quote configuration \emph{contestably entry-stable} if at least one two-sided book is posted for every outcome, every active book weakly covers its posting cost, and no inactive maker can earn strictly positive expected profit by posting any finite collection of two-sided outcome books against the standing quotes.
This is the contestable-book concept of Maker Convention~I: it characterizes the standing configuration after all profitable entry and undercutting opportunities have been exhausted, rather than a simultaneous sunk-cost quote game.
We restrict attention to symmetric standing configurations in which the best active asks share one margin above the ask floor and the best active bids share one margin below the bid floor.
Write these winning margins as $m_a$ and $m_b$, so the live-switch premium is
$$
s_{\mathrm{eff}}=m_a+m_b.
$$

\begin{theorem}[Competition pins the premium, not the common level]
\label{thm:no_quoting_equilibrium}
Under leg-by-leg routing, book-level posting cost $\kappa$, Maker Convention~I, and Assumption~\ref{ass:low_fee_3o}, the symmetric contestably entry-stable configurations are exactly those with one active book per outcome and a feasible common margin split satisfying
$$
m_a\ge0,
\qquad
m_b\ge0,
\qquad
m_a+m_b=s^*.
$$
The corresponding primitive quotes are
$$
A_i=\frac13+\sigma+m_a,
\qquad
B_i=\frac13-\frac\sigma2-m_b.
$$

\smallskip
\noindent (i) \emph{(Premium selection.)}
Every active book earns zero expected profit, and takers pay the common executable premium
$$
s_{\mathrm{eff}}=s^*,
$$
where $s^*$ is the smallest increasing-branch root of~\eqref{eq:lp_free_entry_3o}.

\smallskip
\noindent (ii) \emph{(Common-gauge invariance.)}
For any $c$ such that the translated quotes remain feasible and pickoff-proof, applying
$$
A_i^{(m)}\mapsto A_i^{(m)}+c,
\qquad
B_i^{(m)}\mapsto B_i^{(m)}+c
$$
to every quote of every active maker leaves every executable switch price, execution volume, and expected book profit unchanged.
Equivalently, the common translation sends
$$
m_a\mapsto m_a+c,
\qquad
m_b\mapsto m_b-c
$$
while preserving $m_a+m_b=s^*$.

\smallskip
\noindent (iii) \emph{(Level indeterminacy.)}
The admissible common translations form a nondegenerate interval determined by quote feasibility.
Consequently, competition identifies the executable premium $s^*$ but identifies the primitive quote vector only up to the common gauge of Lemma~\ref{lem:information_defect}, so the displayed claim-level belief is not pinned by the competitive conditions.
A translation applied to only a proper subset of the three active books is not invariant and can change package prices, routing, and profits.
\end{theorem}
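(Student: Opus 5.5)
I would prove the characterization in two directions and then read off (i)--(iii) as corollaries. \emph{Sufficiency first.} Take one active book per outcome with winning margins $m_a,m_b\ge0$, $m_a+m_b=s^*$, so that $A_i=\tfrac13+\sigma+m_a$ and $B_i=\tfrac13-\tfrac\sigma2-m_b$. By Lemma~\ref{lem:no_pickoff} (in its three-outcome form from \S\ref{subsec:3o_info}), the ask floor $\tfrac13+\sigma$ and the bid floor $\tfrac13-\tfrac\sigma2$ make every leg pickoff-proof. Hence no $c$-trader executes. Every live switch $F\to U$ then costs $A_k-B_j=\tfrac32\sigma+s^*$, i.e.\ it carries markup $s^*$ over the realized gap.

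Summing the per-direction revenue exactly as in~\eqref{eq:lp_free_entry_3o} gives total expected revenue $z^{(3)}s^*H^{(3)}(s^*)=3\kappa$ across the three books. By symmetry each book earns zero. For the no-entry part, I would show that an entrant posting any finite collection of books either:
\begin{itemize}
\item undercuts on some leg, which requires a margin below the standing one, so its live premium is some $s'<s^*$;
\item ties, sharing flow at premium $s^*$ and so earning strictly less than $\kappa$ per book;
\item or quotes inside a floor and is picked off.
\end{itemize}
Because $s^*$ is the smallest root and $sH^{(3)}(s)$ is continuous with value $0$ at $s=0$, we have $z^{(3)}s'H^{(3)}(s')<3\kappa$ for every $s'<s^*$, so undercutting is unprofitable. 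A subtlety here is that an entrant undercutting only the ask on one book still captures switches whose bid leg is another maker's book. I would handle this by bounding the entrant's captured revenue by its own leg's share of the premium times the volume. That volume is at most $z^{(3)}H^{(3)}(s')$ per live direction, and this bound is again below the posting cost for $s'<s^*$.

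\emph{Necessity.} Given a symmetric entry-stable configuration, the margins must satisfy $m_a,m_b\ge0$, or the pickoff logic produces a loss. If $s_{\mathrm{eff}}>s^*$, then either active books earn positive profit in aggregate, or, if the profit is dissipated across several duplicate books, an entrant posting a full set at $s^*+\epsilon$-reduced margins captures all flow. Since $\Phi^{(3)}$ exceeds $3\kappa$ just above $s^*$ (the increasing-branch property), that entrant strictly profits. This contradicts stability, with care near the peak via Assumption~\ref{ass:low_fee_3o}(ii). If $s_{\mathrm{eff}}<s^*$, active books lose money. Finally, with two or more books on the same outcome at tied best quotes, each earns half the revenue, which is below $\kappa$. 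This forces exactly one active book per outcome. The main obstacle is this entrant-deviation bookkeeping under leg-by-leg routing with mixed-maker switches. I would first try to reduce it to a per-leg revenue accounting in which each book is credited its own margin times its executed volume.

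\emph{Parts (ii)--(iii).} A common translation by $c$ changes each $A_k-B_j$ by zero, so switch prices, routing, volumes and profits are unchanged, exactly as in Lemma~\ref{lem:information_defect}. In margin terms the translation sends $m_a\mapsto m_a+c$ and $m_b\mapsto m_b-c$. The admissible set is $c\in[-m_a,m_b]$, intersected with the quote bounds of $[0,1]$ (Assumption~\ref{ass:low_fee_3o}(iii)). This set is a nondegenerate interval whenever $s^*>0$, and the implied displayed level moves with $c$. For a translation applied to a proper subset of books, I would exhibit a switch $k\to j$ with $k$ shifted and $j$ not: its price changes by $c$, which alters routing and profit.
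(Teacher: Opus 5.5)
Your architecture matches the paper's proof:
\begin{itemize}
\item per-book revenue $\tfrac13\Phi^{(3)}(s)$;
\item $s<s^*$ excluded because books lose money;
\item $s>s^*$ excluded by a full-set entrant at a premium $p\in(s^*,s)$;
\item tie-splitting to force one book per outcome;
\item the common-gauge computation for (ii)--(iii).
\end{itemize}

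The gap is the step you flag as the main obstacle, and your proposed fix does not close it. You bound each won leg's revenue by its margin times the volume at the induced premium $s'<s^*$, then use $s'H^{(3)}(s')<s^*H^{(3)}(s^*)$. That controls \emph{one} leg against the posting cost $\kappa=\tfrac13z^{(3)}s^*H^{(3)}(s^*)$.

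A single outcome book, however, carries two legs. Its ask is the origin leg of two ordered directions, and its bid is the destination leg of two others. An entrant paying $\kappa$ for one book can undercut both legs, or undercut one and tie the other. Your bound then only shows revenue below $2\kappa$, which does not rule out a profitable deviation. Likewise, the comparison $z^{(3)}s'H^{(3)}(s')<3\kappa$ concerns an entrant who pays for all three books; it says nothing about one who pays for a single book.

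What is missing is a comparison of each undercut leg with the \emph{standing} component of that same leg, not with the whole posting cost. For $0\le x<m_a$ you need $xH^{(3)}(x+m_b)<m_aH^{(3)}(s^*)$. With $u=x+m_b<s^*$ this follows from
\[
\frac{d}{dx}\Big[xH^{(3)}(x+m_b)\Big]=H^{(3)}(u)+xH^{(3)\prime}(u)\ \ge\ H^{(3)}(u)+uH^{(3)\prime}(u)\ >\ 0,
\]
which uses $H^{(3)\prime}\le0$, $x\le u$, and the fact that $u$ lies on the increasing branch of $uH^{(3)}(u)$. The same argument gives:
\begin{itemize}
\item $yH^{(3)}(m_a+y)<m_bH^{(3)}(s^*)$ for a bid undercut;
\item $(x+y)H^{(3)}(x+y)<s^*H^{(3)}(s^*)$ when the entrant wins both legs of a direction, charging one ask component to its origin book and one bid component to its destination book.
\end{itemize}
Each book is charged at most two ask components and two bid components, each scaled by the ex-ante direction weight $\tfrac{z^{(3)}}6$. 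Its revenue is therefore at most $\tfrac{z^{(3)}}6\cdot2(m_a+m_b)H^{(3)}(s^*)=\kappa$, strictly so whenever positive; ties only halve a component.

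This is the paper's Step~3. Without it your sufficiency direction is not established, and neither is the characterization on which (i)--(iii) rest.

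A smaller point concerns the necessity direction. Positive incumbent profit is not by itself a violation of contestable entry-stability as defined. What excludes $s_{\mathrm{eff}}>s^*$ is the full-set deviation at a premium just above $s^*$, and that argument stands on its own.
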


Theorem~\ref{thm:no_quoting_equilibrium} characterizes contestable free-entry outcomes within the symmetric shock class.
It does not claim that every asymmetric simultaneous quote-posting game has the same equilibrium set.
Premium selection and level invariance are separate results.
Complete-book entry eliminates every common premium below or above $s^*$, while a common translation of all active quotes preserves $s^*$, all switch prices, all routing decisions, and every book's expected profit.
The deviation proof is direction-by-direction because an entrant may improve only an ask, only a bid, or both legs of a switch, but ended up not being able to cover the quoting cost.
Competition therefore disciplines relative quote locations $s^*$ but leaves one global quote coordinate, the gauge, free.
Figure~\ref{fig:free_gauge} is a illustration of how competitive makers quote same switch premium in equilibria but in each equilibrium, the displayed belief is different by a mid-point rule of bid~/~ask prices. 

\begin{figure}[H]
    \centering
    \includegraphics[width=0.9\linewidth]{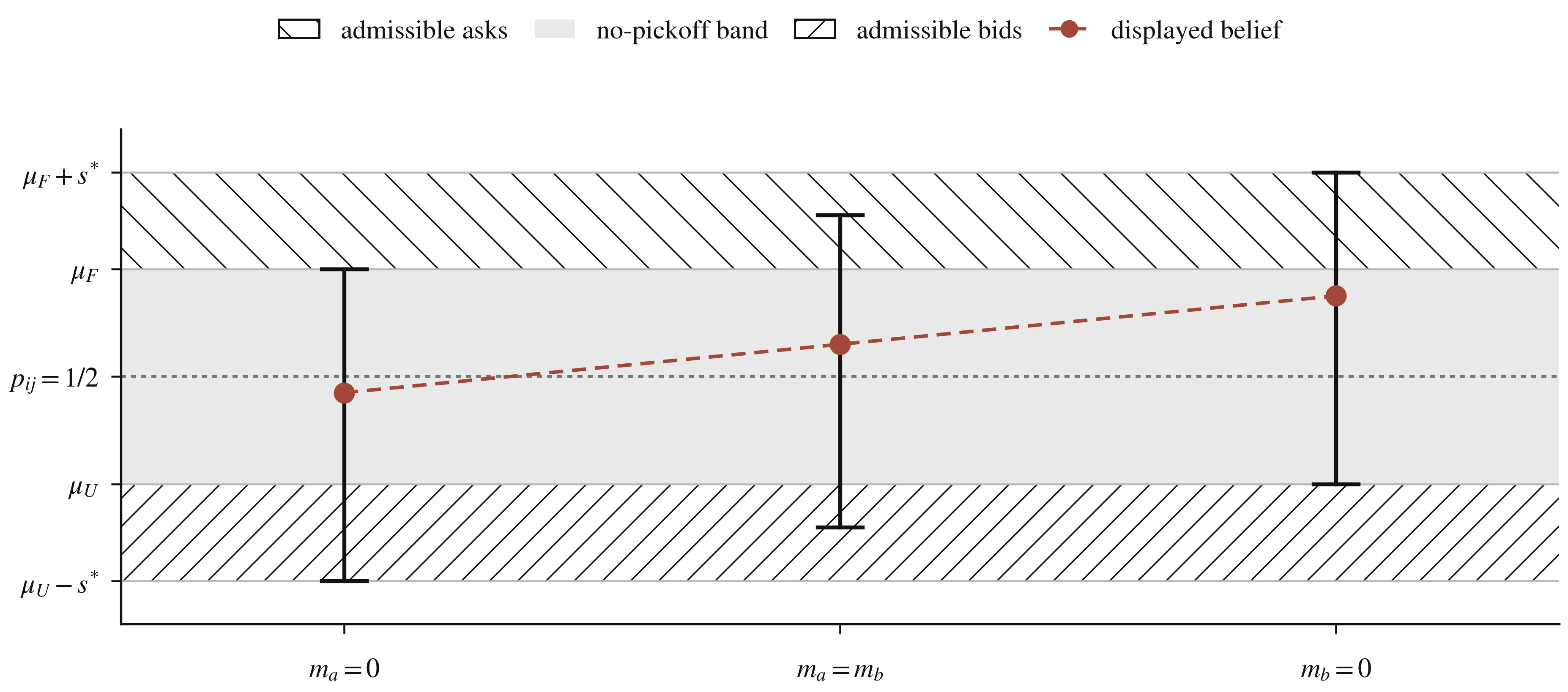}
    \caption{Equilibrium Quotes with Competitive Makers}
    \label{fig:free_gauge}
\end{figure}

Two of the three defects are now established: the level is unpriced by trades (Lemma~\ref{lem:information_defect}) and undetermined by competition (Theorem~\ref{thm:no_quoting_equilibrium}).
Neither says what a maker must put at risk to move it, and that is the question the balance sheet answers.

\subsection{Balance Sheets: Backing the Display, Funding the Event}
\label{subsec:3o_balance}

Two collateral questions separate the mechanisms, and the outcome space sharpens both.
The first is what a maker must put at risk to move the belief it displays; the second is how much capital the event as a whole requires as $n$ grows.
One definition makes the first precise.

\begin{definition}[Collateralized display]
\label{def:collateralized}
A mechanism's displayed belief is \emph{collateralized} if it cannot change without some participant acquiring a position whose settlement payoff differs across states.
It is \emph{uncollateralized} along a family of moves if the display changes along that family while every participant's settlement payoff in every state stays the same.
\end{definition}

\begin{proposition}[Self-collateralized beliefs]
\label{prop:self_collateral}
Consider the $n$-outcome LMSR with liquidity $L$, cost
$$
C_L(q)
=
L\log\sum_i e^{q_i/L},
$$
and displayed belief
$$
p(q)
=
\nabla C_L(q).
$$
Consider also an admissible committed CLOB quote collection $(A_i,B_i)$ whose Stage-2 orders are atomic switches.

\smallskip
\noindent (i) \emph{(Invariant directions.)}
For the LMSR,
$$
p(q')=p(q)
\quad\Longleftrightarrow\quad
q'-q
\in
\operatorname{span}(\mathbf 1).
$$
An inventory translation $q\mapsto q+c\mathbf 1$ is a complete-basket trade that changes the LMSR cost by $c$, pays $c$ in every state, and therefore has zero net state-contingent payoff.
Modulo this translation, $q\mapsto p(q)$ is a bijection onto the interior of the probability simplex.

For the CLOB, Lemma~\ref{lem:information_defect} implies that the admissible repostings preserving every executable switch price are exactly the common gauges
$$
A_i\mapsto A_i+c,
\qquad
B_i\mapsto B_i+c.
$$
An admissible gauge reposting executes no order and changes no position, yet moves every primitive displayed level by $c$.

\smallskip
\noindent (ii) \emph{(Moving the LMSR display is a bet at scale $L$.)}
For a fee-free cost-function trade from $q$ to $q'$ with displayed beliefs $p$ and $p'$, the gross settlement payoff in state $i$ is
$$
(q'_i-q_i)
-
\big(C_L(q')-C_L(q)\big)
=
L\log\frac{p'_i}{p_i}.
$$
Under any belief $\pi$, the expected gross payoff is $L \left[ \mathrm{KL}(\pi\|p) - \mathrm{KL}(\pi\|p')\right]$
where $\mathrm{KL}(\pi\|p) = \sum_i \pi_i \log\frac{\pi_i}{p_i}$.
Under the paper's proportional fee convention, a purchase satisfying $C_L(q')-C_L(q)\ge0$ instead delivers
$$
(q'_i-q_i)
-
\frac{C_L(q')-C_L(q)}{1-\tau}
=
L\log\frac{p'_i}{p_i}
-
\frac{\tau}{1-\tau}
\big(C_L(q')-C_L(q)\big).
$$
The fee term is state independent, so every movement $p'\ne p$ still creates a position whose payoff differs across states.

\smallskip
\noindent (iii) \emph{(The CLOB gauge is uncollateralized.)}
Let
$$
(A'_i,B'_i)
=
(A_i+c,B_i+c)
$$
be any common translate that remains feasible and pickoff-proof.
Suppose each switch $k\to j$ executes atomically as a purchase of $X_k$ and a sale of $X_j$ against the same committed quote collection.
Then
$$
A'_k-B'_j
=
(A_k+c)-(B_j+c)
=
A_k-B_j.
$$
Hence the original and translated books induce identical switch participation, execution quantities, net switch payments, and settlement positions in every realization, while every primitive displayed level differs by $c$.
With multiple active makers, the same invariance applies to a common translation of every active maker, whereas a relative translation of only one maker need not preserve routing or package prices.
\end{proposition}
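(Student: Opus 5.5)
Parts (i)--(iii) are handled separately: the LMSR half is direct convex analysis of the softmax potential, and the CLOB half comes from Lemma~\ref{lem:information_defect}.

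\textbf{Part (i).} Compute $p_i(q)=e^{q_i/L}/\sum_j e^{q_j/L}$.

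If $q'-q=c\mathbf 1$, numerator and denominator both scale by $e^{c/L}$, so $p(q')=p(q)$. Conversely, $p(q')=p(q)$ gives $e^{(q'_i-q_i)/L}=\sum_j e^{q'_j/L}/\sum_j e^{q_j/L}$, which is independent of $i$. Hence $q'_i-q_i$ is constant in $i$, so $q'-q\in\operatorname{span}(\mathbf 1)$.

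The cost change along a translation is
\[
C_L(q+c\mathbf 1)-C_L(q)=L\log e^{c/L}=c.
\]
The trade delivers $c$ units in every state, so the net state-contingent payoff is $c-c=0$.

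For the bijection modulo $\mathbf 1$:
\begin{itemize}
\item Surjectivity onto the open simplex: given interior $p$, set $q_i=L\log p_i$.
\item Injectivity on the quotient: this is the converse just shown.
\end{itemize}

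For the CLOB statement, cite Lemma~\ref{lem:information_defect}, which characterizes the switch-price-preserving repostings as exactly the common gauges. A reposting is a change of standing orders only, so no order executes and no position changes. Every $A_i$ and $B_i$, and hence any displayed level such as the mid, moves by $c$.

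\textbf{Part (ii).} This is a one-line identity. Using $\log p'_i=q'_i/L-C_L(q')/L$ and the same for $p$,
\[
L\log\frac{p'_i}{p_i}=(q'_i-q_i)-\big(C_L(q')-C_L(q)\big).
\]
Take expectations under $\pi$ and write $\log(p'_i/p_i)=\log(\pi_i/p_i)-\log(\pi_i/p'_i)$; this yields $L[\mathrm{KL}(\pi\|p)-\mathrm{KL}(\pi\|p')]$.

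For the fee version, note that $(C_L(q')-C_L(q))/(1-\tau)=(C_L(q')-C_L(q))+w_\tau(C_L(q')-C_L(q))$, and substitute. The fee term is state-independent.

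It remains to show the payoff varies across states whenever $p'\ne p$. Both vectors lie in the simplex, so $p'\ne p$ forces some $p'_i>p_i$ and some $p'_j<p_j$. Then $L\log(p'_i/p_i)>0>L\log(p'_j/p_j)$, so the payoff differs across states. This is the one step requiring care, since a uniform log-ratio would be impossible.

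\textbf{Part (iii).} Start from the switch-price identity $A'_k-B'_j=A_k-B_j$. Participation of each $p$-trader depends only on the switch price against her valuation. The $c$-traders face pickoff-proof quotes in both books, by hypothesis, and so abstain in both. Participation, quantities and net payments are therefore identical, and settlement positions, being the executed quantities of $X_k$ and $X_j$, coincide in every realization. The displayed levels shift by $c$.

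For multiple makers, apply the same argument with a common $c$: best quotes per leg keep their ordering, so routing and ties are preserved. This is Theorem~\ref{thm:no_quoting_equilibrium}(ii).

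For a relative translation, give a counterexample. Shift one maker's book by $c>0$ on leg $k$'s ask only relative to others. That maker's ask then loses best-quote status on $k$ while its bids remain best, so routing changes.

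The main obstacle is making precise, via Definition~\ref{def:collateralized}, that gauge invariance covers all participants, including non-executing $c$-traders. I would handle this by invoking pickoff-proofness of both quote vectors.
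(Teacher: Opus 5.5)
Your proposal is correct and follows the paper's proof essentially step for step: softmax log-ratios and the representative $q_i=L\log p_i$ for (i), the identity $L\log(p'_i/p_i)=(q'_i-q_i)-(C_L(q')-C_L(q))$ with a state-independent fee term for (ii) (your sign argument is equivalent to the paper's $p'=dp\Rightarrow d=1$), and pickoff-proofness plus preserved switch prices and leg rankings for (iii). The one slip is in your relative-translation counterexample, since shifting a single ask is not a translation; instead shift one maker's whole book on outcome $k$ by $c>0$. If a rival is tied at the old quotes, this flips rankings (its ask loses and its bid strictly wins); if it stays the winning book, it moves every switch $k\to j$ by $+c$ and every switch $j\to k$ by $-c$.
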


Each part of Proposition~\ref{prop:self_collateral} has a direct economic interpretation.
The LMSR ignores only a complete-basket trade, which changes its cash account and settlement liability by the same amount in every state.
Every other LMSR trade changes the displayed probability and creates state-dependent exposure.
Part~(ii) gives the gross logarithmic-scoring-rule payoff, while the proportional taker fee subtracts a state-independent charge and therefore does not alter the collateralization result.
For the committed book, an admissible gauge reposting moves every primitive quote level without changing an atomic switch price, net package payment, or settlement position.
The qualification to a common translation is essential once several makers are active, because translating only one maker can change leg rankings and routing.
In the language of Definition~\ref{def:collateralized}, the LMSR display is collateralized---it moves only through a state-contingent bet---while the committed CLOB display is uncollateralized along its common gauge.
Part~(ii) is the classical identity between LMSR trades and Hanson's logarithmic scoring rule \citep{hanson2003combinatorial}; the contribution here is its comparison with the uncollateralized CLOB gauge.
See~\S\ref{App:proof_self_collateral} for the proof.

The three defects of the committed level are now in hand: it is unpriced by trades (Lemma~\ref{lem:information_defect}), undetermined by competition (Theorem~\ref{thm:no_quoting_equilibrium}), and unbacked by bets (Proposition~\ref{prop:self_collateral})---three statements of one fact, that a book which quotes legs cannot own the probability it displays.
Owning a display is one thing and funding it is another.
A maker must post collateral to guarantee settlement whatever its display means, and the requirement scales with the outcome space.

\begin{proposition}[Event-level collateral efficiency]
\label{prop:event_level_collateral_efficiency}
For an $n$-outcome event with $n>2$ initialized at the uniform prior, consider the fee-free normalized LMSR cost
$$
C_L(q)
=
L\log
\left(
\frac1n
\sum_{i=1}^n
e^{q_i/L}
\right).
$$
Its worst-case loss is the supremum
$$
K_n^{LMSR}(L) \equiv \sup_{q\in\mathbb R^n,\,i} \left\{q_i-\big( C_L(q)-C_L(0) \big) \right\} = L\log n.
$$
Suppose instead that the CLOB operates $n$ outcome books, each book must collateralize one-sided settlement exposure $\vartheta$, and the platform permits no event-level cross-book netting.
Then
$$
K_n^{CLOB}(\vartheta)
=
n\vartheta.
$$
The exact collateral ratio is therefore
$$
\frac{K_n^{CLOB}(\vartheta)}
{K_n^{LMSR}(L)}
=
\frac{\vartheta}{L}
\frac{n}{\log n}.
$$
For any family of comparisons in which the depth normalization satisfies $0<\underline c \le \frac{\vartheta}{L} \le \overline c <\infty$
uniformly in $n$, the collateral ratio satisfies
$$
\frac{K_n^{CLOB}}
{K_n^{LMSR}}
=
\Theta
\left(
\frac{n}{\log n}
\right).
$$
The LMSR equality is the fee-free cost-function benchmark.
If every trading fee is nonnegative and retained by the LMSR LP, fees weakly reduce realized loss, so $L\log n$ remains a conservative collateral bound.
\end{proposition}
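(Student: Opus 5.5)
The plan is to compute the two collateral requirements separately and then take their ratio; the only analytic content is the LMSR worst-case loss.

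For the LMSR, I evaluate $q_i-\big(C_L(q)-C_L(0)\big)$ directly. Since $C_L(0)=L\log\big(\tfrac1n\cdot n\big)=0$, the objective is
$$
q_i-L\log\Big(\tfrac1n\sum_j e^{q_j/L}\Big)=L\log n+L\log\frac{e^{q_i/L}}{\sum_j e^{q_j/L}}=L\log n+L\log p_i(q),
$$
where $p(q)=\nabla C_L(q)$ is the softmax display, as in Proposition~\ref{prop:self_collateral}. Because $p_i(q)<1$, every value is strictly below $L\log n$. Sending $q_i\to\infty$ with the other coordinates fixed drives $p_i\to1$, so the supremum equals $L\log n$. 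The supremum is therefore not attained, and I will say so explicitly, because the statement uses $\sup$ and not $\max$. By the symmetry of the uniform initialization, the same bound holds for every $i$.

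Next comes the fee remark. By the fee-adjusted payment identity in Proposition~\ref{prop:self_collateral}(ii), with fees retained by the LP each purchase pays an extra state-independent amount $\tfrac{\tau}{1-\tau}\,\Delta C\ge0$. The LP's realized liability net of receipts therefore only decreases, and $L\log n$ remains an upper bound. I will phrase this as a weak inequality and not claim tightness.

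For the CLOB, the requirement $K_n^{CLOB}=n\vartheta$ follows from the stated accounting hypotheses. Each of the $n$ books must collateralize $\vartheta$, and the no-netting rule forbids offsetting across books, so the total is the sum. The main obstacle is making sure this is not read as a derived worst case but as definitional under the hypotheses. I will point out that $\vartheta$ is a primitive of the comparison, so no optimization is needed.

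The ratio is then
$$
\frac{n\vartheta}{L\log n}=\frac{\vartheta}{L}\cdot\frac{n}{\log n}.
$$
Because $n>2$, we have $\log n>0$, so the ratio is well defined. Under the uniform bounds $\underline c\le\vartheta/L\le\overline c$, the ratio is sandwiched between $\underline c\,n/\log n$ and $\overline c\,n/\log n$, which gives $\Theta(n/\log n)$.
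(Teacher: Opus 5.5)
Your proposal is correct and follows the paper's proof essentially step for step. The paper gets $q_i-C_L(q)\le L\log n$ by dropping the other exponentials inside the logarithm, which is equivalent to your $p_i(q)<1$. It shows sharpness, without attainment, along $q_i=t$, $q_j=0$ for $j\neq i$, $t\to\infty$; it takes $K_n^{CLOB}=n\vartheta$ as the sum of segregated book requirements; and it treats fees as nonnegative cash inflows that add no state-contingent liability. That last argument is slightly more general than your appeal to the proportional purchase fee of Proposition~\ref{prop:self_collateral}(ii), and it directly covers the statement's hypothesis that \emph{every} trading fee is nonnegative.
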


The comparison is conditional on book-level collateral segregation and comparable depth normalization.
It does not claim that event-level CLOB cross-margining is impossible, nor does it derive the ratio $\vartheta/L$ from equilibrium primitives.
Rather, it isolates the scaling consequence of the two architectures: one LMSR potential nets settlement exposure across the event, while independently collateralized CLOB books impose one solvency constraint per outcome.
See~\S\ref{App:proof_of_event_level_coll} for the proof.

\paragraph{The uniqueness pincer.}
the committed vector fails at beliefs and collateral where the committed potential succeeds; a classical characterization says this is no accident.
Any market maker that quotes a full probability vector path-independently and admits no arbitrage is a convex-potential cost-function maker \citep{chen2007betting,abernethy2013efficient}; Lemma~\ref{lem:information_defect} supplies the complementary impossibility, that a committed quote \emph{vector} cannot identify the probability it purports to sell.
Between the two, the cost-function AMM is not merely a better way to sell a probability over $n\ge3$ outcomes---up to the axioms, it is the only committed mechanism that sells one at all.

These are not free lunches.
The AMM purchases identification, tight uninformative pricing, and logarithmic collateral with the adverse-selection payment $\ell_c L$ that~\eqref{eq:lp_foc} prices; the book's refusal to pay is what forces screening, and screening in dimension $n\ge3$ externalizes its cost into the gauge, the indeterminate belief display, and the collateral bill.

\section{Conclusion and Open Directions}
\label{sec:conclusion}

We asked what a maker does in a prediction market without classical noise, and whether a cost-function AMM can coexist with the order books real venues run.
The answer is one mechanism whose consequences escalate. 
A prediction-market share has no reference price, so the informed move first
and any pre-shock CLOB quote inside the band the signal will move into is
picked off; the surviving spread screens informed flow off the book, and the
maker becomes a rent extractor carrying the under-priced side to resolution.
The venue hierarchy inverts: the AMM aggregates information while the CLOB
lives on behavioral rent.
Coexistence is not neutral between the two: the book's committed premium caps how far the pool's price can walk, while pool depth strips revenue from the book, so deeper liquidity widens or compresses the book's spread according to whether its quotes are contestable or committed, and past a threshold removes the book's clientele entirely.
With $n\ge3$ outcomes the book's failure moves from rent to product.
Its quotes price only by differences, so the level they display---the probability itself---is unidentified, wrong by at least the shock size wherever normalized, and left free by competition, which pins the premium and nothing else.
Nor is that level backed: a maker can relocate it with no order executed and no payoff changed.
The cost-function maker prices the whole simplex from one potential, reprices its display with every trade it absorbs, and backs the event with a factor $\Theta(n/\log n)$ less collateral.
Set beside the classical characterization of cost-function makers, this leaves the AMM as the only committed mechanism that sells a probability over $n\ge3$ outcomes at all.

Four limitations mark an opening: (i) the mechanism rests on a single pre-shock quote, so a dynamic microfoundation in which makers requote as the volatility regime evolves is an important open question; (ii) AMM depth enters only through a scalar $L$, leaving concentrated, Uniswap-v3-style~\cite{adams2021uniswap} bucketed depth a natural next step; (iii) our results are positive throughout---we characterize which configurations survive, not which one a platform should choose; and (iv) the calibration rests on one
short-horizon contract type, so whether the same results survive in other markets remains an empirical question. 
A venue without uninformative flow either screens information out and lives on rent, or aggregates it at a bounded loss, and may sit better on two coexisting mechanisms than one. 
The tension is not peculiar to prediction markets: any asset whose only value is the information it reveals forces its makers to choose between screening that information out and paying to absorb it. 
Prediction markets, whose product is the probability itself, simply make the choice unavoidable---and make it worth asking, as we do here, when a venue is better served by running both mechanisms than by choosing one.

\newpage
\bibliographystyle{plainnat}
\bibliography{references}

\newpage
\appendix

\section{Proofs}

\subsection{Proof of Lemma~\ref{lem:no_pickoff}}
\label{App:proof_for_lem_no_pickoff}
\begin{proof}
Write $v(\sigma_c)\equiv\mu_Y(\sigma_c)=p_0+\sigma_c$ for the post-shock value of the YES claim.
The two possible values are $p_0+\sigma$ and $p_0-\sigma$, each with probability $\tfrac12$.
A maker commits a two-sided YES quote $(b,a)$ with $b\le a$ before the shock is realized.

\smallskip
\noindent\emph{Step 1 (quote-level characterization).}
After observing the shock, a $c$-trader's maximum surplus from the quote is
$$
\max\{v(\sigma_c)-a,\ b-v(\sigma_c),\ 0\}.
$$
The quote is pickoff-proof under a given shock sign if and only if $b\le v(\sigma_c)\le a$.
Imposing this condition under both signs gives
$$
a\ge p_0+\sigma,
\qquad
b\le p_0-\sigma.
$$
If the ask inequality fails, the positive-shock $c$-mass buys YES and earns strictly positive surplus $p_0+\sigma-a$ and all $p$-traders will also buy from CLOB until the quote is exhausted if the active direction is also positive, and the bid inequality can fail in the same way.
Conversely, if both inequalities hold, neither shock sign gives a $c$-trader and same direction $p$-trader a strictly profitable CLOB trade.
This proves the quote-level characterization.

\smallskip
\noindent\emph{Step 2 (active-leg representation).}
Write the realized active-leg quote as
$$
A^D=\mu_D+s_D,
\qquad
B^C=\mu_C-s_D,
\qquad
A^D+B^C=1.
$$
If $D=F$, the executable active-leg quote is the ask $A^F=a=\mu_F+s_F$.
The ask condition from Step~1 is therefore equivalent to $s_F\ge0$.
If $D=U$, the executable active-leg quote is the complement of the bid, so $A^U=1-b=\mu_U+s_U$.
The bid condition from Step~1 is therefore equivalent to
$$
A^U\ge1-\mu_U=\mu_F,
$$
which gives $s_U\ge\mu_F-\mu_U=2\sigma$ under $p_0=\tfrac12$.
Both cases can be written as
$$
s_D\ge\underline s_D\equiv\mu_F-\mu_D.
$$
The equivalent carry-leg condition is $B^C=1-A^D\le\mu_U$.

\smallskip
\noindent\emph{Step 3 (positive tail demand).}
Conditional on $D=F$, the CLOB serves mass $zH_F(s_F)$, which is strictly positive if and only if $s_F<\bar y_F$.
Conditional on $D=U$, the CLOB serves mass $zH_U(s_U)$, which is strictly positive if and only if $s_U<\bar y_U$.
Intersecting these support conditions with the no-pickoff floor gives
$$
s_D\in[\underline s_D,\bar s_D),
\qquad
\bar s_F=\bar y_F,
\qquad
\bar s_U=\bar y_U.
$$
If the lower endpoint is weakly above the support ceiling, the corresponding screening band is empty.
This proves the lemma.
\end{proof}

\subsection{Proof of Lemma~\ref{lem:marginal_and_average_cost}}
\label{app:proof_for_marginal_and_average_cost}
\begin{proof}
The binary LMSR cost function is
$$
C_L(q_Y,q_N)
=
L\log\left(e^{q_Y/L}+e^{q_N/L}\right).
$$
The LMSR price of claim $X\in\{Y,N\}$ is the partial derivative
$$
p_X(q_Y,q_N)
=
\frac{\partial C_L(q_Y,q_N)}{\partial q_X}
=
\frac{e^{q_X/L}}{e^{q_Y/L}+e^{q_N/L}}.
$$
To initialize the market at YES price $p_0$, choose the initial inventory
so that
$$
\frac{e^{q_Y^0/L}}{e^{q_Y^0/L}+e^{q_N^0/L}}=p_0.
$$
Since only inventory differences matter, write $q=q_Y-q_N-(q_Y^0-q_N^0)$ for net YES inventory relative to the initial state. 
Then the YES and NO marginal price after net YES inventory $q$ is
$$
p_Y(q;p_0,L)
=
\frac{p_0e^{q/L}}{p_0e^{q/L}+1-p_0} \qquad p_N(q;p_0,L)=1-p_Y(q;p_0,L).
$$
Under the uniform prior $p_0=\tfrac12$, this reduces to $p_Y(q;1/2,L)=\frac{1}{1+e^{-q/L}}$.
For a finite purchase of $\Delta>0$ YES shares from net YES inventory $q$, the raw cost for such transaction is
$$
T_Y(\Delta;q,p_0,L)
=
C_L(q+\Delta)-C_L(q)
=
L\log\left(
\frac{p_0e^{(q+\Delta)/L}+1-p_0}
     {p_0e^{q/L}+1-p_0}
\right).
$$
For a finite purchase of $\Delta>0$ NO shares, net YES inventory falls
by $\Delta$, so the raw transaction cost is
$$
T_N(\Delta;q,p_0,L)
=
L\log\left(
\frac{p_0e^{q/L}+(1-p_0)e^{\Delta/L}}
     {p_0e^{q/L}+1-p_0}
\right) = \int_0^\Delta p_X(q+\iota_X u;p_0,L)\,du,
$$
where $\iota_Y=+1$ and $\iota_N=-1$. 
With proportional taker fee $\tau$, the fee-adjusted total purchase cost is
$$
\widetilde T_X(\Delta;q,p_0,L,\tau)
=
\frac{T_X(\Delta;q,p_0,L)}{1-\tau}.
$$
The corresponding fee-adjusted average purchase price is
$$
\overline P_X(\Delta;q,p_0,L,\tau)
=
\frac{\widetilde T_X(\Delta;q,p_0,L,\tau)}{\Delta}.
$$
Taking the limit as $\Delta\downarrow0$ gives the fee-adjusted marginal purchase price
$$
P_X(q;p_0,L,\tau)
=
\frac{p_X(q;p_0,L)}{1-\tau}.
$$
Finally, differentiating the YES marginal price gives
$$
\frac{\partial p_Y(q;p_0,L)}{\partial q}
=
\frac{p_Y(q;p_0,L)(1-p_Y(q;p_0,L))}{L}.
$$
Thus larger $L$ lowers local price impact for any fixed price level.
This proves the lemma.
\end{proof}

\subsection{Proof of Proposition~\ref{prop:active_leg_existence}}
\label{app:proof_for_active_leg_existence}
\begin{proof}
Fix $L\in(0,\bar L)$.
Because exactly one tail direction is active after Stage~1, the favored and unfavored clearing problems below are conditional continuation problems and are never imposed on the same terminal LMSR state.
The proof follows the recursive order of capacity, premium, and conditional clearing.

\smallskip
\noindent\emph{Step 1 (common-flow capacity).}
Since $\bar L=1/\lambda_F$, the capacity condition~\eqref{eq:c_capacity} binds at
$$
\beta_F^*(L)=L\lambda_F\in(0,1).
$$
The post-$c$-flow inventory is $\iota_F L\lambda_F$.
The fee-adjusted favored price is therefore $\mu_F$, while the fee-adjusted unfavored purchase price is $\mu_U+w_\tau$.
Assumption~\ref{ass:low_fee}(i) gives $\lambda_F>0$ and $w_\tau<2\sigma$.

\smallskip
\noindent\emph{Step 2 (free-entry premium).}
The expected screening revenue of one committed book is
$$
\Phi_D(s)=\tfrac12 z\big[sH_F(s)+(2\sigma+s)H_U(2\sigma+s)\big],
$$
where the second term is zero whenever $2\sigma+s\ge\bar y_U$.
The free-entry condition is $\Phi_D(s)=\kappa$.
Assumption~\ref{ass:tail_demand_distribution}(v) makes $\Phi_D$ continuous and strictly quasi-concave on $[0,\bar y_F)$, with unique peak $s^\ddagger$.
Assumption~\ref{ass:low_fee}(ii) gives
$$
\Phi_D(0)<\kappa<\Phi_D(s^\ddagger).
$$
The intermediate value theorem therefore gives a root on the increasing branch, and strict quasi-concavity makes that root unique on the branch.
Write this root as
$$
s^{FE}\in(0,s^\ddagger).
$$
Under maker Convention~I, any premium above the increasing-branch root can be profitably undercut until the root is reached, while any further cut below the root yields expected revenue below $\kappa$.
Hence the unique surviving premium is $s^*=s^{FE}$.
Assumption~\ref{ass:low_fee}(iii) implies $s^*<s^\ddagger<\bar y_F$, so favored-direction CLOB demand is strictly positive.
Since $\Phi_D$ contains no $L$, the selected premium satisfies
$$
\frac{ds^*}{dL}=0.
$$

\smallskip
\noindent\emph{Step 3 (favored-direction clearing).}
Given $(s^*,L)$, define
$$
G_F(y)=z\big[H_F(y)-H_F(s^*)\big]-L\big[\lambda(\mu_F+y)-\lambda_F\big].
$$
The function $G_F$ is continuous and strictly decreasing on $[0,s^*]$ because $H_F'<0$ and $\lambda'>0$.
At the lower endpoint,
$$
G_F(0)=z\big[H_F(0)-H_F(s^*)\big]>0.
$$
At the upper endpoint,
$$
G_F(s^*)=-L\big[\lambda(\mu_F+s^*)-\lambda_F\big]<0.
$$
Hence there is a unique root $y_F(L)\in(0,s^*)$.
The captured favored-direction mass is
$$
M_F(L,s^*)=z\big[H_F(y_F(L))-H_F(s^*)\big],
$$
and the corresponding routing share is
$$
\alpha_F^*(L)=z(1-H_F(s^*))\in(0,1).
$$

\smallskip
\noindent\emph{Step 4 (unfavored-direction clearing).}
Define
$$
G_U(y)=z\big[H_U(y)-H_U(2\sigma+s^*)\big]-L\big[\lambda(\mu_U+y)+\lambda_F\big],
$$
where $H_U(2\sigma+s^*)$ is zero whenever $2\sigma+s^*\ge\bar y_U$.
The identity
$$
\lambda(\mu_U+w_\tau)=-\lambda_F
$$
follows from $p_0=\tfrac12$ and $(1-\tau)(\mu_U+w_\tau)=1-(1-\tau)\mu_F$.
Therefore,
$$
G_U(w_\tau)=z\big[H_U(w_\tau)-H_U(2\sigma+s^*)\big]>0,
$$
because $w_\tau<2\sigma+s^*$ and $w_\tau<\bar y_U$.
If $2\sigma+s^*<\bar y_U$, then
$$
G_U(2\sigma+s^*)=-L\big[\lambda(\mu_U+2\sigma+s^*)+\lambda_F\big]<0.
$$
Strict monotonicity therefore gives a unique root $y_U(L)\in(w_\tau,2\sigma+s^*)$.
If $2\sigma+s^*\ge\bar y_U$, then $H_U(2\sigma+s^*)=H_U(\bar y_U)=0$ and
$$
G_U(\bar y_U)=-L\big[\lambda(\mu_U+\bar y_U)+\lambda_F\big]<0.
$$
Strict monotonicity therefore gives a unique root $y_U(L)\in(w_\tau,\bar y_U)$.
On the first branch, the CLOB retains mass $zH_U(2\sigma+s^*)>0$ and
$$
\alpha_U^*(L)=1-\frac{H_U(2\sigma+s^*)}{H_U(y_U(L))}\in(0,1).
$$
On the second branch, the CLOB retains no unfavored mass and every unfavored trader who trades uses the LMSR, so
$$
\alpha_U^*(L)=1.
$$

\smallskip
\noindent\emph{Step 5 (comparative statics and zero-depth limits).}
Implicit differentiation of the favored clearing equation gives
$$
y_F'(L)=\frac{\lambda(\mu_F+y_F)-\lambda_F}{zH_F'(y_F)-L\lambda'(\mu_F+y_F)}<0.
$$
Implicit differentiation of the unfavored clearing equation gives
$$
y_U'(L)=\frac{\lambda(\mu_U+y_U)+\lambda_F}{zH_U'(y_U)-L\lambda'(\mu_U+y_U)}<0.
$$
Since the terminal displacements fall with $L$, the captured masses increase with $L$.
On every branch with positive CLOB participation, the formulas for $\alpha_F^*$ and $\alpha_U^*$ then imply that the routing shares are strictly increasing in $L$.
As $L\downarrow0$, the favored clearing equation implies
$$
y_F(L)\uparrow s^*.
$$
Indeed, any smaller limit would leave the mass difference $H_F(y_F)-H_F(s^*)$ strictly positive while the capacity term converges to zero.
Moreover,
$$
M_F(L,s^*)=L\big[\lambda(\mu_F+y_F(L))-\lambda_F\big]=O(L),
$$
while $H_F(y_F(L))\to H_F(s^*)>0$.
Hence $\alpha_F^*(L)=O(L)$.
If $2\sigma+s^*<\bar y_U$, the unfavored clearing equation similarly implies
$$
y_U(L)\uparrow2\sigma+s^*,
$$
because any smaller limit would leave a strictly positive unfavored mass difference while the capacity term converges to zero.
Moreover,
$$
M_U(L,s^*)=L\big[\lambda(\mu_U+y_U(L))+\lambda_F\big]=O(L),
$$
and $H_U(y_U(L))\to H_U(2\sigma+s^*)>0$.
Hence $\alpha_U^*(L)=O(L)$ on this branch.
If $2\sigma+s^*\ge\bar y_U$, the unfavored clearing equation instead implies
$$
y_U(L)\uparrow\bar y_U.
$$
Indeed, any smaller limit would leave $H_U(y_U)$ strictly positive while the capacity term converges to zero.
On this branch $\alpha_U^*(L)=1$ for every $L$, while
$$
zH_U(y_U(L))=L\big[\lambda(\mu_U+y_U(L))+\lambda_F\big]=O(L).
$$

\smallskip
\noindent\emph{Step 6 (screening and uniqueness).}
The selected premium satisfies $s^*>0$.
The favored-direction markup is therefore strictly above its no-pickoff floor $0$, and the unfavored-direction markup $2\sigma+s^*$ is strictly above its no-pickoff floor $2\sigma$.
Lemma~\ref{lem:no_pickoff} then implies that every $c$-trader obtains strictly negative surplus from every executable CLOB trade.
Each block has a unique solution, so the state-contingent continuation policy is unique.
This proves all claims.
\end{proof}

\subsection{Proof of Proposition~\ref{prop:interior_optimal_L}}
\label{App:proof_of_interior_L}
\begin{proof}
By mirror symmetry, the sign of the common shock does not affect payoff conditional on the active direction.
The favored and unfavored active-direction states occur with probability $\tfrac12$ each, and exactly one of them occurs after the Stage-1 common-flow trade.

\smallskip
\noindent\emph{Step 1 (conditional recovery accounting).}
The LMSR finite-trade cost is homogeneous in trade size, inventory, and liquidity:
$$
\widetilde T_X(\theta L;\xi L,p_0,L,\tau)
=L\widetilde T_X(\theta;\xi,p_0,1,\tau).
$$
The Stage-1 common-flow term is $-\ell_c L$, with $\ell_c$ as defined before~\eqref{eq:lp_payoff}.
Conditional on $D=F$, the fee-adjusted LMSR price rises from $\mu_F$ to $\mu_F+y_F(L)$.
Revenue net of the favored posterior liability is therefore
$$
L\int_{\lambda_F}^{\lambda(\mu_F+y_F(L))}\big(\lambda^{-1}(u)-\mu_F\big)\,du
=L\int_0^{y_F(L)}y\,d\lambda(\mu_F+y).
$$
Conditional on $D=U$, the fee-adjusted LMSR price rises from $\mu_U+w_\tau$ to $\mu_U+y_U(L)$.
Revenue net of the unfavored posterior liability is therefore
$$
L\int_{w_\tau}^{y_U(L)}y\,d\lambda(\mu_U+y).
$$
Weighting the two mutually exclusive active-direction states by $\tfrac12$ and adding the common-flow loss gives~\eqref{eq:lp_payoff}.

\smallskip
\noindent\emph{Step 2 (marginal recovery and its zero-depth limit).}
By Proposition~\ref{prop:active_leg_existence}, the conditional terminal displacements are continuously differentiable on $\mathcal I$.
Differentiating the recovery part of~\eqref{eq:lp_payoff} gives
$$
\rho(L)
=\frac12\int_0^{y_F(L)}y\,d\lambda(\mu_F+y)
+\frac12\int_{w_\tau}^{y_U(L)}y\,d\lambda(\mu_U+y)
+\frac{L}{2}y_F(L)\lambda'(\mu_F+y_F(L))y_F'(L)
+\frac{L}{2}y_U(L)\lambda'(\mu_U+y_U(L))y_U'(L).
$$
Proposition~\ref{prop:active_leg_existence} gives
$$
y_F(L)\uparrow s^*
$$
as $L\downarrow0$.
It also gives
$$
y_U(L)\uparrow\min\{2\sigma+s^*,\bar y_U\}
$$
as $L\downarrow0$.
When either limiting endpoint lies strictly inside its tail support, the corresponding derivative remains bounded, so its terminal-compression term multiplied by $L$ converges to zero.
It remains to verify the unfavored terminal-compression term when $2\sigma+s^*\ge\bar y_U$.
On that branch, the clearing equation is
$$
zH_U(y_U(L))=L\big[\lambda(\mu_U+y_U(L))+\lambda_F\big].
$$
Combining this equation with its derivative gives
$$
L y_U'(L)
=\frac{H_U(y_U(L))}{H_U'(y_U(L))-H_U(y_U(L))\dfrac{\lambda'(\mu_U+y_U(L))}{\lambda(\mu_U+y_U(L))+\lambda_F}}.
$$
The denominator is negative and has absolute value at least $-H_U'(y_U(L))$.
Because the tail density $-H_U'$ decreases toward the support endpoint,
$$
0\le\frac{H_U(y)}{-H_U'(y)}\le\bar y_U-y.
$$
Consequently,
$$
\big|L y_U'(L)\big|
\le\frac{H_U(y_U(L))}{-H_U'(y_U(L))}
\le\bar y_U-y_U(L)
\longrightarrow0.
$$
Thus every terminal-compression term vanishes as $L\downarrow0$, and
$$
\rho(0^+)
=\frac12\int_0^{s^*}y\,d\lambda(\mu_F+y)
+\frac12\int_{w_\tau}^{\min\{2\sigma+s^*,\bar y_U\}}y\,d\lambda(\mu_U+y).
$$
Assumption~\ref{ass:lmsr_recovery_boundary}(i) therefore gives
$$
\rho(0^+)>\ell_c.
$$

\smallskip
\noindent\emph{Step 3 (unique interior optimum).}
The payoff derivative is
$$
\Pi^{LMSR}{}'(L)=-\ell_c+\rho(L).
$$
Assumption~\ref{ass:lmsr_recovery_boundary}(ii) makes $\rho$ strictly decreasing on $\mathcal I$ and gives
$$
\lim_{L\uparrow\bar L}\rho(L)<\ell_c.
$$
Since $\rho(0^+)>\ell_c$, continuity implies that $\rho$ crosses $\ell_c$ exactly once at some $L^*\in\mathcal I$.
Hence
$$
\Pi^{LMSR}{}'(L)>0
\quad\text{for }L<L^*,
\qquad
\Pi^{LMSR}{}'(L)<0
\quad\text{for }L>L^*.
$$
Therefore $L^*$ is the unique maximizer on $\mathcal I$ and satisfies~\eqref{eq:lp_foc}.
Because $\Pi^{LMSR}(L)\to0$ as $L\downarrow0$ and the payoff initially rises, $\Pi^{LMSR}(L^*)>0$.
Assumption~\ref{ass:lmsr_recovery_boundary}(iii) implies that no choice outside the branch yields a larger payoff.

\smallskip
\noindent\emph{Step 4 (routing and recovery at the optimum).}
Since $L^*\in(0,\bar L)$, Proposition~\ref{prop:active_leg_existence} gives
$$
\beta_F^*(L^*)\in(0,1),
\qquad
\alpha_F^*(L^*)\in(0,1).
$$
If $2\sigma+s^*<\bar y_U$, the same proposition gives
$$
\alpha_U^*(L^*)\in(0,1).
$$
If $2\sigma+s^*\ge\bar y_U$, it instead gives
$$
\alpha_U^*(L^*)=1.
$$
The favored terminal satisfies $y_F(L^*)>0$, so the favored conditional recovery integral is strictly positive.
The unfavored terminal satisfies $y_U(L^*)>w_\tau$, so the unfavored conditional recovery integral is strictly positive.
Every unfavored-direction execution occurs at displacement at least $w_\tau$ above the unfavored posterior.
This proves the proposition.
\end{proof}

\subsection{Proof of Lemma~\ref{lem:queue_priority}}
\label{App:proof_for_queue_priority}
\begin{proof}
Fix the favored-direction state.
After Stage~1, the favored LMSR price is $\mu_F$, and the CLOB quote is $\mu_F+s$.
The mass that can execute at the CLOB is $zH_F(s)$.
By the log-odds capacity formula, the number of LMSR units available below the CLOB quote is
$$
L\big[\lambda(\mu_F+s)-\lambda_F\big].
$$
Convention~B assigns those units first to CLOB-eligible traders.
If their mass does not exceed this capacity, every CLOB-eligible trader receives an LMSR unit, the CLOB residual is zero, and lower-valuation traders continue filling the LMSR until the terminal displacement $y$ satisfies
$$
z-L\big[\lambda(\mu_F+s)-\lambda_F\big],
$$
which gives $V_F(s,L)$ after imposing nonnegativity.
If their mass on LMSR does not exceed this capacity, every $p$-trader receives an LMSR unit and the CLOB residual is zero, and the terminal displacement $y$ satisfies
$$
zH_F(y)=L\big[\lambda(\mu_F+y)-\lambda_F\big].
$$
The left side is strictly decreasing in $y$, the right side is strictly increasing, the left side exceeds the right side at $y=0$, and the residual condition makes the left side weakly below the right side at $y=s$.
Hence the solution is unique in $(0,s]$.

Fix next the unfavored-direction state.
After Stage~1, the fee-adjusted unfavored LMSR price is $\mu_U+w_\tau$, while the CLOB quote is $\mu_U+2\sigma+s$.
The mass that can execute at the CLOB is $zH_U(2\sigma+s)$.
The log-odds identity $\lambda(\mu_U+w_\tau)=-\lambda_F$ makes the number of LMSR units available below the CLOB quote
$$
L\big[\lambda(\mu_U+2\sigma+s)+\lambda_F\big].
$$
Convention~B again assigns those units first to CLOB-eligible traders.
If their mass exceeds this capacity, the LMSR reaches the CLOB quote and the remaining mass is $V_U(s,L)$.
If their mass does not exceed this capacity, the CLOB residual is zero and lower-valuation traders continue filling the LMSR until
$$
zH_U(y)=L\big[\lambda(\mu_U+y)+\lambda_F\big].
$$
The left side is strictly decreasing, the right side is strictly increasing, the right side is zero at $y=w_\tau$, and the zero-residual condition reverses the inequality by $y=2\sigma+s$.
Hence the solution is unique in $(w_\tau,2\sigma+s]$.

Dividing the LMSR-served CLOB-eligible mass by the corresponding eligible mass gives the two routing-share formulas.
The direction draw occurs after Stage~1, so these calculations are conditional and the probability $\tfrac12$ enters only when the two conditional revenues are averaged.
This proves the lemma.
\end{proof}

\subsection{Proof of Proposition~\ref{prop:binary_spillover}}
\label{App:proof_for_prop_binary_spillover}
\begin{proof}
Write
$$
\Lambda_F(s)\equiv\lambda(\mu_F+s)-\lambda_F,
\qquad
\Lambda_U(s)\equiv\lambda(\mu_U+2\sigma+s)+\lambda_F=\lambda(\mu_F+s)+\lambda_F.
$$
Both functions are strictly increasing, with $\Lambda_F(0)=0$ and $\Lambda_U(s)>0$.
The two uncapped residuals are strictly decreasing in $s$ because
$$
zH_F'(s)-L\Lambda_F'(s)<0,
\qquad
zH_U'(2\sigma+s)-L\Lambda_U'(s)<0.
$$
Thus each positive residual occupies an interval and can disappear only at a unique cutoff.

\smallskip
\noindent\emph{Step 1: business stealing on every smooth residual branch.}
When both residuals are positive,
$$
\Phi_D(s;L)=\tfrac12\Big[s\big(zH_F(s)-L\Lambda_F(s)\big)+(2\sigma+s)\big(zH_U(2\sigma+s)-L\Lambda_U(s)\big)\Big].
$$
Differentiating in depth gives
$$
\frac{\partial\Phi_D}{\partial L}=-\tfrac12\Big[s\Lambda_F(s)+(2\sigma+s)\Lambda_U(s)\Big]<0.
$$
Since $\Lambda_F'(s)=\Lambda_U'(s)=\lambda'(\mu_F+s)$,
$$
\frac{\partial^2\Phi_D}{\partial s\,\partial L}=-\Big[\lambda(\mu_F+s)+(\sigma+s)\lambda'(\mu_F+s)\Big]<0.
$$
If only the favored residual is positive, the corresponding derivatives are
$$
\frac{\partial\Phi_D}{\partial L}=-\tfrac12s\Lambda_F(s)<0,
\qquad
\frac{\partial^2\Phi_D}{\partial s\,\partial L}=-\tfrac12\Big[\Lambda_F(s)+s\Lambda_F'(s)\Big]<0.
$$
If only the unfavored residual is positive, they are
$$
\frac{\partial\Phi_D}{\partial L}=-\tfrac12(2\sigma+s)\Lambda_U(s)<0,
\qquad
\frac{\partial^2\Phi_D}{\partial s\,\partial L}=-\tfrac12\Big[\Lambda_U(s)+(2\sigma+s)\Lambda_U'(s)\Big]<0.
$$
Therefore depth strictly lowers maker revenue at every premium with positive revenue and tilts every smooth revenue branch toward lower premia.

\smallskip
\noindent\emph{Step 2: the entry margin.}
Let $L_2>L_1$ and let $s_2$ maximize $\Phi_D(\cdot;L_2)$ at a depth for which the peak value is positive.
Step~1 gives
$$
\Phi_D(s_2;L_1)>\Phi_D(s_2;L_2).
$$
Hence
$$
\sup_s\Phi_D(s;L_1)>\sup_s\Phi_D(s;L_2).
$$
The positive-part formula makes $\Phi_D$ jointly continuous in $(s,L)$, and the support can be closed at $\bar y_F$ by setting revenue equal to zero at the boundary.
The maximum theorem and the maintained uniqueness of the maximizer therefore make the peak value and $s^\ddagger(L)$ continuous.
It follows that the set of depths satisfying $\sup_s\Phi_D(s;L)\ge\kappa$ is an interval.
If its boundary $L^{det}$ lies below $\bar L$, continuity gives
$$
\sup_s\Phi_D(s;L^{det})=\kappa.
$$

\smallskip
\noindent\emph{Step 3: Convention~I.}
Assumption~\ref{ass:low_fee}(ii) and Step~1 imply
$$
\Phi_D(0;L)\le\Phi_D(0;0)<\kappa
$$
for every $L$ in the capacity-binding region.
The depth-uniform shape restriction therefore gives a unique increasing-branch root whenever the peak exceeds $\kappa$.
Take two such depths $L_2>L_1$ and write their roots as $s_D^*(L_2)$ and $s_D^*(L_1)$.
At the lower-depth root, Step~1 gives
$$
\Phi_D(s_D^*(L_1);L_2)<\Phi_D(s_D^*(L_1);L_1)=\kappa.
$$
Because $\Phi_D(\cdot;L_2)$ is strictly increasing up to its unique peak, its increasing-branch root must satisfy
$$
s_D^*(L_2)>s_D^*(L_1).
$$
Joint continuity and uniqueness make the root continuous in $L$.
On a differentiable segment with an unchanged active residual set, the implicit function theorem gives
$$
\frac{ds_D^*}{dL}=-\frac{\partial\Phi_D/\partial L}{\partial\Phi_D/\partial s}>0,
$$
because the selected root lies on the increasing branch.
If $L^{det}<\bar L$, the unique root at the boundary where the peak equals $\kappa$ is the peak itself, so
$$
\lim_{L\uparrow L^{det}}s_D^*(L)=s^\ddagger(L^{det}).
$$
Whenever the favored CLOB-eligible mass is positive,
$$
\alpha_F^*(L)=\min \left(1, \frac{L\Lambda_F(s_D^*(L))}{zH_F(s_D^*(L))}\right).
$$
While this share is below one, logarithmic differentiation of the uncapped ratio gives $\frac{d\log\alpha_F^*}{dL}>0$.
Whenever the unfavored CLOB-eligible mass is positive,
$$
\alpha_U^*(L)=\min\left(1, \frac{L\Lambda_U(s_D^*(L))}{zH_U(2\sigma+s_D^*(L))}\right).
$$
While this share is below one, the analogous logarithmic derivative is $\frac{d\log\alpha_U^*}{dL}>0$.
Every bracket is positive because capacity increases and tail mass decreases with the premium.
Thus each uncapped ratio rises strictly, and each capped share rises until it reaches one and remains there.

\smallskip
\noindent\emph{Step 4: Convention~II.}
Under Convention~II, the selected premium is the unique maximizer $s^\ddagger(L)$.
On every smooth branch, Step~1 makes $\partial^2\Phi_D/\partial s\,\partial L<0$, so an increase in $L$ makes the derivative at the old maximizer negative and moves the unique new maximizer weakly to the left.
At a regular interior peak, the implicit function theorem gives
$$
\frac{ds^\ddagger}{dL}=-\frac{\partial^2\Phi_D/\partial s\,\partial L}{\partial^2\Phi_D/\partial s^2}<0.
$$
It remains to check that a residual cutoff cannot create a rightward jump.
A favored cutoff satisfies $zH_F(s)=L\Lambda_F(s)$ and therefore moves according to
$$
\frac{ds}{dL}=\frac{\Lambda_F(s)}{zH_F'(s)-L\Lambda_F'(s)}<0.
$$
An unfavored cutoff satisfies $zH_U(2\sigma+s)=L\Lambda_U(s)$ and therefore moves according to
$$
\frac{ds}{dL}=\frac{\Lambda_U(s)}{zH_U'(2\sigma+s)-L\Lambda_U'(s)}<0.
$$
Both kinks move left, while Step~2 makes the unique maximizer continuous.
Consequently, $s^\ddagger(L)$ is globally nonincreasing through changes in the active residual set and strictly decreasing on every regular smooth branch.

Whenever the corresponding CLOB-eligible mass is positive, the logarithmic derivative of the uncapped routing ratio has the same form as in Step~3, but now $ds^\ddagger/dL<0$.
The elasticity condition in Proposition~\ref{prop:binary_spillover}(ii) makes the positive direct-depth term $1/L$ dominate the negative premium-compression term.
Hence each capped routing share is nondecreasing and rises strictly while it is below one.

\smallskip
\noindent\emph{Step 5: common markup movement and the handoff to Proposition~\ref{prop:crowdout_deterrence}.}
The favored- and unfavored-direction CLOB markups are $s_D^*(L)$ and $2\sigma+s_D^*(L)$ under either maker convention.
Their derivatives are therefore identical and their difference remains $2\sigma$.
Step~2 supplies the monotone entry margin used in Proposition~\ref{prop:crowdout_deterrence}(ii), while Lemma~\ref{lem:queue_priority} supplies the side-specific zero-residual thresholds used in part~(i) of that proposition.
This proves the proposition.
\end{proof}

\subsection{Proof of Proposition~\ref{prop:crowdout_deterrence}}
\label{App:proof_for_crowdout_deterrence}

\begin{proof}
\noindent\emph{Step 1: fixed-premium crowd-out thresholds.}
Fix $s>0$ with $H_F(s)>0$.
Lemma~\ref{lem:queue_priority} gives
$$
V_F(s,L)
=
\Big(
z
-
L\big[\lambda(\mu_F+s)-\lambda_F\big]
\Big)_+,
\qquad
V_U(s,L)
=
\Big(
z
-
L\big[\lambda(\mu_U+2\sigma+s)+\lambda_F\big]
\Big)_+.
$$
Because $s>0$ and $\lambda$ is strictly increasing,
$$
\lambda(\mu_F+s)-\lambda_F>0.
$$
Because $\mu_U+2\sigma=\mu_F$,
$$
\lambda(\mu_U+2\sigma+s)+\lambda_F
=
\lambda(\mu_F+s)+\lambda_F
>
\lambda(\mu_F+s)-\lambda_F.
$$
Hence $V_F(s,L)=0$ if and only if $L\ge L_F^{co}(s)$, while $V_U(s,L)=0$ if and only if $L\ge L_U^{co}(s)$.
Assumption~\ref{ass:tail_demand_distribution}(iii) gives $H_U(2\sigma+s)\le H_F(s)$ on the stated coverage range.
Using $\lambda_F>0$ and $H_F(s)>0$ therefore yields
$$
L_U^{co}(s)
=
\frac{zH_U(2\sigma+s)}
{\lambda(\mu_F+s)+\lambda_F}
\le
\frac{zH_F(s)}
{\lambda(\mu_F+s)+\lambda_F}
<
\frac{zH_F(s)}
{\lambda(\mu_F+s)-\lambda_F}
=
L_F^{co}(s).
$$
At an equilibrium pair satisfying
$$
L_U^{co}(s_D^*(L))
\le
L
<
L_F^{co}(s_D^*(L)),
$$
the first threshold equivalence gives $V_U(s_D^*(L),L)=0$, while the second gives $V_F(s_D^*(L),L)>0$.
Lemma~\ref{lem:queue_priority} then implies that the unfavored-direction state clears entirely on the LMSR and that the favored-direction state retains positive CLOB volume.

\smallskip
\noindent\emph{Step 2: the entry boundary and no-CLOB clearing.}
Proposition~\ref{prop:binary_spillover}(iii) makes
$$
L
\longmapsto
\sup_s\Phi_D(s;L)
$$
continuous and strictly decreasing whenever it is positive.
Assumption~\ref{ass:low_fee}(ii) gives
$$
\sup_s\Phi_D(s;0)>\kappa.
$$
If the set in the definition of $L^{det}$ is nonempty, continuity and strict monotonicity imply
$$
\sup_s\Phi_D(s;L)
\begin{cases}
>\kappa, & 0\le L<L^{det},\\
=\kappa, & L=L^{det},\\
<\kappa, & L^{det}<L<\bar L.
\end{cases}
$$
A book operates if and only if its peak revenue weakly exceeds $\kappa$, so no maker enters for $L\in(L^{det},\bar L)$.
For this proof only, write $\widehat y_F(L)$ and $\widehat y_U(L)$ for the unique solutions to
$$
zH_F\big(\widehat y_F(L)\big)
=
L
\Big[
\lambda\big(\mu_F+\widehat y_F(L)\big)
-
\lambda_F
\Big],
$$
and
$$
zH_U\big(\widehat y_U(L)\big)
=
L
\Big[
\lambda\big(\mu_U+\widehat y_U(L)\big)
+
\lambda_F
\Big].
$$
The favored equation has a unique solution in $(0,\bar y_F)$ because its left side is strictly decreasing from a positive value to zero and its right side is strictly increasing from zero.
The unfavored equation has a unique solution in $(w_\tau,\bar y_U)$ because Assumption~\ref{ass:low_fee}(v) gives $H_U(w_\tau)>0$, the log-odds identity gives $\lambda(\mu_U+w_\tau)+\lambda_F=0$, and the two sides again move in opposite directions.
The derivatives of the two clearing differences with respect to their displacement arguments are strictly negative, so the implicit function theorem makes both solutions continuous in $L$.
When no maker enters, all active tail demand faces the LMSR, and these two equations are exactly the unconstrained favored- and unfavored-direction clearing conditions from Lemma~\ref{lem:queue_priority}.
Thus both active-direction states clear entirely on the LMSR for every $L\in(L^{det},\bar L)$.

\smallskip
\noindent\emph{Step 3: the payoff jump when the favored CLOB ceiling binds at the boundary.}
For this proof only, let $\Pi^C(L)$ and $\Pi^M(L)$ denote the LP's induced coexistence and no-CLOB payoffs under Convention~B.
Lemma~\ref{lem:queue_priority} implies that the coexistence terminal displacement is the smaller of the unconstrained displacement and the corresponding CLOB markup.
Consequently,
$$
\begin{aligned}
\Pi^C(L)
={}&
-\ell_cL
+
\frac{L}{2}
\int_0^{\min\{\widehat y_F(L),s_D^*(L)\}}
y\,d\lambda(\mu_F+y)
\\
&+
\frac{L}{2}
\int_{w_\tau}^{\min\{\widehat y_U(L),2\sigma+s_D^*(L)\}}
y\,d\lambda(\mu_U+y),
\end{aligned}
$$
while the no-CLOB payoff is
$$
\Pi^M(L)
=
-\ell_cL
+
\frac{L}{2}
\int_0^{\widehat y_F(L)}
y\,d\lambda(\mu_F+y)
+
\frac{L}{2}
\int_{w_\tau}^{\widehat y_U(L)}
y\,d\lambda(\mu_U+y).
$$
Under maker Convention~I, Proposition~\ref{prop:binary_spillover}(i) gives
$$
\lim_{L\uparrow L^{det}}s_D^*(L)
=
s^\ddagger(L^{det}),
$$
and under maker Convention~II the same identity follows from $s_D^*(L)=s^\ddagger(L)$ and continuity.
Suppose that the unconstrained favored displacement at the boundary satisfies
$$
\widehat y_F(L^{det})
>
s^\ddagger(L^{det}).
$$
Continuity of the clearing roots and of the last coexistence premium then gives
$$
\begin{aligned}
&
\lim_{L\downarrow L^{det}}\Pi^M(L)
-
\lim_{L\uparrow L^{det}}\Pi^C(L)
\\
={}&
\frac{L^{det}}{2}
\int_{s^\ddagger(L^{det})}^{\widehat y_F(L^{det})}
y\,d\lambda(\mu_F+y)
\\
&+
\frac{L^{det}}{2}
\int_{\min\{\widehat y_U(L^{det}),2\sigma+s^\ddagger(L^{det})\}}^{\widehat y_U(L^{det})}
y\,d\lambda(\mu_U+y)
>
0.
\end{aligned}
$$
The first integral is strictly positive because its upper endpoint exceeds its lower endpoint and $\lambda$ is strictly increasing, while the second integral is nonnegative.
The common-flow loss is continuous and cancels from the two one-sided limits.
Therefore removal of the binding CLOB ceiling creates an upward jump in the LP's induced value at the entry boundary.

\smallskip
\noindent\emph{Step 4: optimal deterrence.}
Let a depth in $(L^{det},\bar L)$ attain the best monopoly payoff referred to in the proposition.
By Step~2, choosing that depth prevents maker entry.
If this payoff strictly exceeds the best coexistence payoff over depths below $L^{det}$, it also exceeds the coexistence value at $L^{det}$ because $\Pi^C(L)$ is continuous up to the boundary and that boundary value is the limit of payoffs from below.
It therefore dominates every depth at which a maker operates.
Hence the Stackelberg LP optimally chooses a no-entry depth and deters maker entry.
This proves the proposition.
\end{proof}

\subsection{Proof of Lemma~\ref{lem:separability}}
\label{App:proof_for_separability}

\begin{proof}
Suppose first that
$$
s^{kj}=a_k+b_j
$$
for every $k\ne j$.
Then
$$
\begin{aligned}
s^{12}+s^{23}+s^{31}
&=
(a_1+b_2)+(a_2+b_3)+(a_3+b_1)
\\
&=
(a_1+a_2+a_3)+(b_1+b_2+b_3)
\\
&=
(a_2+b_1)+(a_3+b_2)+(a_1+b_3)
\\
&=
s^{21}+s^{32}+s^{13}.
\end{aligned}
$$
Because the benchmark differences telescope around either directed cycle, the same identity holds for the executable prices:
$$
S^{12}+S^{23}+S^{31}
=
S^{21}+S^{32}+S^{13}.
$$

Conversely, suppose the premium cycle identity holds.
Choose the gauge normalization
$$
a_1=0
$$
and define
$$
\begin{aligned}
b_2&=s^{12},
&
b_3&=s^{13},
&
a_2&=s^{23}-s^{13},
\\
b_1&=s^{21}-s^{23}+s^{13},
&
a_3&=s^{31}-s^{21}+s^{23}-s^{13}.
\end{aligned}
$$
These definitions immediately give
$$
a_1+b_2=s^{12},
\qquad
a_1+b_3=s^{13},
\qquad
a_2+b_3=s^{23},
$$
and
$$
a_2+b_1=s^{21},
\qquad
a_3+b_1=s^{31}.
$$
The cycle identity can be rearranged as
$$
s^{32}
=
s^{12}+s^{23}+s^{31}-s^{21}-s^{13}.
$$
Therefore the remaining equation also holds:
$$
a_3+b_2=s^{32}.
$$
Hence the premium matrix is additively separable.

Setting
$$
A_i=r_i+a_i,
\qquad
B_i=r_i-b_i
$$
then gives
$$
A_k-B_j
=
(r_k-r_j)+a_k+b_j
=
(r_k-r_j)+s^{kj}
=
S^{kj}
$$
for every $k\ne j$.
A common gauge shift sends these primitive quotes to
$$
A_i+c,
\qquad
B_i+c
$$
without changing any switch price.
The algebraic implementation is a feasible committed CLOB quote vector exactly when some common shift satisfies
$$
0
\le
B_i+c
\le
A_i+c
\le
1
$$
for every $i$.
This proves both the separability characterization and the separate feasibility clause.
\end{proof}

\subsection{Proof of Lemma~\ref{lem:information_defect}}
\label{App:proof_for_information_defect}
\begin{proof}
Fix a benchmark belief $r$ and suppress the $r$ subscript. By
\eqref{eq:switch_prices_and_premium},
$$
s^{kj}=a_k+b_j
$$
for every ordered switch $k\to j$. If a real number $c$ shifts primitive
markups as
$$
a_i\mapsto a_i+c,\qquad b_i\mapsto b_i-c,
$$
and the shifted primitive quotes remain feasible, then
$$
(a_k+c)+(b_j-c)=a_k+b_j.
$$
Thus every ordered switch premium is unchanged. Since
$A_i=r_i+a_i$ and $B_i=r_i-b_i$, the same gauge shifts both primitive quote
levels in book $i$ by the same amount:
$$
A_i\mapsto A_i+c,\qquad B_i\mapsto B_i+c.
$$
Therefore each executable switch price is also unchanged:
$$
(A_k+c)-(B_j+c)=A_k-B_j.
$$
Conversely, consider any perturbation of primitive markups that preserves all ordered switch premia. Then for every $k\ne j$,
$$
\delta a_k+\delta b_j=0.
$$
Because there are three outcomes, for any two distinct origins one can choose a destination different from both. 
Comparing the two equations with that common destination gives the same value of $\delta a$ for the two origins. 
Hence all $\delta a_i$ are equal. 
Substituting back into $\delta a_k+\delta b_j=0$ gives all $\delta b_j$ equal to the negative of that common value. 
Therefore every premium-preserving perturbation is exactly
$$
\delta a_i=c,\qquad \delta b_i=-c.
$$
The kernel is one-dimensional, truncated only by primitive quote feasibility.

The economic implication follows immediately. 
The gauge leaves every ordered switch premium, and hence every realization's
premium $s+\delta_r$, participation threshold, routing share, and the pooled
free-entry value $\Phi_D$, unchanged. 
But it changes the common level of primitive asks and bids, for example every quote midpoint shifts by $c$. 
Thus the CLOB identifies relative switch costs but not a unique claim-level quote belief without an external normalization.
\end{proof}

\subsection{Proof of Corollary~\ref{cor:belief_error}}
\label{App:proof_for_belief_error}

\begin{proof}
\noindent\emph{Part (i).}
Fix a realization in which outcome $k$ is favored and outcomes $i$ and $j$ are unfavored.
By the symmetric shock,
$$
\mu_i(\sigma_c)
=
\mu_j(\sigma_c)
=
\frac13-\frac{\sigma}{2}.
$$
Therefore,
$$
\mu_i(\sigma_c)-\mu_j(\sigma_c)=0.
$$
The symmetric committed quote vector uses the same ask markup $a^*$ and bid markup $b^*$ on every outcome.
Hence,
$$
S^{ij}
=
A_i-B_j
=
a^*+b^*,
$$
and similarly,
$$
S^{ji}
=
a^*+b^*.
$$
The free-entry construction in \S\ref{subsec:3o_info} gives
$$
a^*+b^*
=
s^*+\frac32\sigma.
$$
It follows that
$$
S^{ij}
-
\big(
\mu_i(\sigma_c)-\mu_j(\sigma_c)
\big)
=
s^*+\frac32\sigma,
$$
and
$$
S^{ji}
-
\big(
\mu_j(\sigma_c)-\mu_i(\sigma_c)
\big)
=
s^*+\frac32\sigma.
$$
Because every realization contains exactly two unfavored outcomes, this distortion occurs with probability one.

\smallskip
\noindent\emph{Part (ii).}
Because $r$ is a probability vector, choose
$$
\ell^*
\in
\arg\min_{\ell}r_\ell.
$$
Then,
$$
r_{\ell^*}
\le
\frac13.
$$
In the realization in which outcome $\ell^*$ is favored,
$$
\mu_{\ell^*}(\sigma_c)
=
\frac13+\sigma.
$$
Therefore,
$$
\begin{aligned}
\big|
r_{\ell^*}
-
\mu_{\ell^*}(\sigma_c)
\big|
&\ge
\left(
\frac13+\sigma
\right)
-r_{\ell^*}
\\
&\ge
\sigma.
\end{aligned}
$$
Outcome $\ell^*$ is favored with probability $\tfrac13$, which proves
$$
\Pr\!\left(
\max_{\ell}
\big|
r_\ell-\mu_\ell(\sigma_c)
\big|
\ge
\sigma
\right)
\ge
\frac13.
$$

\smallskip
\noindent\emph{Part (iii).}
Stage-1 $c$-flow changes only the inventory of the favored outcome.
The two unfavored inventories therefore remain equal.
The LMSR softmax map then gives
$$
p_i\big(\beta^*(L)\big)
=
p_j\big(\beta^*(L)\big),
$$
and hence
$$
p_i\big(\beta^*(L)\big)
-
p_j\big(\beta^*(L)\big)
=
0.
$$
This equals their realized posterior gap.
The favored $c$-flow stops when its fee-adjusted purchase price reaches the favored posterior:
$$
\frac{
p_k\big(\beta^*(L)\big)
}{
1-\tau
}
=
\mu_F^{(3)}.
$$
Multiplying by $1-\tau$ gives
$$
p_k\big(\beta^*(L)\big)
=
(1-\tau)\mu_F^{(3)}.
$$
This proves the LMSR comparison.
\end{proof}

\subsection{Proof of Theorem~\ref{thm:no_quoting_equilibrium}}
\label{App:proof_no_quoting_eq}

\begin{proof}
\noindent\emph{Step 1: expected revenue of one symmetric book.}
Fix a common live-switch premium $s$ and a common split $m_a+m_b=s$.
Each ordered direction is active with probability $\tfrac13$ and then carries mass $\tfrac{z^{(3)}}2H^{(3)}(s)$, so its ex-ante mass is $\tfrac{z^{(3)}}6H^{(3)}(s)$.
A given book supplies the ask leg of two ordered directions and the bid leg of two ordered directions.
Its expected ask volume and expected bid volume are therefore both $\tfrac13z^{(3)}H^{(3)}(s)$.
Its expected revenue is
$$
\frac13z^{(3)}H^{(3)}(s)(m_a+m_b)
=
\frac13z^{(3)}sH^{(3)}(s)
=
\frac13\Phi^{(3)}(s).
$$
At $s=s^*$, equation~\eqref{eq:lp_free_entry_3o} gives
$$
\frac13\Phi^{(3)}(s^*)=\kappa.
$$
Thus one active book per outcome breaks even at every feasible common split of $s^*$.

\smallskip
\noindent\emph{Step 2: competition selects the smallest root.}
Log-concavity makes $sH^{(3)}(s)$ single-peaked, and the selected root $s^*$ lies on its strictly increasing branch.
Hence
$$
sH^{(3)}(s)<s^*H^{(3)}(s^*)
$$
for every $0\le s<s^*$.
If a symmetric standing configuration has $s<s^*$, even a book receiving all available flow earns less than $\kappa$, so the configuration is not entry-stable.
Now suppose that a symmetric standing configuration has $s>s^*$.
Choose $p\in(s^*,s)$ sufficiently close to $s^*$ that
$$
pH^{(3)}(p)>s^*H^{(3)}(s^*).
$$
An entrant can post one book for each outcome with margins
$$
m_a'=\frac{p}{s}m_a,
\qquad
m_b'=\frac{p}{s}m_b.
$$
These quotes remain feasible and pickoff-proof because both margins move weakly toward their no-pickoff floors.
Every positive-margin leg is strictly better than the corresponding standing leg, while a zero-margin leg merely ties and contributes no revenue.
The entrant therefore supplies every positive-margin component of every live switch at premium $p$ and earns total expected revenue
$$
\Phi^{(3)}(p)>\Phi^{(3)}(s^*)=3\kappa.
$$
Posting the three books costs $3\kappa$, so the deviation is strictly profitable.
No symmetric premium above $s^*$ is therefore entry-stable.
Together with the preceding argument, this proves
$$
s_{\mathrm{eff}}=s^*.
$$

\smallskip
\noindent\emph{Step 3: no book-level entrant can profit at $s^*$.}
Consider any entrant that posts a finite collection of two-sided outcome books against a standing feasible split $m_a+m_b=s^*$.
For any outcome, only the entrant's best ask and best bid can receive positive flow, so additional weaker books only add posting costs.
Consider first an ordered direction on which the entrant wins only the ask leg at margin $x<m_a$.
The direction then trades at premium $x+m_b<s^*$, and the entrant's unscaled ask revenue is $xH^{(3)}(x+m_b)$.
For $0\le x<m_a$,
$$
\frac{d}{dx}
\left[
xH^{(3)}(x+m_b)
\right]
=
H^{(3)}(x+m_b)
+
xH^{(3)\prime}(x+m_b).
$$
Because $H^{(3)\prime}\le0$ and $x\le x+m_b$,
$$
H^{(3)}(x+m_b)
+
xH^{(3)\prime}(x+m_b)
\ge
H^{(3)}(x+m_b)
+
(x+m_b)H^{(3)\prime}(x+m_b)
>
0.
$$
The final inequality follows because $(x+m_b)H^{(3)}(x+m_b)$ lies on the strictly increasing branch below $s^*$.
Therefore
$$
xH^{(3)}(x+m_b)
<
m_aH^{(3)}(s^*).
$$
The symmetric argument gives
$$
yH^{(3)}(m_a+y)
<
m_bH^{(3)}(s^*)
$$
when the entrant wins only the bid leg at margin $y<m_b$.
If the entrant wins both legs of a direction, its margins satisfy $x\le m_a$ and $y\le m_b$, with at least one strict inequality, so
$$
(x+y)H^{(3)}(x+y)
<
s^*H^{(3)}(s^*).
$$
If the entrant merely matches a standing leg, tie splitting weakly reduces its revenue and makes the corresponding bound strict whenever that leg carries a positive margin.
Thus, direction by direction, the entrant earns strictly less than the standing ask component $m_aH^{(3)}(s^*)$, the standing bid component $m_bH^{(3)}(s^*)$, or their sum, according to which legs it supplies.
Summing over the six ordered directions, each distinct outcome on which the entrant posts a book can be charged at most two ask components and two bid components.
After multiplying by the ex-ante direction mass $\tfrac{z^{(3)}}6$, the resulting revenue bound per distinct quoted outcome is
$$
\frac{z^{(3)}}6
\,2(m_a+m_b)H^{(3)}(s^*)
=
\frac13z^{(3)}s^*H^{(3)}(s^*)
=
\kappa.
$$
The inequality is strict for every entrant receiving positive revenue, while posting multiple books on the same outcome only increases total cost.
Hence an entrant posting $r$ books earns strictly less than $r\kappa$ and cannot profit.
If more than one standing book is posted on the same outcome at the common split, those books divide a total outcome-book revenue of $\kappa$ while jointly paying more than $\kappa$ in posting costs.
A contestably entry-stable configuration therefore has exactly one active book per outcome.
This proves part~(i) and the claimed characterization within the symmetric common-split class.

\smallskip
\noindent\emph{Step 4: common-gauge invariance and level indeterminacy.}
At a feasible common split, the primitive quotes are
$$
A_i=\frac13+\sigma+m_a,
\qquad
B_i=\frac13-\frac\sigma2-m_b.
$$
A common translation by $c$ changes the margins to
$$
m_a+c,
\qquad
m_b-c,
$$
so their sum remains $s^*$.
For every switch $k\to j$,
$$
(A_k+c)-(B_j+c)=A_k-B_j.
$$
Because the same constant is added to every active ask and bid, every leg ranking and tie set is unchanged.
Routing, demand, and execution quantities are therefore unchanged.
Each book's expected revenue after translation is
$$
\frac13z^{(3)}H^{(3)}(s^*)
\big[
(m_a+c)+(m_b-c)
\big]
=
\frac13z^{(3)}s^*H^{(3)}(s^*)
=
\kappa.
$$
The deviation argument in Step~3 depends only on the feasible split and the sum $s^*$, so the translated configuration remains contestably entry-stable.
The translated quotes remain feasible and pickoff-proof exactly when
$$
\max
\left\{
-m_a,\,
m_b-\left(\frac13-\frac\sigma2\right)
\right\}
\le
c
\le
\min
\left\{
m_b,\,
\frac23-\sigma-m_a
\right\}.
$$
This interval contains $0$ because the original quotes are feasible.
Assumption~\ref{ass:low_fee_3o}(iii) implies $A_i<1$ for every feasible split.
If $m_b>0$, a sufficiently small positive $c$ is feasible.
If $m_b=0$, then $m_a=s^*>0$ and $B_i=\tfrac13-\tfrac\sigma2>0$, so a sufficiently small negative $c$ is feasible.
The feasible gauge interval therefore contains at least one nonzero translation.
The primitive quote level is consequently not identified.
By contrast, translating only the book on outcome $i$ changes every outgoing switch $i\to j$ by $+c$ and every incoming switch $k\to i$ by $-c$ whenever the shifted book remains the winning leg.
A relative translation can therefore change package prices, rankings, routing, and profit, and is not part of the invariant gauge.
This proves parts~(ii) and~(iii).
\end{proof}

\subsection{Proof of Proposition~\ref{prop:self_collateral}}
\label{App:proof_self_collateral}

\begin{proof}
\noindent\emph{Part (i).}
For the LMSR,
$$
p_i(q)
=
\frac{e^{q_i/L}}{\sum_j e^{q_j/L}}.
$$
Suppose first that
$$
p(q')=p(q).
$$
Equality of all log price ratios gives
$$
q'_i-q'_j
=
q_i-q_j
$$
for every $i$ and $j$.
Therefore $q'_i-q_i$ is constant across states, so
$$
q'-q
=
c\mathbf 1
$$
for some $c$.
Conversely, the softmax formula immediately gives
$$
p(q+c\mathbf 1)
=
p(q).
$$
The cost function satisfies
$$
C_L(q+c\mathbf 1)
=
C_L(q)+c.
$$
A complete basket also increases the state-$i$ settlement liability by $c$ for every $i$.
Its net payoff is therefore zero in every state.

Every interior probability vector $p$ is attained by the inventory representative
$$
q_i
=
L\log p_i.
$$
Any other representative of the same probability vector differs by a common translation.
Thus the quotient of inventory space by $\operatorname{span}(\mathbf 1)$ is in bijection with the interior of the probability simplex.

For the CLOB, Lemma~\ref{lem:information_defect} shows that every admissible switch-price-preserving reposting has the form
$$
A_i\mapsto A_i+c,
\qquad
B_i\mapsto B_i+c.
$$
Reposting quotes without an execution changes no participant's position or settlement payoff.
The gauge nevertheless moves every primitive quote level by $c$.
Other quote movements are not in the kernel characterized by Lemma~\ref{lem:information_defect} and therefore alter at least one executable switch price.

\smallskip
\noindent\emph{Part (ii).}
Choose representatives
$$
q_i
=
L\log p_i+k,
\qquad
q'_i
=
L\log p'_i+k'.
$$
Because the probabilities sum to one,
$$
C_L(q)
=
L\log\sum_i p_i e^{k/L}
=
k,
$$
and similarly
$$
C_L(q')
=
k'.
$$
The state-$i$ inventory change is
$$
q'_i-q_i
=
L\log\frac{p'_i}{p_i}
+
(k'-k).
$$
Subtracting the cost increment gives
$$
\begin{aligned}
(q'_i-q_i)
-
\big(C_L(q')-C_L(q)\big)
&=
L\log\frac{p'_i}{p_i}
+
(k'-k)
-
(k'-k)
\\
&=
L\log\frac{p'_i}{p_i}.
\end{aligned}
$$
Taking expectations under $\pi$ gives
$$
\begin{aligned}
\sum_i
\pi_i
L\log\frac{p'_i}{p_i}
&=
L\sum_i
\pi_i
\left(
\log\frac{\pi_i}{p_i}
-
\log\frac{\pi_i}{p'_i}
\right)
\\
&=
L
\left[
\mathrm{KL}(\pi\|p)
-
\mathrm{KL}(\pi\|p')
\right].
\end{aligned}
$$

Under the proportional fee convention, the trader pays
$$
\frac{C_L(q')-C_L(q)}{1-\tau}
$$
for a purchase with a nonnegative cost increment.
Therefore the fee-adjusted state-$i$ payoff is
$$
\begin{aligned}
(q'_i-q_i)
-
\frac{C_L(q')-C_L(q)}{1-\tau}
&=
(q'_i-q_i)
-
\big(C_L(q')-C_L(q)\big)
\\
&\qquad
-
\frac{\tau}{1-\tau}
\big(C_L(q')-C_L(q)\big)
\\
&=
L\log\frac{p'_i}{p_i}
-
\frac{\tau}{1-\tau}
\big(C_L(q')-C_L(q)\big).
\end{aligned}
$$
The second term is the same in every state.

Suppose the resulting payoff were constant across states.
Then
$$
\log\frac{p'_i}{p_i}
$$
would be constant across $i$.
There would therefore be a constant $d>0$ such that
$$
p'_i
=
d p_i
$$
for every $i$.
Because both probability vectors sum to one, this requires
$$
d=1,
$$
and hence
$$
p'=p.
$$
Thus every nontrivial movement of the LMSR display creates state-dependent exposure, both before and after the state-independent fee charge.

\smallskip
\noindent\emph{Part (iii).}
For every atomic switch $k\to j$, the translated net package price is
$$
(A_k+c)-(B_j+c)
=
A_k-B_j.
$$
Because both quote collections remain pickoff-proof, Stage-1 $c$-flow executes against neither collection.
Because every Stage-2 participation decision depends on the atomic switch price, the same traders execute the same switch quantities under both collections.
Their net cash payments and settlement positions are therefore identical in every realization.
The reposting itself executes no trade, so it changes no existing position, while every primitive quote level moves by $c$.

If the same translation is applied to every active maker, all ask rankings, bid rankings, and tie sets are also unchanged.
A relative translation of only one maker can change those rankings and is therefore not an invariance of the multi-maker market.
This proves that the CLOB display is uncollateralized along the common gauge.
\end{proof}

\subsection{Proof of Proposition~\ref{prop:event_level_collateral_efficiency}}
\label{App:proof_of_event_level_coll}

\begin{proof}
\noindent\emph{LMSR worst-case loss.}
The normalized cost satisfies
$$
C_L(0)=0.
$$
If terminal inventory is $q$ and outcome $i$ realizes, the fee-free maker has collected $C_L(q)-C_L(0)$ and owes $q_i$, so its loss is
$$
q_i-
\big(
C_L(q)-C_L(0)
\big)
=
q_i-C_L(q).
$$
For every $q$ and $i$,
$$
C_L(q)
=
L\log
\left(
\frac1n
\sum_{j=1}^n
e^{q_j/L}
\right)
\ge
L\log
\left(
\frac1n
e^{q_i/L}
\right)
=
q_i-L\log n.
$$
Hence
$$
q_i-C_L(q)
\le
L\log n.
$$
To show that the bound is sharp, set $q_i=t$ and $q_j=0$ for every $j\ne i$.
Then
$$
\begin{aligned}
q_i-C_L(q)
&=
t-
L\log
\left(
\frac{e^{t/L}+n-1}{n}
\right)
\\
&=
L\log n
-
L\log
\left(
1+(n-1)e^{-t/L}
\right)
\longrightarrow
L\log n
\end{aligned}
$$
as $t\to\infty$.
The bound is approached but is not attained at any finite inventory.
Therefore
$$
K_n^{LMSR}(L)
=
L\log n.
$$

\smallskip
\noindent\emph{Segregated CLOB collateral.}
Under the proposition's segregation assumption, each of the $n$ books must independently collateralize exposure $\vartheta$.
Because the platform does not net settlement obligations across books, event-level collateral is the sum
$$
K_n^{CLOB}(\vartheta)
=
\sum_{i=1}^n
\vartheta
=
n\vartheta.
$$

\smallskip
\noindent\emph{Exact and asymptotic comparisons.}
Dividing the two exact requirements gives
$$
\frac{K_n^{CLOB}(\vartheta)}
{K_n^{LMSR}(L)}
=
\frac{\vartheta}{L}
\frac{n}{\log n}.
$$
If $\vartheta/L$ is bounded above and bounded away from zero uniformly in $n$, multiplying $n/\log n$ by that factor preserves its asymptotic order.
Thus
$$
\frac{K_n^{CLOB}}
{K_n^{LMSR}}
=
\Theta
\left(
\frac{n}{\log n}
\right).
$$
Finally, a nonnegative fee retained by the LMSR LP adds cash revenue without increasing state-contingent settlement liabilities.
It therefore weakly lowers realized loss relative to the fee-free calculation, so $L\log n$ is a conservative fee-inclusive collateral bound.
\end{proof}

\section{Empirical Construction of the Stylized Facts}

\subsection{Three Facts}
\label{app:empirical}
This appendix records the data, classification, and measurement behind the three stylized facts of~\S\ref{sec:stylized_facts}. The
purpose is to make each headline number reproducible and to show that the regularities do not depend on discretionary choices in the pipeline.

\paragraph{Data and sample.}
The data are the full transaction record of Bitcoin five-minute
up/down binary markets on a major prediction-market CLOB: $738{,}393$
trades across $275$ markets over a roughly twenty-four-hour window,
with total notional volume of \$10.38M executed by $15{,}134$ distinct
accounts. Each market lists two complementary Arrow--Debreu claims that
settle to $0$ or $1$ at the five-minute mark. Makers pay no fee and
takers pay $1.8\%$; because maker PnL is measured net of fees and makers
are unrebated, the positive maker PnL documented below is not a
fee-rebate artifact.

\paragraph{Account keying.}
We key accounts on wallet address rather than on the platform display
label. The two keys differ for exactly one reason: a single shared
``anonymous'' display label aggregates $1{,}328$ distinct wallets, and
keying on the label alone would pool these unrelated accounts into one
fictitious super-account. Wallet keying dissolves that aggregate and
removes the need for any ad hoc exclusion; every other account is
one-to-one across the two keys.

\paragraph{Account classification.}
We classify each account by its trading footprint into market maker,
retail, or unclassified, using two-sidedness of quoting, round-trip
frequency, within-market inventory mean-reversion, holding horizon, and
clip size. 
Table~\ref{tab:app_classification} reports the wallet-keyed
partition. Within the maker class we isolate the \emph{persistent-maker} cohort of $407$ accounts that quote across many markets; the residual $44$ maker-classified accounts are structurally similar but persistently unprofitable, and we report the persistent cohort in the body.

\begin{table}
\caption{Wallet-keyed account classification. Volume shares sum to one
across classes; PnL figures are net of fees and reconcile to the book
total of $-\$51.0\text{K}$.}
\label{tab:app_classification}
\centering
\begin{tabular}{lccc}
\hline
Class & Accounts & Volume share & Net PnL \\
\hline
Market maker & $451$ & $24.9\%$ & $+\$29.0\text{K}$ \\
\quad of which persistent ($407$) & $407$ & --- & $+\$45.3\text{K}$ \\
\quad of which residual ($44$) & $44$ & --- & $-\$16.3\text{K}$ \\
Retail & $12{,}749$ & $55.0\%$ & $-\$66.7\text{K}$ \\
Unclassified & $1{,}934$ & $20.1\%$ & $-\$13.3\text{K}$ \\
\hline
Total & $15{,}134$ & $100\%$ & $-\$51.0\text{K}$ \\
\hline
\end{tabular}
\end{table}

\paragraph{PnL decomposition (Facts 1 and 2).}
We split each maker's PnL into the component realized at resolution and
the component realized on intraperiod round trips. For the
persistent-maker cohort, $91.4\%$ of profit is resolution PnL and
$8.6\%$ is spread. Decomposing the resolution component by trade
direction and execution price (Table~\ref{tab:app_pnl_decomp}) shows
that the profit comes from buying under-priced favorites rather than
from selling them: the buy side earns roughly $+\$70\text{K}$ while the
entire sell side contributes about $+\$2.6\text{K}$. The carried
positions are informed rather than random---a maker's directional
inventory resolves in its favor $56.8\%$ of the time.

\begin{table}
\caption{Persistent-maker PnL by trade direction and execution-price
band. The listed components are the dominant contributors and do not sum
to the cohort total, which nets all positions across every band.}
\label{tab:app_pnl_decomp}
\centering
\begin{tabular}{lc}
\hline
Component & PnL \\
\hline
Buy, price $0.50$--$0.80$ & $+\$52.4\text{K}$ \\
Buy, price $0.80$--$1.00$ & $+\$24.8\text{K}$ \\
Sell, price ${\ge}\,0.80$ & $-\$0.5\text{K}$ \\
Buy side, all bands & ${\approx}\,+\$70\text{K}$ \\
Sell side, all bands & ${\approx}\,+\$2.6\text{K}$ \\
\hline
Cohort total (net of all positions) & $+\$45.3\text{K}$ \\
Resolution share / spread share & $91.4\%\,/\,8.6\%$ \\
\hline
\end{tabular}
\end{table}

\paragraph{Regime classification and calibration (Fact 3).}

\begin{table}
\caption{Calibration gap (average transaction price minus empirical
resolution frequency; positive ${=}$ overpriced) by price band and
price-path regime. The near-zero unconditional gaps hide a sign-flipping
regime interaction.}
\label{tab:regime_calibration}
\centering
\begin{tabular}{lccc}
\hline
Price band & Unconditional & Trending & Flippy \\
\hline
Longshot $[0.20,\,0.40)$ & $+0.043$ & $+0.225$ & $-0.071$ \\
Favorite $(0.60,\,0.80]$ & $-0.037$ & $-0.205$ & $+0.066$ \\
\hline
\end{tabular}
\end{table}

We label each market's within-window price path as \emph{trending} or
\emph{flippy} with a run-based classifier: paths dominated by a single
directional run are trending, and paths with frequent sign reversals are
flippy. For each trade we define the calibration gap as the execution
price minus the empirical resolution frequency of that side, so a
positive gap means the side is overpriced. Pooled across regimes the
band-level gaps are small and of mixed sign
(Table~\ref{tab:app_unconditional}); conditioning on the regime reveals
that the mid-band pooled gaps are the average of large opposite-signed
regime values, which is the reversal reported in
Table~\ref{tab:regime_calibration}. The boundaries are asymmetric: the
upper boundary is calibrated (an average price of ${\approx}\,\$0.982$
against a $98.4\%$ resolution frequency), while the lower boundary stays
overpriced (${\approx}\,\$0.024$ against $1.4\%$).

\begin{table}[H]
\caption{Unconditional (pooled-across-regime) calibration gap by price
band. The small and mixed-sign mid-band values average over the
sign-flipping regime interaction of Table~\ref{tab:regime_calibration}.}
\label{tab:app_unconditional}
\centering
\begin{tabular}{lc}
\hline
Price band & Unconditional gap \\
\hline
$[0.00,\,0.05)$ & $+0.010$ \\
$[0.20,\,0.40)$ & $+0.043$ \\
$[0.60,\,0.80)$ & $-0.037$ \\
$[0.95,\,1.00)$ & $-0.002$ \\
\hline
\end{tabular}
\end{table}

\paragraph{Robustness.}
The trade- and market-level regularities---the calibration gaps, the
regime sign reversal, and the maker rent decomposition---are invariant
to whether accounts are keyed on display label or wallet address;
keying changes only account-level magnitudes, and does so solely through
the disaggregation of the shared anonymous label described above. The
regime sign reversal survives market-level cluster-bootstrap resampling,
so it is not an artifact of a few heavily traded markets.

\subsection{Spread and the Outcome Count}
\label{app:empirical_spread}

Completing an outcome set on separate binary books costs more than a dollar, and more so as outcomes are added; the excess is per-leg committed markup, while the books' displayed centers sum to par on average and exceed it only when books are compared at the same age.
This subsection documents both patterns in a different slice of the~\citet{akey2026wins} record from the maker-PnL facts above: the large-scale Polymarket transaction history from November 11, 2022 through March 29, 2026, with multi-outcome books reconstructed from their binary components.
The analysis was run as a pre-registered sequence---the rules fixing how each possible result pattern would be read were committed before the robustness battery ran---and every regression below is written out with its estimator and standard-error type; throughout, $\beta$ denotes the coefficient on $\log n_i$, where $n_i$ is event $i$'s outcome count.

\paragraph{Multi-outcome reconstruction.}
The native \texttt{n\_outcomes} field cannot supply multi-outcome price data: the markets it flags all closed before the CTF Exchange that sources the trade record existed, so they carry no trades.
The live platform instead represents an $n$-way question as $n$ independent binary Yes/No markets grouped under one event identifier, one market per outcome with its own Yes/No token pair.
We reconstruct the $n$-outcome book by grouping every market whose outcomes are exactly $\{\text{Yes},\text{No}\}$ under a common event and summing the Yes-leg price across siblings.
Grouping alone is not enough: sports events bundle unrelated prop bets under one identifier, and their Yes prices need not sum to one (``Lakers win'' and ``LeBron scores 19+'' can both resolve Yes).
We keep only the events a genuine $n$-way question must satisfy---every sibling resolved, and exactly one resolved Yes---which discards prop bundles and unresolved events and leaves $16{,}036$ validated events with outcome counts $2$--$10$, of which $9{,}988$ ever have every leg priced on both sides.

\paragraph{Signed premium and its decomposition.}
The premium of a book at time $t$ is the cost of buying YES on every outcome, minus one dollar; it is signed by design (negative = the set is an arbitrage in the buyer's favor, positive = a cost to whoever completes the set).
Pricing each leg $j$ from \emph{both} sides of its book---the ask $a_{jt}$ (last taker-buy execution) and the bid $b_{jt}$ (last maker-buy execution), with mid $m_{jt}=(a_{jt}+b_{jt})/2$---splits the premium as an accounting identity into the model's two objects:
$$
\underbrace{\textstyle\sum_j a_{jt}-1}_{y^{\text{ask}}_{it}}
\;=\;
\underbrace{\textstyle\sum_j m_{jt}-1}_{y^{\text{mid}}_{it}:\ \text{book-center term}}
\;+\;
\underbrace{\textstyle\sum_j (a_{jt}-m_{jt})}_{y^{\text{HS}}_{it}:\ \text{committed-markup term}} ,
$$
so the estimated $\log n$ slope of the premium decomposes exactly, $\beta_{\text{ask}}=\beta_{\text{mid}}+\beta_{\text{HS}}$: mispricing of the display versus accumulation of the per-leg premium of~\eqref{eq:switch_prices_and_premium}.

\paragraph{Pricing each book over its life.}
The record has no live order book, only executed trades, so each leg is priced at any moment by its most recent prior trade on the relevant side, with the age of that price, $\Delta_{jt}$ (minutes since the trade), recorded.
A price stands until the next trade replaces it, so a book's premium is a step function of time: we build one row per \emph{price-update moment} (any timestamp at which some leg trades), carrying the new premium and its duration $w_{ir}$ until the next update, over the window from the first moment every leg has traded on both sides to the event's close ($3{,}343{,}329$ rows across the $9{,}988$ events; each event's durations sum exactly to its window length).
Event $i$'s premium is then the time-weighted average
$$
\bar y_i^{\,\text{tw}}=\frac{\sum_r y_{ir}\,w_{ir}}{\sum_r w_{ir}}
$$
---what a buyer arriving at a uniformly random moment of the book's life would face.
Because the two sides' last trades need not be simultaneous, any single row's markup term can reflect price movement between the two trades rather than a true bid--ask gap (about $8\%$ of rows even show ask below bid); this error is symmetric in direction and cancels both across a book's legs, whose prices must always sum to roughly one dollar, and in the time averages that all estimates use.

\paragraph{Weighting: the three estimators.}
Outcome count is fixed within an event, so whether the premium rises with it is a question about differences \emph{between} events, and pooling only rows might answers it at the wrong level.
Therefore estimation is done three ways.
Pooled, with rows weighted by duration and errors clustered by event,
$$
y_{ir}=\alpha+\beta\,\log n_i+\varepsilon_{ir};
$$
event-level---one event, one vote; the estimator used throughout unless stated---
$$
\bar y_i^{\,\text{tw}}=\alpha+\beta\,\log n_i+\varepsilon_i,
\qquad\text{HC3 robust SE};
$$
and outcome-level, collapsing to one row per outcome count, $\bar y_n$ the mean of the level-$n$ event averages with standard error $s_n$, by weighted least squares,
$$
\bar y_n=\alpha+\beta\,\log n+\varepsilon_n,
\qquad w_n=1/s_n^2,\qquad n\in\{2,\dots,10\}.
$$
As a dependence check, an event-day panel (each event's covered time split across calendar days $d$, with day-level time-weighted means) is estimated by weighted least squares with covered minutes as weights and two-way clustering,
$$
y_{id}=\alpha+\beta\,\log n_i+\varepsilon_{id},
\qquad
\widehat V=\widehat V_{\text{event}}+\widehat V_{\text{date}}-\widehat V_{\text{event}\times\text{date}},
$$
with Driscoll--Kraay errors (five-day bandwidth) as a cross-check; every reported significance survives the more conservative of the two.

\paragraph{Result 1: the decomposition of the premium slope.}
Table~\ref{tab:spread_decomposition} and Figure~\ref{fig:time_weighted_premium_slop} reports the event-level slope of each term.
The cost of the set rises in the outcome count, and $96\%$ of the rise sits in the committed-markup term; the book-center slope is statistically zero.
Attributing the markup term across legs by their price level shows the slope is carried by the cheap legs a large book must mechanically contain---leg prices average $1/n$, and cheap claims carry the widest markups relative to price---while legs above $\$0.80$ contribute \emph{negatively}, since large books hold few expensive YES legs.
A pure tick floor, obtained by substituting the minimum-increment bound $n_i\delta/2$ for the dependent variable in the event-level regression, accounts for between roughly $4\%$ ($\delta{=}\$0.001$) and $44\%$ ($\delta{=}\$0.01$) of the markup slope on the platform's mixed tick grid.

\begin{table}[H]
\caption{Decomposition of the premium--outcome slope. Coefficients on $\log n$, multi-outcome events, time-weighted over each book's life; event-level estimator (one event, one vote; HC3 errors), with the outcome-level estimate in brackets. The identity $\beta_{\text{ask}}=\beta_{\text{mid}}+\beta_{\text{HS}}$ holds by construction. $9{,}988$ events.}
\label{tab:spread_decomposition}
\centering
\begin{tabular}{lc}
\hline
& Time-weighted, event-level \\
\hline
$\beta_{\text{ask}}$ (set premium) & $+0.0595$ (se $0.0056$, $p{<}0.001$) $[+0.053]$ \\
$\beta_{\text{HS}}$ (committed markup) & $+0.0573$ (se $0.0029$, $p{<}0.001$) $[+0.052]$ \\
$\beta_{\text{mid}}$ (book center) & $+0.0022$ (se $0.0034$, $p{=}0.52$) $[-0.000]$ \\
\hline
\end{tabular}
\end{table}

\begin{figure}[H]
    \centering
    \includegraphics[width=0.8\linewidth]{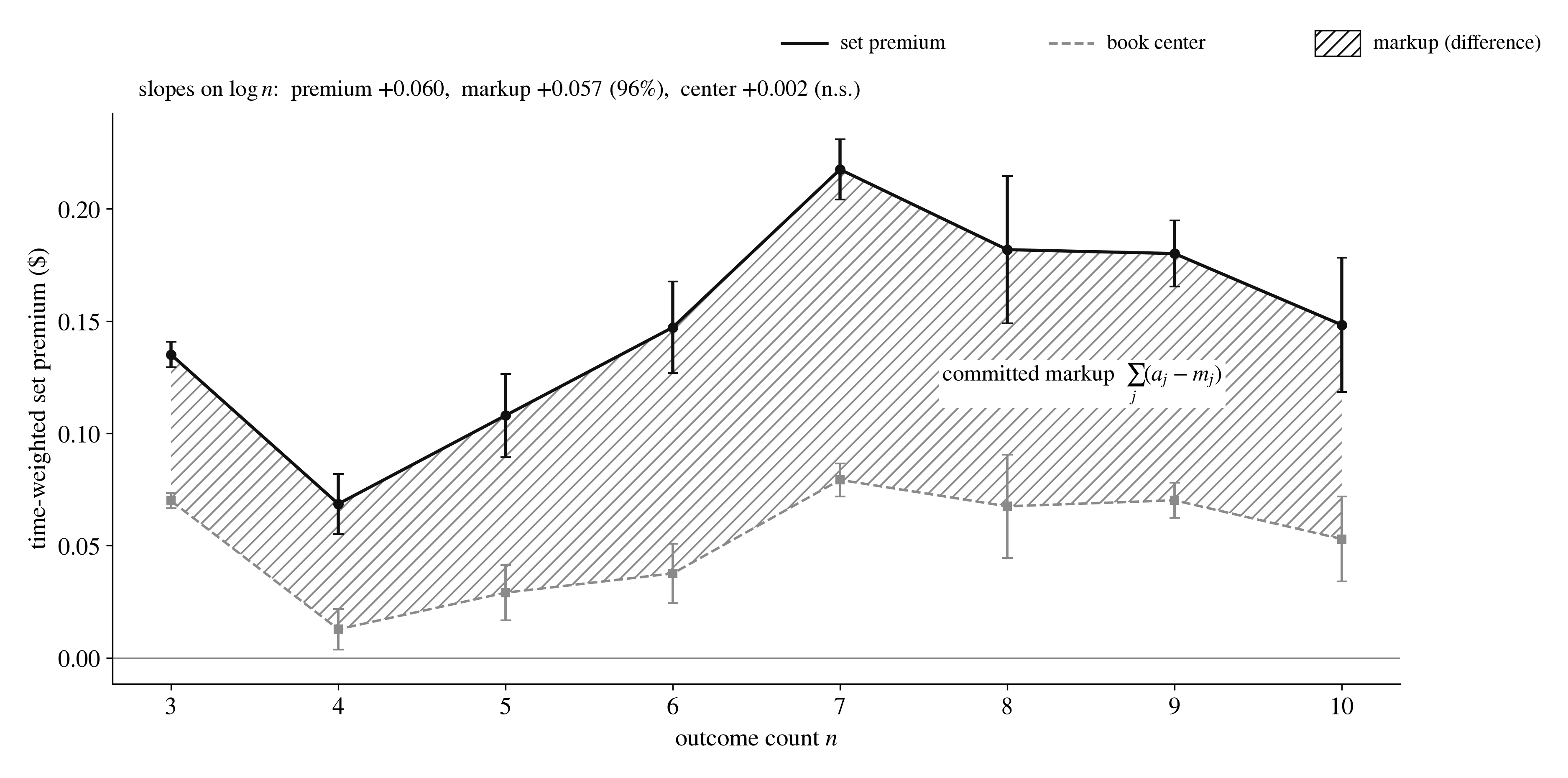}
    \caption{Time-weighted Average Premium v.s. Outcome Count}
    \label{fig:time_weighted_premium_slop}
\end{figure}

\paragraph{Result 2: price staleness is not generating the slope.}
Because prices are last-trade prices, a larger book mechanically carries older prices on average (the oldest of $n$ legs ages faster in $n$), so staleness is the first confound to rule out; three checks close it from three directions.

(i)~\emph{Equal-age comparison.} On the panel of moments at which every leg has just traded---where price age is directly measurable leg by leg and varies widely---we estimate, with event-clustered errors,
$$
y_{it}=\alpha+\beta\,\log n_i
+f_1\!\big(\log(1{+}\Delta^{\max}_{it})\big)
+f_2\!\big(\log(1{+}\bar\Delta_{it})\big)
+\gamma\,\mathrm{sd}(\Delta_{it})
+\delta\,\log n_i\cdot\big(\bar\Delta^{\text{hr}}_{it}-0.5\big)
+\varepsilon_{it},
$$
with $f_1,f_2$ natural cubic splines and the interaction centered at $30$ minutes, so that $\beta$ is the slope comparing books at \emph{equal} price age ($30$ minutes) and the slope at age $m$ hours is $\beta+\delta(m-0.5)$.
Comparing equally-aged books preserves $84$--$98\%$ of the slope (premium: $+0.0546\rightarrow+0.0461$, $p{=}10^{-14}$; center: $+0.0267\rightarrow+0.0263$), and $\delta\approx0$ ($-0.00003$, $p{=}0.83$): the effect is not ``bigger books, older prices.''

(ii)~\emph{Drift correction.} An old price is the price from $\Delta$ minutes ago, so its expected error is how much prices typically move in $\Delta$ minutes.
From $2.7$M (ask) and $5.6$M (bid) consecutive-trade pairs of multi-outcome legs we estimate the drift surface as cell means,
$$
\widehat d(p,\Delta)=\mathbb{E}\big[\,p_{t+\Delta}-p_t\;\big|\;p_t\in\text{price bucket},\ \Delta\in\text{age bucket}\,\big],
$$
excluding eventual winner status (future information; descriptively, at under one day to resolution winning legs drift $+\$0.024$ per gap and losing legs $-\$0.008$), and re-run the event-level regressions on the corrected premium
$$
y^{\text{corr}}_{it}=\textstyle\sum_j\big[a_{jt}+\widehat d(a_{jt},\Delta_{jt})\big]-1 .
$$
Every slope moves by at most $0.006$ (premium $+0.0595\rightarrow+0.0598$; center $+0.0022\rightarrow+0.0081$): the leg-level errors offset within a set, because one leg's rise is its siblings' fall under the adding-up constraint that defines the set.

(iii)~\emph{Placebo.} On $52{,}643$ binary snapshots whose prices are essentially current (age ${\le}5$ minutes), each leg is deliberately re-priced as of $t-\Delta$, with $\Delta$ drawn from the \emph{real} per-leg age distribution $\widehat D_k$ observed in $n{=}k$ books, and the resulting pseudo-premium regressed on $\log k$:
$$
\tilde y_{mtk}=p^{\text{asof}}_{m,\mathrm{YES}}(t-\Delta_1)+p^{\text{asof}}_{m,\mathrm{NO}}(t-\Delta_2)-1,
\qquad
\tilde y=\alpha+\beta_{\text{placebo}}\log k+\varepsilon,
$$
with errors clustered by market $m$.
The result is $\beta_{\text{placebo}}=-0.0031$ (se $0.0005$) pooled and $-0.0050$ (se $0.0009$) between: imposing large-book staleness on clean books produces $-5\%$ to $-8\%$ of the real slope, with the \emph{wrong} sign.
Therefore we confirm that whatever makes bigger books more expensive is in the prices and not in when we read them, see Figure~\ref{fig:Robusteness_check_same_staleness} for a graphical illustration on how fast-priced binary books's premium are evaluated with the same staleness as the multi-outcome books.

\begin{figure}[H]
    \centering
    \includegraphics[width=0.8\linewidth]{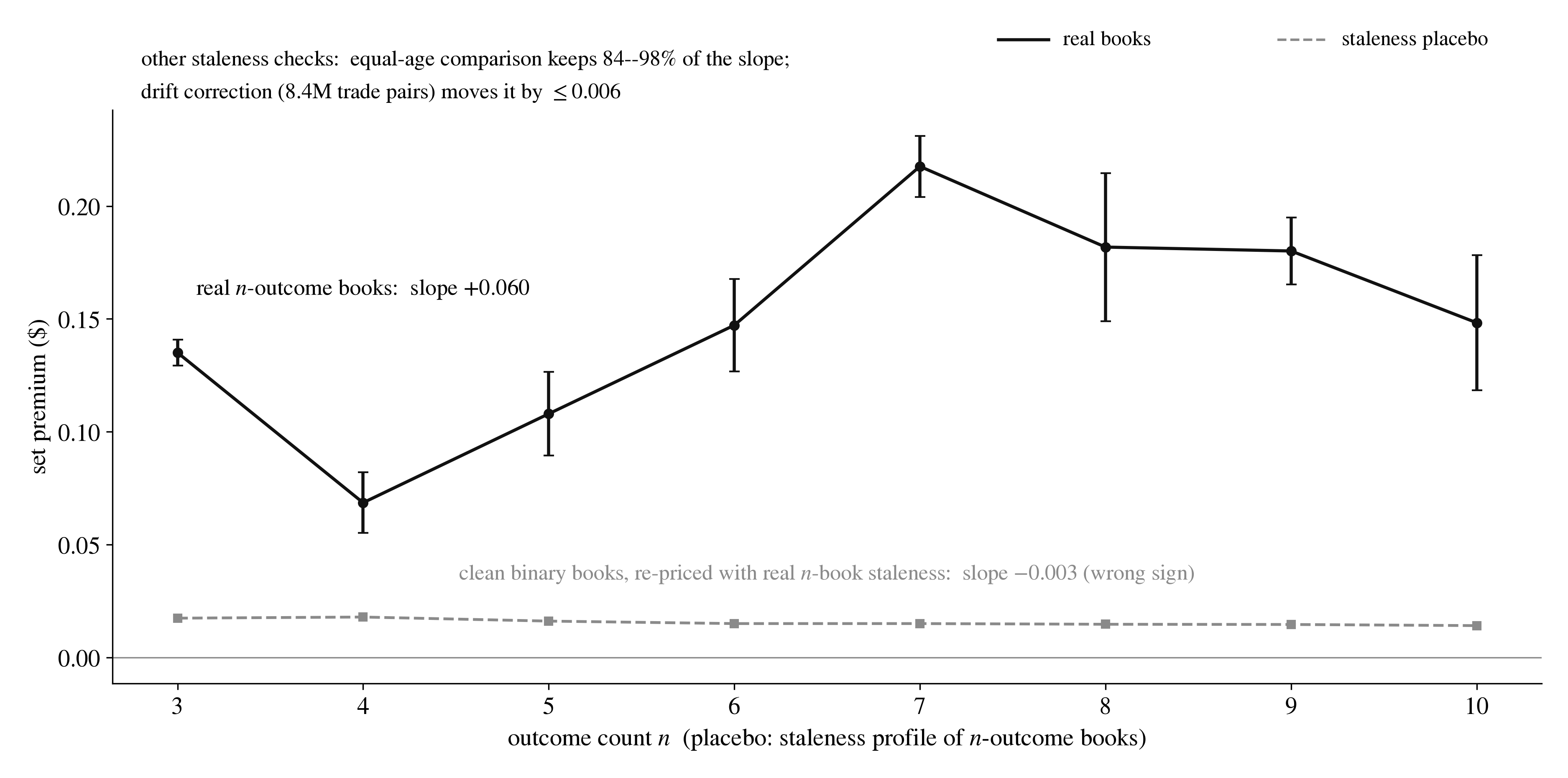}
    \caption{Real $n$-outcome Premium v.s. Binary Premium: Same Staleness}
    \label{fig:Robusteness_check_same_staleness}
\end{figure}

\paragraph{Result 3: the book-center drift appears once books are compared at the same age.}
The zero center slope in Table~\ref{tab:spread_decomposition} is a composition across life stages.
A book enters the panel once every leg has traded on both sides, and that happens at very different points in life: at a median $89\%$ of the way through life for $3$--$4$-outcome (sports-type) books versus $28$--$38\%$ for larger books.
Lifetime averages therefore compare small books in their tight final stretch against large books across their uncertain middle age; three specifications hold age fixed.
First, cumulative controls on the event-level regression (HC3),
$$
\bar y_i^{\,\text{tw}}=\alpha+\beta\,\log n_i
+\gamma\,\log(1{+}\mathrm{medTTR}_i)
+\text{category FE}
+\lambda_1\log(1{+}\mathrm{vol}_i)+\lambda_2\log(1{+}\mathrm{vol}^{\min}_i)
+\text{listing-year FE}
+\varepsilon_i,
$$
added block by block (Table~\ref{tab:spread_conditional}, left): the time-to-resolution control alone moves the center slope from $+0.002$ to $+0.055$ and the premium slope from $+0.060$ to $+0.138$ (all blocks: $+0.076$/$+0.143$).
The controls \emph{raise} the slope.
Second, the slope within each time-to-resolution bucket $b$, with time-weighted averages recomputed from the time each event spends in that bucket,
$$
\bar y_i^{\,\text{tw}(b)}=\alpha_b+\beta_b\,\log n_i+\varepsilon_i
\qquad(\text{HC3, one row per event per bucket}),
$$
reported in Table~\ref{tab:spread_conditional} (right): positive at every horizon.
Third, the same construction within each tenth of a book's life,
$$
\bar y_i^{\,\text{tw}(d)}=\alpha_d+\beta_d\,\log n_i+\varepsilon_i ,
\qquad d\in\{0,\dots,9\},
$$
gives $\beta_d^{\text{mid}}\in[+0.023,+0.063]$ in \emph{all ten} deciles, including $+0.029$ ($p{<}0.001$) in the final decile before resolution, where prices are most informative.
One complementary fact: during stretches in which every leg has traded within the past few hours, the center slope is mildly \emph{negative} (${\approx}-0.04$)---bursts of trading compress large books' centers---which is what intermittent enforcement of the set constraint looks like.
Inference is robust to two-way clustering (event $\times$ date), Driscoll--Kraay errors, clustering on resolution month.
See Figure~\ref{fig:book_center_slop} for a graphical illustration.

\begin{table}[H]
\caption{The conditional book-center slope. Left: event-level slope of the center term on $\log n$ with cumulative controls (HC3). Right: the same slope estimated within time-to-resolution buckets. The zero in Table~\ref{tab:spread_decomposition} is the across-age composition of uniformly positive same-age gradients.}
\label{tab:spread_conditional}
\centering
\begin{tabular}{lc@{\qquad}lc}
\hline
Specification & $\beta_{\text{mid}}$ & TTR bucket & $\beta_{\text{mid}}$ \\
\hline
baseline (no controls) & $+0.002$ (n.s.) & $<1$ day & $+0.009$ ($p{=}0.008$) \\
$+\ \log$ median TTR & $+0.055$ ($p{<}0.001$) & $1$--$7$ days & $+0.051$ ($p{<}0.001$) \\
$+$ category FE & $+0.081$ ($p{<}0.001$) & $7$--$30$ days & $\mathbf{+0.081}$ ($p{<}0.001$) \\
$+$ volumes, listing-year FE & $+0.076$ ($p{<}0.001$) & $>30$ days & $+0.060$ ($p{<}0.001$) \\
\hline
\end{tabular}
\end{table}

\begin{figure}[H]
    \centering
    \includegraphics[width=0.9\linewidth]{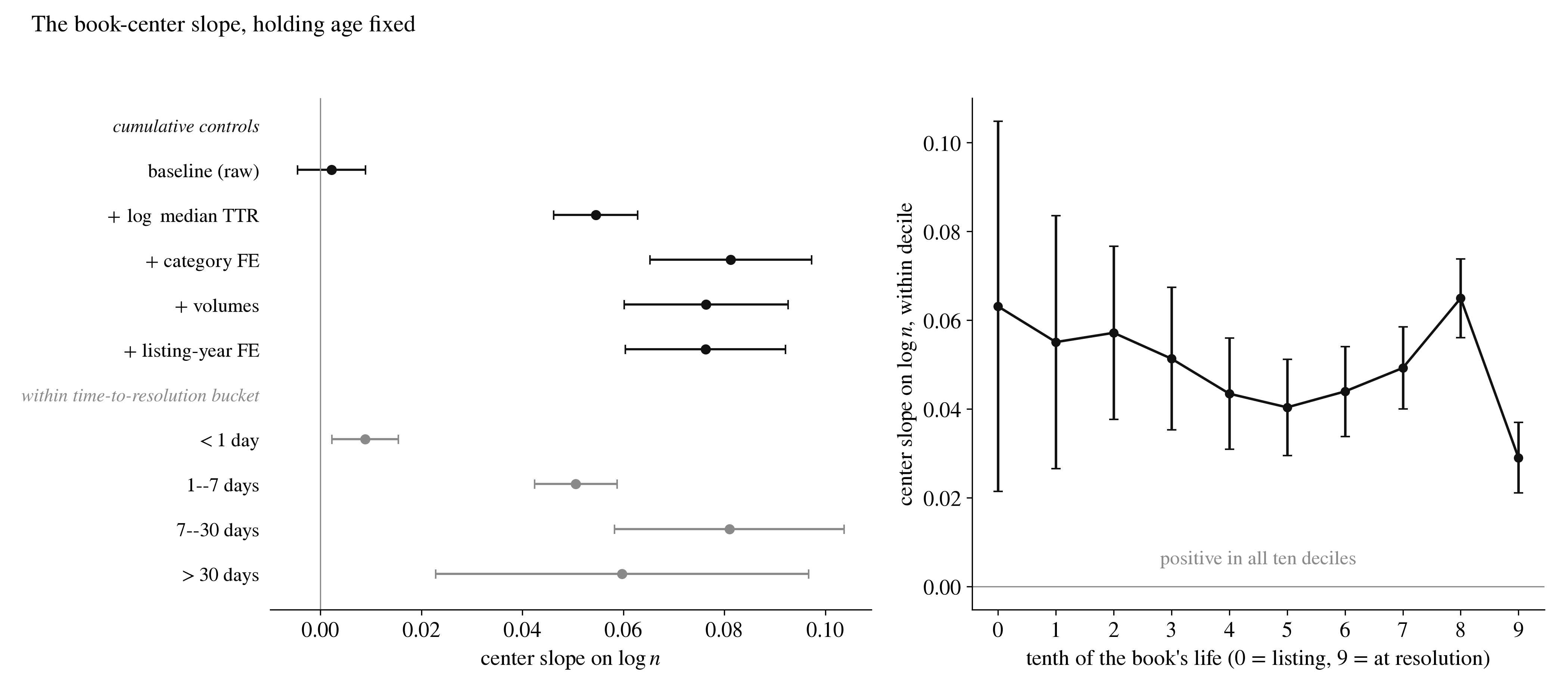}
    \caption{Book Center Slope Fixed Effect and Lifecycle Control}
    \label{fig:book_center_slop}
\end{figure}

\paragraph{Result 4: how often a book is measurable, and whether that matters.}
A book can only be priced at moments when its legs have traded recently, and a large book's longshot legs trade rarely---so the hours at which a large book is visible could be an unrepresentative sample of its life: visible only when busy, and perhaps only when cheap.
We handle this in three steps: measure the visibility problem, reweight for it, and then ask how bad the worst case could be.

\emph{Step 1: measure it.}
We chop every event's life into hours and record, for each hour, whether the book was measurable that hour.
A model then predicts visibility from what we can observe about the hour,
$$
\Pr(\text{measurable}_{ih}=1)=\Lambda\!\big(\theta_0+\theta_1\log n_i+\theta_2\log(1{+}\mathrm{vol}_i)
+g(\log(1{+}\mathrm{TTR}_{ih}))+\text{category FE}+\text{hour-of-day FE}\big),
$$
with $g$ a natural cubic spline, estimated by ridge-penalized logistic regression on a case-control subsample with the intercept corrected back to the population scale.
The problem is real and large: $\theta_1=-1.8$ at a one-hour threshold ($-0.75$ at one day), i.e.\ an hour in a large book's life is far less likely to be visible than an hour in a small book's.

\emph{Step 2: reweight.}
An hour that was visible \emph{despite} long odds---predicted probability $\widehat p_{ih}=0.05$, say---is our only witness for the roughly twenty similar hours we never saw, so we count it twenty times:
$$
\bar y_i^{\,\text{w}}
=\frac{\sum_h y_{ih}\,\mathrm{cov}_{ih}/\widehat p_{ih}}{\sum_h \mathrm{cov}_{ih}/\widehat p_{ih}},
\qquad 1/\widehat p \text{ truncated at its } 99\text{th percentile}.
$$
If the kinds of hours we tend to see carried different premia from the kinds we tend to miss, this reweighting would move the estimates.
It moves none by more than $0.009$ and flips no sign: along everything observable, therefore the hours we can see look like the hours we cannot.

\emph{Step 3: bound the worst case.}
Reweighting cannot rule out differences driven by things no model observes, so we brute-force it: every invisible hour is filled with a deliberately extreme guess---the event's own $10$th-percentile (pessimistic) or $90$th-percentile (optimistic) visible premium,
$$
\bar y_i(\text{fill})=\frac{\bar y_i^{\,\text{cov}}\cdot \mathrm{cov}_i+q_i\cdot(\mathrm{window}_i-\mathrm{cov}_i)}{\mathrm{window}_i},
\qquad q_i\in\{p10_i,\,p90_i\},
$$
including the adversarial combination that works hardest against us (optimistic fills for larger-than-average books and pessimistic for smaller, and the reverse).
At age thresholds of four hours or less, these fills can flip the sign of the regression, therefore we only claim robustness  for full-life and same-age estimates, but not the estimate for tight thresholds.
See~\ref{fig:measurable_plot} for a graphical illustration.

\begin{figure}[H]
    \centering
    \includegraphics[width=0.8\linewidth]{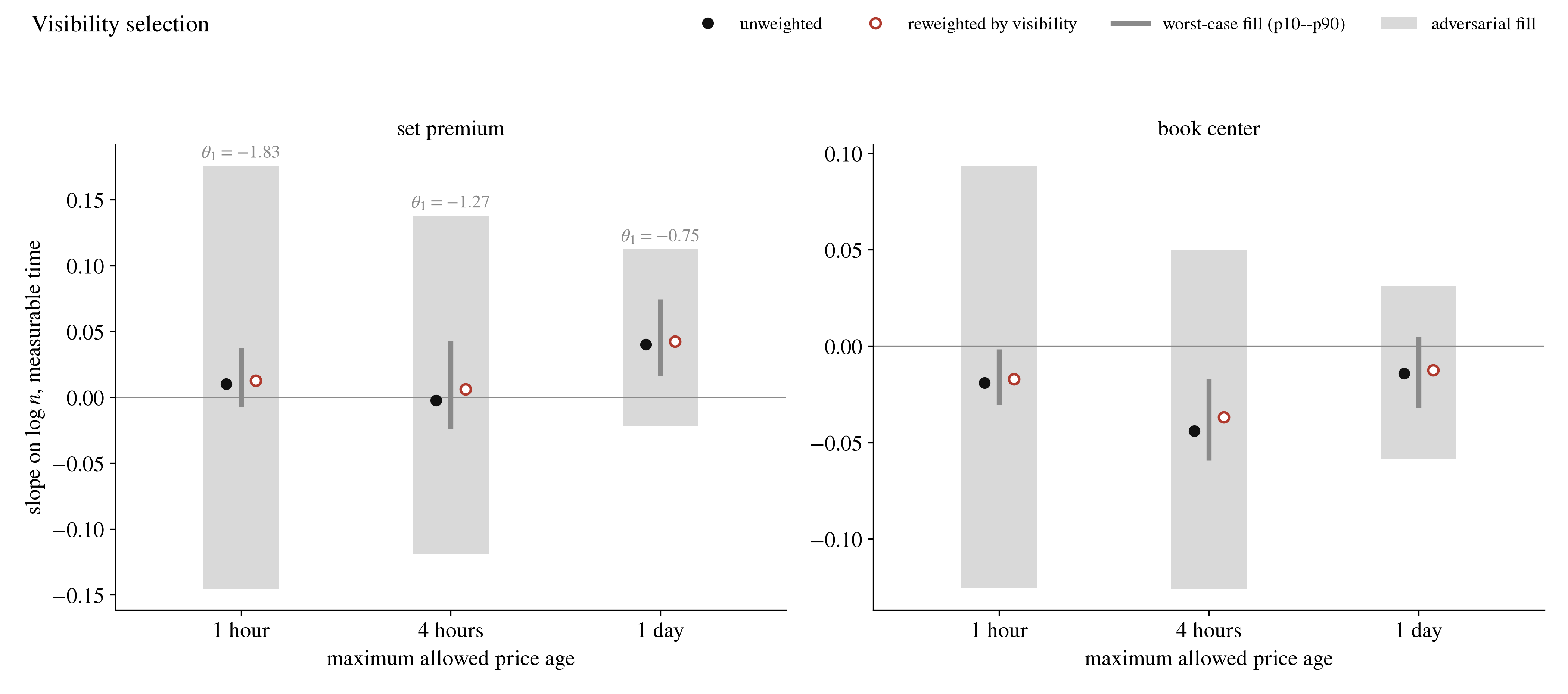}
    \caption{Refill Estimates}
    \label{fig:measurable_plot}
\end{figure}

\paragraph{Result 5: counterfactuals.}
(i)~\emph{Pseudo-events: stacking unrelated books.}
If the cost slope is nothing but per-book markups added leg by leg, then gluing together $k$ binary markets that have nothing to do with one another should reproduce it---no multi-outcome economics required.
We test exactly that: for each real $k$-outcome event we pick $k$ mutually unrelated binary markets of similar calendar week, horizon, and volume ($9{,}919$ bundles) and price the bundle as if it were one book,
$$
\tilde y_{bt}=\textstyle\sum_{m\in b}\big(\mathrm{YES}_{mt}+\mathrm{NO}_{mt}-1\big),
\qquad
\tilde y=\alpha+\beta_{\text{bundle}}\log k+\varepsilon
$$
(clustered by bundle; also on bundle means with HC3, and on $k$-level means by inverse-variance WLS).
Two features make the placebo sharp.
First, the bundle carries per-leg markups but cannot show a book-center effect even in principle: within any one binary market the YES and NO mids sum to a dollar, so whatever mispricing one side has, its own other side cancels.
Second, unrelated legs almost never trade simultaneously---for $k\ge7$, never within the same five-minute window anywhere in the record, while real siblings co-trade constantly (trading together is itself a mark of economically linked claims)---so bundles are priced on a six-hour grid.
The bundles deliver the slope: $\beta_{\text{bundle}}=+0.062$ ($p{<}0.001$); since a bundle stacks $k$ full binary spreads (two half-spreads each), the comparable per-half-spread figure is $\beta_{\text{bundle}}/2=+0.031$, roughly half the real markup slope.
The shortfall is itself informative: a real $n$-outcome book's legs are forced to be cheap---they must average $1/n$---and cheap legs carry proportionally fat markups, which matched stand-alone binaries do not replicate.
A significant placebo slope is therefore confirmation, not contradiction: the cost channel reproduces with no multi-outcome economics at all, while the center channel---which the bundle is structurally unable to fake---appears only in genuinely linked books.

(ii)~\emph{NO side.} Buying NO on every leg of the same events (cost ${\approx}\,\$(n{-}1)$ plus premium) through identical machinery,
$$
y^{\mathrm{NO}}_{it}=\textstyle\sum_j \mathrm{noask}_{jt}-(n_i-1),
\qquad
\bar y^{\mathrm{NO}}_i=\alpha+\beta\,\log n_i+\varepsilon_i ,
$$
gives $\beta=+0.048$ (se $0.008$), decomposing as ${\sim}105\%$ markup and a center slope of $-0.003$ ($p{=}0.66$); the attribution across leg prices mirrors the YES side exactly (the slope loads on legs above $\$0.80$, the complements of the YES side's cheap legs).
A cost paid on both sides, with a center that moves on neither, is markup, not mispricing.

(iii)~\emph{Execution size.} Re-estimating the event-level regression at trade moments, restricted to those where every leg's most recent trade carried at least $q$ shares,
$$
\bar y_i\,\big|\,\min_j \mathrm{size}_{jt}\ge q
=\alpha_q+\beta_q\log n_i+\varepsilon_i ,
$$
gives $\beta_q=+0.074$ (unrestricted at trade moments), $+0.043$ ($q{=}10$), and $+0.006$, n.s.\ ($q{=}100$); the move from a leg's priced trade to its next trade averages $\$0.01$--$0.07$ for mid-priced legs.
The premium lives in small executions at thin, far-from-center quotes---consistent with makers withdrawing depth away from the center---and is not an arbitrage executable at size.
See Figure~\ref{fig:result_5} counterfactual checks for (i)-(iii).

(iv)~\emph{Stability.} The slopes survive winsorization at $1/99\%$, exclusion of Politics, of top-decile-volume events, and of $n{=}4$; they hold within Politics, Sports, Crypto, Finance, and Culture separately (within-category slopes exceed the pooled ones, the same age-composition force as Result~3); Weather is the one discordant category (negative cost slope) and 2024 listings are null.
Influence is bounded exactly: with $X$ the design matrix, $e$ the residuals, and $h_i$ the leverages of the event-level regression, the leave-one-event-out shift $\beta-\beta_{(i)}=\big[(X'X)^{-1}x_i'e_i/(1-h_i)\big]_{\log n}$ is at most $0.0008$ in absolute value across all events on either headline specification.
The long-standing $n{=}4$ anomaly is composition: that bucket is dominated by a recurring ``Fed decision in \emph{\{month\}}?''\ four-outcome template---institutionally watched books with $1$--$3$ cent premia and centers at par, the empirical image of the high-attention limit.

(v)~\emph{Functional form.} Two variants on the event-level regression,
$$
\bar y_i=\alpha+\textstyle\sum_{k=3}^{10}\delta_k\,\mathbf 1\{n_i=k\}+\varepsilon_i
\qquad\text{and}\qquad
\bar y_i=\alpha+\beta\,n_i+\varepsilon_i ,
$$
show that outcome-level dummies fit better than $\log n$, which fits better than linear $n$ (by AIC), with the dummy profile rising to $n{=}7$--$8$ and then flattening: total cost is concave, so the implied cost of the marginal outcome declines---as markup accumulation over increasingly cheap legs implies.

\begin{figure}[H]
    \centering
    \includegraphics[width=0.8\linewidth]{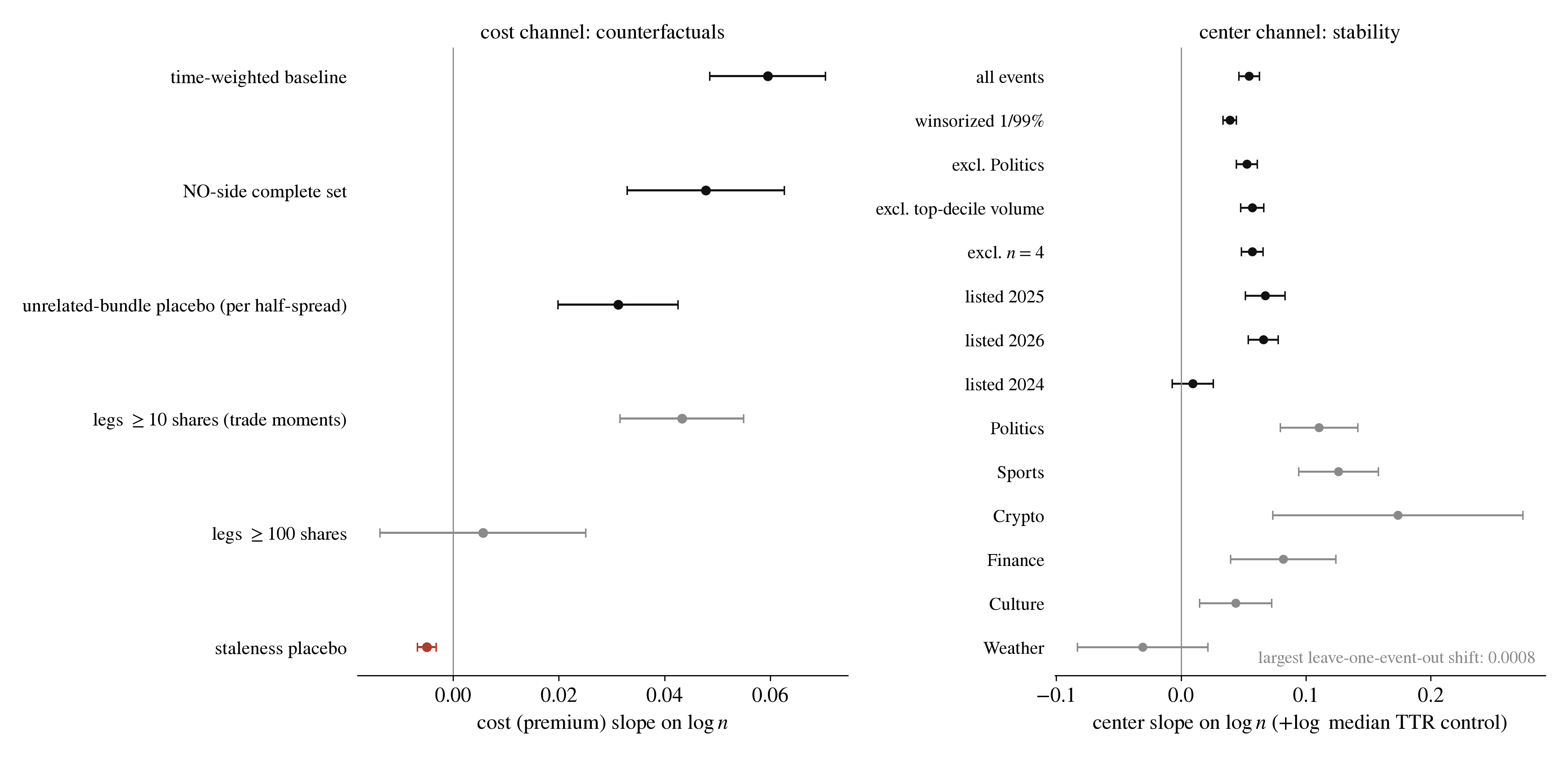}
    \caption{Counterfactual Checks (i)-(iii)}
    \label{fig:result_5}
\end{figure}

\paragraph{Reading through the model.}
The committed-markup term is the object~\eqref{eq:switch_prices_and_premium} prices: it accumulates leg by leg, appears on either side of the book, survives every weighting and every check above, is reproduced by stacking unrelated books, and vanishes only where quotes are competed to par by permanent attention (the Fed-decision template) or where size forces execution through the thin far-from-center quotes that screening implies.
The book-center term is the level of~\S\ref{subsec:3o_beliefs}: backed by no bet, it is not pinned to par moment by moment, and empirically it exceeds par precisely where the correcting trade---funding $\$(n{-}1)$ of collateral per unit payoff across $n$ thin legs and carrying it to resolution---is most expensive: many outcomes, long horizons, mid-life books.
A single cost-function potential prices the complete basket to par by construction (Proposition~\ref{prop:self_collateral}(i)); the measured premium is what its absence costs, and the measured center drift is what its absent normalization fails to pin.

\end{document}

%% file: packages.tex
\usepackage{xspace}             
\usepackage{multirow}            
\usepackage{rotating}            
\usepackage{tikz}                
\usepackage{float}               
\usepackage{algorithm}           
\usepackage{algorithmic}         
\usepackage{soul}                

%% file: macros.tex
\usepackage[colorinlistoftodos,textsize=tiny,textwidth=35pt]{todonotes}

\newcommand{\Comments}{1}

\newcommand{\mynote}[2]{\ifnum\Comments=1\textcolor{#1}{#2}\fi}
\newcommand{\mytodo}[2]{\ifnum\Comments=1%
    \todo[linecolor=#1!80!black,backgroundcolor=#1,bordercolor=#1!80!black]{#2}%
\fi}

\newcommand{\newcommenter}[3]{%
    \expandafter\newcommand\csname #1\endcsname[1]{%
        \mynote{#3!50!black}{[#2: ##1]}%
    }%
    \expandafter\newcommand\csname #1todo\endcsname[1]{%
        \mytodo{#3!20!white}{#2: ##1}%
    }%
}

\ifnum\Comments=1
    \paperwidth=\dimexpr\paperwidth+50pt\relax
    \oddsidemargin=\dimexpr\oddsidemargin+25pt\relax
    \evensidemargin=\dimexpr\evensidemargin+25pt\relax
    \marginparwidth=\dimexpr\marginparwidth+25pt\relax
\fi

\usepackage{array}
\newcolumntype{P}[1]{>{\raggedright\arraybackslash}p{#1}}
\newcolumntype{M}{>{\centering\arraybackslash\footnotesize}m{.78cm}}
\newcolumntype{S}{>{\centering\arraybackslash\tiny}m{2cm}}

\newtcbtheorem[auto counter,number within=section]{obs}%
    {Observation}{fonttitle=\bfseries\upshape, fontupper=\slshape,
        arc=0mm, colback=gensynlighterpink, colframe=gensynburgundy!60!white}%
    {theorem}

\newtcbtheorem[auto counter,number within=section]{ins}%
    {Insight}{fonttitle=\bfseries\upshape, fontupper=\slshape,
        arc=0mm, colback=gensynpink!40!white, colframe=gensynburgundy}%
    {theorem}

\usepackage{bm}

